\documentclass[11pt]{article}

\usepackage{fullpage}
\usepackage{url}
\usepackage{xspace}
\usepackage{color}
\usepackage{xcolor}
\usepackage{graphics}
\usepackage[dvips]{epsfig}
\usepackage{comment}
\usepackage{amsmath}
\usepackage{amssymb}
\usepackage{amsfonts}
\usepackage{graphicx}
\usepackage{algorithm}
\usepackage{algpseudocode}
\usepackage[small,compact]{titlesec}
\usepackage{times,hyphenat}

\newtheorem{theorem}{Theorem}[section]
\newtheorem{lemma}{Lemma}[section]
\newtheorem{corollary}{Corollary}[section]
\newtheorem{claim}{Claim}[section]

\newtheorem{definition}{Definition}[section]

\newcommand{\qed}{\hfill $\Box$ \bigbreak}
\newenvironment{proof}[1][]{%
\ifthenelse{\equal{#1}{}}{\noindent {\bf Proof. }}
{\noindent {\bf Proof of #1}\\}%
}{\hfill\qed}
\newenvironment{proofclaim}[1][]{%
\ifthenelse{\equal{#1}{}}{\noindent {\bf Proof of the claim: }}
{\noindent {\bf Proof of #1. }}%
}{\hfill$\star$}
\algblockdefx{Rep}{EndRep}{\algorithmicrepeat{}}{}
\algblockdefx{Rept}{EndRep}[1]{\algorithmicrepeat{} #1 times}{}
\algrenewtext{EndRep}{\algorithmicend{} \algorithmicrepeat{}}

\makeatletter
\newcommand*{\bdiv}{%
  \nonscript\mskip-\medmuskip\mkern5mu%
  \mathbin{\operator@font div}\penalty900\mkern5mu%
  \nonscript\mskip-\medmuskip
}
\makeatother

\newcommand{\ie}{{\em i.e.,}\xspace}
\newcommand{\eg}{{\em e.g.,}\xspace}


\newcommand{\instance}{(r,x,y,\phi,\tau,v,t,\chi)}
\newcommand{\dist}[2]{dist({#1},{#2})}

\begin{document}

	\baselineskip  0.20in 
	\parskip     0.1in 
	\parindent   0.0in 

	\title{{\bf Almost Universal Anonymous Rendezvous in the Plane\footnote{This work was performed within Project ESTATE (Ref.
	ANR-16-CE25-0009-03), supported by French state funds managed by the ANR (Agence Nationale de la
	Recherche). Andrzej Pelc was partially supported by NSERC discovery grant 2018-03899
and by the Research Chair in Distributed Computing at the
Universit\'e du Qu\'ebec en Outaouais.}}}

	\date{}
	\newcommand{\inst}[1]{$^{#1}$}

	\author{
		S\'{e}bastien Bouchard\inst{1},
		Yoann Dieudonn\'e\inst{2},
		Andrzej Pelc\inst{3}\\
		Franck Petit\inst{1}\\
		\inst{1} Sorbonne Universit\'e, CNRS LIP6, INRIA, Paris, France\\
		E-mails: \url{{sebastien.bouchard, franck.petit}@lip6.fr}\\
		\inst{2} Laboratoire MIS \& Universit\'{e} de Picardie Jules Verne, Amiens, France.\\
		E-mail: \url{yoann.dieudonne@u-picardie.fr}\\
		\inst{3} Universit\'e du Qu\'ebec en Outaouais, Canada.\\
		E-mail: \url{Andrzej.Pelc@uqo.ca}\\
	}

	\maketitle

	\begin{abstract}
	Two mobile agents represented by points freely moving in the plane and starting at two different positions, have to meet. The meeting, called {\em rendezvous}, occurs when agents are at distance at most $r$ of each other and never move after this time, where $r$ is a positive real unknown to them, called the {\em visibility radius}. Agents are anonymous and execute the same deterministic algorithm. Each agent has a set of private {\em attributes}, some or all of which can differ between agents. These attributes are: the initial position of the agent, its system of coordinates (orientation and chirality), the rate of its clock, its speed when it moves, and the time of its wake-up. If all attributes (except the initial positions) are identical and agents start at distance larger than $r$ then they can never meet, as the distance between them can never change. However, differences between attributes make it sometimes possible to break the symmetry and accomplish rendezvous. Such instances of the rendezvous problem (formalized as lists of attributes), are called {\em feasible}.

	Our contribution is three-fold. We first give an exact characterization of feasible instances. Thus it is natural to ask whether there exists a single  algorithm that guarantees rendezvous for all these instances. We give a strong negative answer to this question: we show two sets $S_1$ and $S_2$ of feasible instances such that none of them admits a single rendezvous algorithm valid for all instances of the set. On the other hand, we construct a single algorithm that guarantees rendezvous for all feasible instances outside of sets $S_1$ and $S_2$. We observe that these exception sets $S_1$ and $S_2$ are geometrically very small, compared to the set of all feasible instances: they are included in low-dimension subspaces of the latter. Thus, our rendezvous algorithm handling all feasible instances other than these small sets of exceptions can be justly called {\em almost universal}.

	\vspace*{0.5cm}

\noindent
{\bf keywords:}  anonymous agent, rendezvous, symmetry breaking, plane

\vspace*{2cm}

	\end{abstract}

\section{Introduction} \label{sec:intro}

\subsection{The background and the problem}

Two mobile entities starting at different locations of some environment, have to meet. This task, extensively researched in the literature, is known as {\em rendezvous}. The rendezvous problem was considered both in the network environment, when entities model software agents navigating in a communication network, and in the geometric context, when entities model mobile robots circulating in some terrain. The aim of rendezvous may be exchange of information gathered about the environment or joint planning of some future task, such as network maintenance or decontamination of a terrain where human presence would be hazardous.

We consider a geometric version of rendezvous: mobile entities, called {\em agents}, are represented by points freely moving in the plane and starting at two different positions.
Rendezvous occurs when agents are at distance at most $r$ of each other (according to some absolute measure of length) and never move after this time, where $r$ is a positive real unknown to them, called the {\em visibility radius}.

Agents are anonymous and execute the same deterministic algorithm. Each agent has a set of private {\em attributes}, some or all of which can differ between agents. These attributes are: the position of the agent, its system of coordinates (orientation and chirality), the rate of its clock, its speed when it moves, and the time of its wake-up. If all attributes (except the initial positions) are identical and agents start at distance larger than $r$ then they can never meet, as the distance between them can never change. However, differences between attributes make it sometimes possible to break the symmetry and accomplish rendezvous. Such instances of the rendezvous problem (formalized as lists of attributes), are called {\em feasible}. Note that an instance is feasible, if there exists an algorithm, even  specifically designed for this instance given as input, that guarantees rendezvous for it. (However, agents executing this algorithm do not know which agent is which in the instance).

The central question considered in this paper is:

\begin{quotation}
Which instances of the rendezvous problem are feasible and how to guarantee rendezvous for as many of them as possible by a single algorithm?
\end{quotation}

\subsection{The model}

In order to formally define our model, we consider some absolute system $\Gamma$ of Cartesian coordinates, some absolute length unit normalized to 1, some absolute time unit normalized to 1, and some point in time called 0. The rest of the description is with respect to these absolute notions, unknown to the agents. The visibility radius of the agents is expressed in absolute length units.

Each agent has a private system of Cartesian coordinates with the origin at the starting point of the agent. The $x$-axis of this system is rotated with respect to the $x$-axis of the absolute system by an angle $0 \leq \phi<2\pi$, and the chirality of the private system with respect to $\Gamma$ is either +1, if after rotating the absolute system by angle $\phi$ both systems are the same up to a shift, or -1,  if after this rotation the $y$ axis of the absolute system has the opposite direction than that of the private system.

Each agent has a private clock, such that the lapse of time between consecutive tics of this clock, called the time unit of the agent, lasts $\tau$ units of the absolute time.
The clock of the agent starts at its wake-up which occurs at some absolute time $t \geq 0$. Each agent has some constant speed $v$ defined as the absolute distance it travels in an absolute time unit.
Whenever the agent moves, it does so with this constant speed $v$. Each agent defines its private unit of length as the distance it travels during its time unit. Thus, in absolute terms,
the length of a private unit of length of an agent is $\tau v$.

Agents execute the same deterministic algorithm.
There are two types of move instructions in this algorithm. The first type is go $(dir, d)$ which is executed by an agent as going $d$ units of length of the agent in direction $dir$ in its private system of coordinates.
In our algorithms we use directions $N,S,E,W$ as a shorthand. Directions $N$ and $S$ are, respectively, the positive and negative direction along the $y$-axis and directions $E$ and $W$ are, respectively, the positive and negative direction along the $x$-axis.
An example of such an instruction is go $(S,2)$ executed by going 2 units of length of the agent in the negative direction parallel to its $y$-axis.
The second type of instructions is wait $(z)$ which is executed by an agent as waiting idle for $z$ time units of the agent.
We assume that agents can measure angles and distances precisely, although in our algorithms these quantities are pretty simple: angles are rational multiples of $\pi$ and distances are rational numbers.

We call $A$ the agent woken up first and $B$ the other agent. In the case of simultaneous wakeup these names are given arbitrarily. For the sake of simplicity and without loss of generality, we consider all the attributes of agent $A$ to be the absolute ones: its system of coordinates is $\Gamma$, its unit of time is the absolute unit and its wake up time is 0. Its speed is normalized to 1. (Of course, agents do not know which of them is $A$ and which is $B$, this convention is only for description ease). Using this convention, an instance of the rendezvous problem can be given as a list of attributes of the agent $B$, together with the value of the visibility radius. More precisely, an instance is a tuple $\instance$, where $r>0$ is the visibility radius
(in the measure of length of agent $A$), $(x,y)$ are coordinates of the initial position of $B$ in the system of coordinates of $A$, $0\leq \phi <2\pi$ is the angle {such that the directions of $x$-axes of $A$ and $B$ are the same after rotating the system of $A$ by angle $\phi$ (in the counterclockwise direction of $A$)}, $\tau>0$ is the number of time units of $A$ between consecutive tics of the clock of $B$, $v>0$ is the speed of $B$ in time and length units of $A$, $t \geq 0$ is the time difference between the wakeup time of $B$ and wakeup time of $A$ in time units of $A$, and $\chi$ is +1 or -1 depending on whether the directions of $y$-axes of $A$ and $B$ agree or not after rotating the system of $A$ by angle $\phi$ {(in the counterclockwise direction of $A$)}.

Using the above formalization of an instance of the rendezvous problem we can reformulate our central question as:

\begin{quotation}
Which instances $\instance$ are feasible and how to guarantee rendezvous for as many of them as possible by a single algorithm?
\end{quotation}


	\subsection{Our results}

	Our contribution is three-fold. We first give an exact characterization of feasible instances. Thus it is natural to ask whether there exists a single  algorithm that guarantees rendezvous for all these instances. We give a strong negative answer to this question: we show two sets $S_1$ and $S_2$ of feasible instances such that none of them admits a single rendezvous algorithm valid for all instances of the set. On the other hand, we construct a single algorithm that guarantees rendezvous for all feasible instances outside of sets $S_1$ and $S_2$. We observe that these exception sets $S_1$ and $S_2$ are geometrically very small, compared to the set of all feasible instances: they are included in low-dimension subspaces of the latter. More precisely, while the set of all feasible instances contains
	a ball of positive radius in the space $\mathbb{R}^7$, the exception set $S_1$ is contained in a copy of the subspace $\mathbb{R}^3$ and the
	exception set $S_2$ is contained in a copy of the subspace $\mathbb{R}^4$.  Thus, our rendezvous algorithm handling all feasible instances other than these small sets of exceptions can be justly called {\em almost universal}.

	Our Algorithm {\tt AlmostUniversalRV} generalizes results both from \cite{CGKK} and from \cite{PY2} (in the latter case for two agents). Indeed, our algorithm guarantees rendezvous for all instances $\instance$ when wakeup of the agents is simultaneous (i.e., $t=0$), and
	either (1) $\tau=v=1$ is not satisfied or (2) orientations are different (i.e., $0<\phi <2\pi$) and the chirality is the same (i.e., $\chi=1$).
		This is exactly the set of instances for which the rendezvous algorithm from \cite{CGKK} works. On the other hand, our algorithm guarantees
		rendezvous for all instances $\instance$ for which $\tau=v=1$, $\phi=0$, $\chi=1$, and which satisfy the assumption $t>\dist{(0,0)}{(x,y)}-r$, where $dist$ is the Euclidean distance. For two agents, this is exactly the set of instances for which the gathering algorithm from \cite{PY2} works (this set of instances was called the set of good configurations in \cite{PY2}). Apart from these sets of instances, already handled by the algorithms from \cite{CGKK} and \cite{PY2}, respectively, our Algorithm {\tt AlmostUniversalRV} guarantees rendezvous for many more instances: all those outside the small exception sets. This was neither the case in \cite{CGKK} nor in \cite{PY2}.

There are two major differences between \cite{CGKK} and our study. The first is that we consider arbitrary time delays between wake-ups of agents, while the authors of \cite{CGKK} restricted attention to simultaneous wake-ups. The second difference is in the notion of rendezvous feasibility. The authors of \cite{CGKK} were interested in conditions under which rendezvous is feasible for \emph{all} possible initial positions of agents, formulated such conditions, and designed a single algorithm guaranteeing rendezvous for all initial positions if the conditions are satisfied. By contrast, we first look at each instance separately, and characterize those for which rendezvous is possible even using an algorithm dedicated to this instance. We call such instances feasible. Then we design an algorithm guaranteeing rendezvous for ``almost all'' feasible instances with a precise meaning of ``almost all''. It is interesting to notice that large classes of instances rejected by \cite{CGKK} are not only possible to solve by a dedicated algorithm but can in fact be handled all together by our single algorithm. Such are \eg many instances for which chiralities of the agents are different. As for \cite{PY2}, we differ from it  by
		considering a much larger context: possibly different coordinate systems, clock rates and speeds of the agents.

		\subsection{Related work}

The rendezvous problem for mobile agents, and more generally its version for many agents, called {\em gathering}, was extensively studied in the literature, in various environments.
Models under which it was investigated can be classified along two main dichotomies. The first of them concerns the way in which agents move: it can be either deterministic or randomized. The second dichotomy is according to the type of environment in which agents navigate: it can be a network modeled as a graph or a terrain modeled as the plane, possibly with obstacles. The present paper considers deterministic rendezvous in the plane.

An excellent survey of randomized rendezvous in various models is
\cite{alpern02b}, cf. also  \cite{alpern95a,alpern02a,anderson90}.
Deterministic gathering in networks was surveyed in \cite{Pe}.
Gathering many labeled agents in the presence of Byzantine agents was studied in \cite{BDD,DPP}.
The gathering problem was also studied in the context of oblivious robot systems in the plane, cf.
\cite{CP05,FPSW}, and fault-tolerant gathering of robots in the plane was studied, e.g., in \cite{AP06,CP08}.

Deterministic gathering in graphs of agents equipped with tokens used to mark nodes was considered, e.g., in~\cite{KKSS}. Deterministic rendezvous of two agents with unique labels was studied in \cite{DFKP,KM,TSZ}.
These papers considered the time of gathering in arbitrary graphs.
In \cite{DFKP} the authors showed a rendezvous algorithm polynomial in the size of the graph, in the length of the shorter
label and in the delay between the starting times of the agents. In \cite{KM,TSZ} rendezvous time was polynomial in the first two of these parameters and independent of the delay.
 In \cite{CKP,FP} the optimization criterion for rendezvous was the memory size of the agents:
it was considered in \cite{FP} for trees and in  \cite{CKP} for general graphs.
Memory needed for randomized gathering in the ring was discussed, e.g., in~\cite{KKPM08}.

Apart from the synchronous model when clocks of the agents tick at the same rate, measuring rounds, several authors considered asynchronous gathering in the plane \cite{BBDDP,CFPS,FPSW} and in networks
\cite{BCGIL,CLP,DGKKP,DPV,GP}. In \cite{CFPS,FPSW} agents were anonymous, but were assumed to have total or restricted capability of seeing other agents.  However,  in the latter scenario only connected initial configurations were discussed. In \cite{BBDDP}, a related task of approach of two agents at distance 1 was investigated and agents were assumed to have distinct integer labels, which were used to break symmetry.

Computational tasks in anonymous networks were studied in the literature, starting with the seminal paper \cite{A},
followed, e.g., by \cite{ASW, BV,KKV}. While the considered tasks, such as leader election in message passing networks or computing Boolean functions, differ
from rendezvous studied in the present paper, the main concern is usually symmetry breaking, similarly as in our case.

Deterministic rendezvous of anonymous agents in arbitrary anonymous graphs was previously studied in \cite{CKP,DP1,GP,PY}. Papers \cite{CKP,DP1} were concerned with the synchronous scenario.
The authors of \cite{GP} characterized initial positions that allow asynchronous rendezvous.
In \cite{PY}, the authors considered the problem  of synchronous rendezvous of two anonymous agents in arbitrary graphs. They used wake-up time to break symmetry between the agents, but the situation was very different from our present setting. While
in \cite{PY} the authors designed a universal algorithm that guarantees rendezvous for all instances for which a dedicated rendezvous algorithm exists, in our setting such a single rendezvous algorithm in the plane cannot exist.

The two papers closest to our present study are \cite{CGKK} and \cite{PY2}. Each of them considered a different particular scenario: in \cite{CGKK} it was assumed that wake-up times of agents were simultaneous, while in \cite{PY2} the authors considered gathering of many agents but only with the same coordinate systems, clock rates and speeds. Our present paper generalizes both of them (the latter for the case of two agents).
The relation between our results and  those from  \cite{CGKK} and \cite{PY2} was described above in the subsection ``Our results''.

\section{Terminology and preliminaries} \label{sec:prelim}
	While all our algorithms are executed by interpreting distances, directions and time segments according to the local attributes of each agent, in our analysis we will understand all these values in the absolute system of coordinates, absolute time and length units, i.e., according to our convention, those of agent $A$ (unless explicitly specified otherwise).
	In particular $\dist{(x,y)}{(x',y')}$ denotes the Euclidean distance between points $(x,y)$ and $(x',y')$ in the system of coordinates of agent $A$, measured in its unit of length.

Without loss of generality, we assume that $r<\dist{(0,0)}{(x,y)}$. Otherwise rendezvous is accomplished immediately, and thus all instances with
$r\geq \dist{(0,0)}{(x,y)}$ are trivially feasible.

	 Instances $\instance$ for which $\tau=v=1$, i.e., those in which clock rates of both agents are the same and speeds are the same, are called {\em synchronous}.

	 We will use two procedures from the literature.
In \cite{CGKK}, the authors described a procedure guaranteeing rendezvous for all instances $\instance$ for which wakeup of the agents is simultaneous (i.e., $t=0$), and
either (1) the instance is not synchronous, or (2) orientations are different (i.e., $0<\phi <2\pi$) and the chirality is the same (i.e., $\chi=1$).
	They did not define precisely what are the allowed moves of the agents, but an inspection of their algorithms shows that they use two types of moves: straight segments and circles. Our model does not allow circles but it is easy to see that the procedure from \cite{CGKK}  remains valid (i.e., guarantees rendezvous for the same instances) if each circle is replaced by the square inscribed in it with sides parallel to the axes of the system of coordinates of  the agent. We call {\tt CGKK}  the procedure from \cite{CGKK} modified in this way.
	\footnote{
	Note that, as opposed to \cite{CGKK}, we needed to say precisely which moves are allowed because
some of our results are negative and thus we need to prove that some actions are impossible. In \cite{CGKK} this could be left implicit, as they only had positive results.
We decided for the easiest option with segment moves but is is easy to see that all our results (also the negative ones) remain valid if circles
are allowed as they were in \cite{CGKK}.}

       {\tt Latecomers} is the Algorithm GATHER(2) from \cite{PY2}. It guarantees rendezvous for all synchronous instances $\instance$ for which the systems of coordinates of the agents are a shift of each other, and satisfying the assumption $t>\dist{(0,0)}{(x,y)}-r$, i.e., for instances where
       $\tau=v=1$, $\phi=0$, $\chi=1$ and $t>\dist{(0,0)}{(x,y)}-r$. In \cite{PY2}, the authors considered the problem of gathering $n\geq 2$ agents with the same clock rates and speeds, and with systems of coordinates being shifts of one another. They designed an algorithm GATHER ($n$) which guarantees gathering under some condition on instances which is equivalent to $t>\dist{(0,0)}{(x,y)}-r$, for $n=2$.

We will need the following notion of {\em canonical line}.

	\begin{definition}[Canonical Line]\label{def:cano}
		We define the {\em canonical line} of an instance $\cal{I}=$ $\instance$ as follows:
		\begin{enumerate}
			\item if $\phi = 0$ this is the line parallel to the $x$-axes of both agents and equidistant from the origins of their respective coordinate systems;
			\item otherwise ($\phi \neq 0$), this is the line parallel to the bisectrix of the angle between the $x$-axes of the agents and equidistant from the origins of their respective coordinate systems;
		\end{enumerate}
	\end{definition}

An example of an instance and of its canonical line is depicted in Figure~\ref{fig:f1}.

\begin{figure}[httb!]
	\begin{center}
	\includegraphics[width=0.6\textwidth]{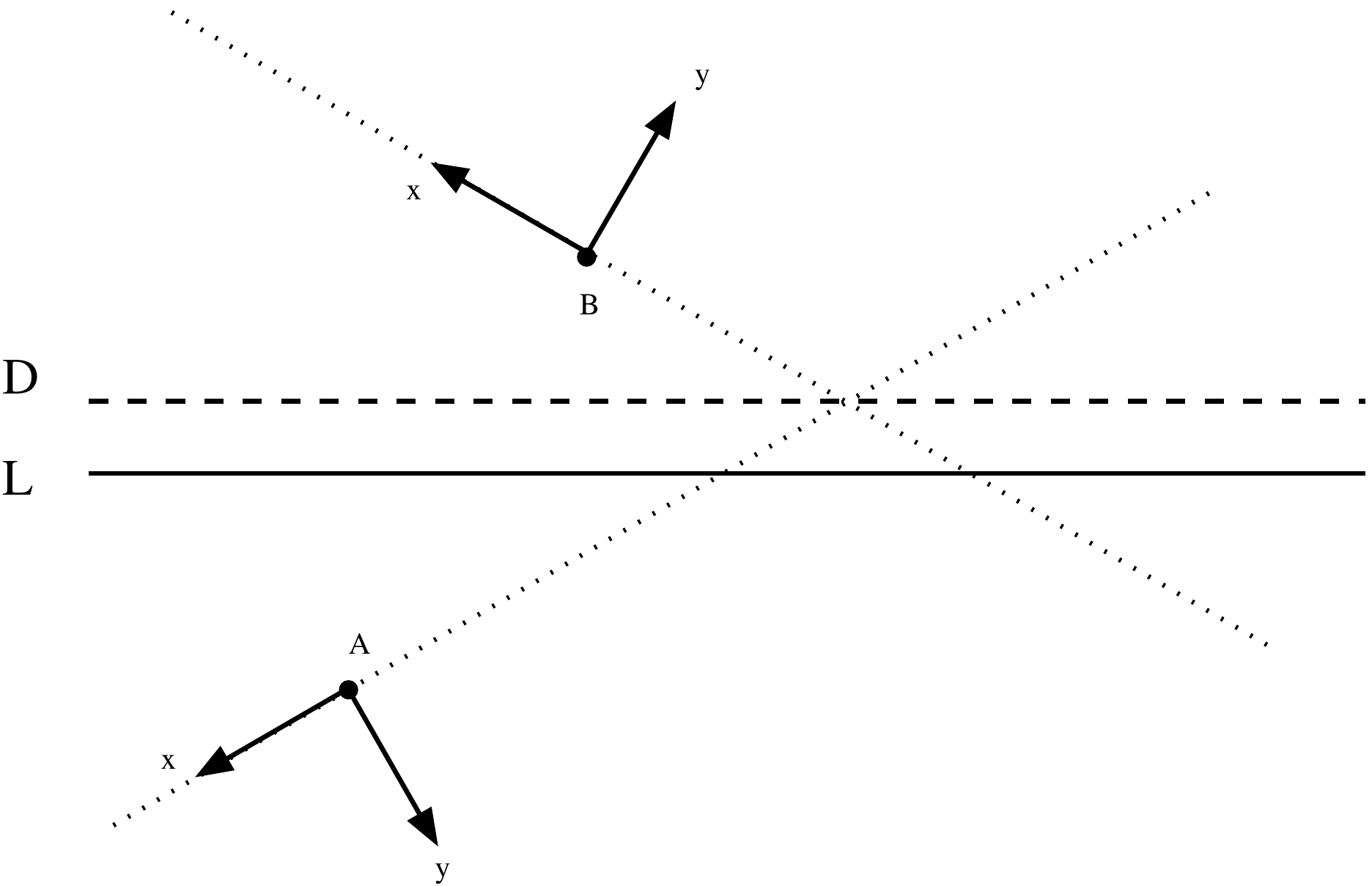}
	\caption{Illustration of the geometric setting of an instance with its canonical line in a case where the chiralities of the agents $A$ and $B$ are different. The dashed line $D$ represents the bisectrix of the angle between the $x$-axes of the agents, while the solid line $L$ represents the canonical line of the instance.}
	\label{fig:f1}
	\end{center}
\end{figure}

	We denote by $proj_A (s)$ (respectively, by $proj_B (s)$) the orthogonal projection of the position of agent $A$ (resp. agent $B$) at time $s$ on the canonical line of instance ${\cal I}$. To facilitate reading and when it is clear from the context, $proj_A$ (resp. $proj_B$) refers to $proj_A(0)$ (resp. $proj_B(0)$).

Sometimes, an agent will consider a local system {\tt Rot}$(\alpha)$: this system is the coordinate system resulting from rotating the system of the agent by the angle $\alpha$, counterclockwise w.r.t. this system.

We end this section by giving a simple lemma and its corollary that will be used recurrently throughout the paper.

	\begin{lemma}
		\label{lem:chi-dif}
		Consider any deterministic algorithm $\mathcal{A}$ and any synchronous instance ${\cal I}=\instance$ such that $\chi=-1$. Let  $z\geq t$ be a time such that rendezvous does not occur before time $z$ when applying algorithm $\mathcal{A}$ to the instance ${\cal I}$ and let $\vec{u}$ be the vector from $proj_A$ to $proj_B$. The trajectory followed by the later agent $B$ until time $z$ is an image of the trajectory followed by agent $A$ until time $z-t$ by a transformation that is a composition of a shift by vector $\vec{u}$ with the axial symmetry using the canonical line of~${\cal I}$.
	\end{lemma}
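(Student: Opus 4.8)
The plan is to exploit the fact that, before rendezvous occurs, the motion of each agent is a deterministic function of its own elapsed time alone. Indeed, the two agents are anonymous, run the same algorithm $\mathcal{A}$, receive no external input, and cannot perceive each other until they lie within distance $r$ --- which, by hypothesis, does not happen before time $z$. Hence there is a single map $f:[0,+\infty)\to\mathbb{R}^2$ with $f(0)=(0,0)$ such that, read in an agent's own coordinate system, the agent occupies position $f(s)$ once $s$ of its own time units have elapsed since its wakeup, as long as no rendezvous has yet occurred. Since ${\cal I}$ is synchronous, $\tau=v=1$, so one time unit of $B$ equals one absolute time unit and one length unit of $B$ equals one absolute length unit. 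Consequently agent $A$, whose system is $\Gamma$ and whose wakeup is at time $0$, is at absolute position $f(s)$ at each time $s\in[0,z]$, while agent $B$, which wakes up at time $t$, is at \emph{its own} position $f(s-t)$ at each time $s\in[t,z]$ (and rests at its origin on $[0,t]$).

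First I would write down the change of coordinates from $B$'s system to the absolute one. From the conventions defining an instance together with $\chi=-1$, the unit vectors of the $x$- and $y$-axes of $B$, read in $\Gamma$, are $R_\phi(1,0)$ and $R_\phi(0,-1)$, where $R_\phi$ denotes the counterclockwise rotation by $\phi$; hence a point with $B$-local coordinates $q=(q_1,q_2)$ occupies the absolute position $g(q):=(x,y)+R_\phi(q_1,-q_2)$. Combining this with the previous paragraph, the absolute position of $B$ at time $t+s'$ equals $g(f(s'))=g(\text{position of }A\text{ at time }s')$ for every $s'\in[0,z-t]$. Thus, up to the time-shift by $t$, the trajectory of $B$ until time $z$ is precisely the image under the affine map $g$ of the trajectory of $A$ until time $z-t$.

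It remains to check that $g$ coincides with the announced transformation $T$, the composition of the axial symmetry $\sigma_L$ with respect to the canonical line $L$ of ${\cal I}$ and the translation by $\vec{u}$ (in either order --- this is immaterial, since $\vec{u}$ is parallel to $L$). A short $2\times 2$ matrix computation shows that the reflection in the $x$-axis followed by $R_\phi$ is exactly the reflection $\rho$ across the line through the origin making angle $\phi/2$ with the $x$-axis; hence $g(p)=(x,y)+\rho(p)$ is affine with linear part $\rho$. By Definition~\ref{def:cano}, the axis $L$ of $\sigma_L$ is parallel to that same line (the bisectrix direction; for $\phi=0$, the $x$-axis), so $\sigma_L$, and therefore $T$, is affine with linear part $\rho$ as well. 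Consequently $g=T$ as soon as they agree at one point, which I check at the origin: $g(0,0)=(x,y)$, while $\sigma_L(0,0)$ is the reflection of $A$'s starting point across $L$, so $proj_A$ is the midpoint of $(0,0)$ and $\sigma_L(0,0)$, giving $T(0,0)=\vec{u}+\sigma_L(0,0)=(proj_B-proj_A)+2\,proj_A=proj_A+proj_B$. Finally, $L$ is equidistant from $(0,0)$ and $(x,y)$ and hence passes through their midpoint $M$; since orthogonal projection onto $L$ is affine and fixes $M$, we get $proj_A+proj_B=2M=(x,y)=g(0,0)$, which completes the proof.

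The conceptual heart of the argument is the opening observation: because no interaction between the agents is possible before time $z$, each of them traces, in its own frame and up to the shift by its wakeup time, the very same path $f$ --- this is where the hypotheses ``$z\ge t$'' (so that $z-t\ge 0$ is meaningful) and ``no rendezvous before $z$'' (so that neither agent has yet deviated from $f$, \eg by stopping after sighting the other) are used. The only genuine work then lies in step two, namely keeping track of the orientation and chirality conventions so as to get the isometry $g$ right, and in the brief affine computation of step three establishing $g=T$; both are routine once set up, and I expect the bookkeeping of conventions in step two to be the most error-prone part.
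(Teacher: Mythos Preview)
Your proof is correct and follows essentially the same approach as the paper's: both rest on the observation that, since the agents are anonymous, synchronous, and do not interact before time $z$, each traces the same path $f$ in its own frame, so the two trajectories are related by the change-of-coordinates map between the frames, which (because $\chi=-1$) is the stated reflection-plus-shift. The paper's proof is a three-line sketch that asserts this symmetry without computation, whereas you carry out the explicit verification that $g=T$; your added rigor is welcome and the argument is sound.
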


	\begin{proof}
		Agent $A$ starts time $t$ ahead of agent $B$ and instance ${\cal I}$ is synchronous: in particular $\tau=v=1$ and the length unit of each agent is $1$. Moreover, by Definition~\ref{def:cano}, the agents are initially at the same distance $d$ of the canonical line of ${\cal I}$, one on either side of this line if $d>0$. Hence, in view of $\chi=-1$ and since the agents apply the same deterministic algorithm, the lemma follows.
	\end{proof}

\begin{corollary}
		\label{cor:chi-dif}
		Consider any synchronous instance ${\cal I}=\instance$ such that $\chi=-1$ and any deterministic algorithm $\mathcal{A}$. Let $z\geq t$ be a time such that rendezvous does not occur before time $z$ when applying algorithm $\mathcal{A}$ to the instance ${\cal I}$. We have the following equality: $\dist{proj_A(z-t)}{proj_B(z)}=\dist{proj_A}{proj_B}$.
	\end{corollary}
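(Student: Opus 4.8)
The plan is to derive this equality directly from Lemma~\ref{lem:chi-dif} together with two elementary facts about how orthogonal projection onto a line interacts with reflections and with translations parallel to that line. Throughout, let $L$ denote the canonical line of ${\cal I}$, and for a point $q$ of the plane let $proj_L(q)$ be the orthogonal projection of $q$ onto $L$ (so that $proj_A(s)$ and $proj_B(s)$ are $proj_L$ applied to the positions of $A$ and $B$ at time $s$).

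First I would invoke Lemma~\ref{lem:chi-dif}: since rendezvous does not occur before time $z$ when applying $\mathcal{A}$ to ${\cal I}$, the position of $B$ at time $z$ equals $T(p)$, where $p$ is the position of $A$ at time $z-t$ and $T$ is the composition of the translation by $\vec{u}$ with the axial symmetry $\sigma$ across $L$, and where $\vec{u}$ is the vector from $proj_A$ to $proj_B$. Then I would record the two observations. (i) Both $proj_A$ and $proj_B$ lie on $L$ by Definition~\ref{def:cano}, hence $\vec{u}$ is directed along $L$; consequently, for every point $q$ we have $proj_L(q+\vec{u})=proj_L(q)+\vec{u}$, because orthogonal projection onto $L$ is affine and its linear part (orthogonal projection onto the direction of $L$) fixes $\vec{u}$. (ii) For every point $q$, decomposing $q$ into its foot of perpendicular on $L$ plus a vector orthogonal to $L$, the reflection $\sigma$ negates only the orthogonal component, so $proj_L(\sigma(q))=proj_L(q)$.

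Combining (i) and (ii) with $T=(\text{translation by }\vec{u})\circ\sigma$ (the computation is identical for the reverse order of composition), I obtain
\[
proj_B(z)=proj_L(T(p))=proj_L(\sigma(p))+\vec{u}=proj_L(p)+\vec{u}=proj_A(z-t)+\vec{u}.
\]
Hence the vector from $proj_A(z-t)$ to $proj_B(z)$ is exactly $\vec{u}$, and therefore $\dist{proj_A(z-t)}{proj_B(z)}=\|\vec{u}\|=\dist{proj_A}{proj_B}$, which is the claimed equality.

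I do not expect a genuine obstacle here: the corollary is essentially a geometric unpacking of Lemma~\ref{lem:chi-dif}. The only point requiring a little care is verifying that $proj_L$ commutes with the translation part of $T$ (which is where the fact that $\vec{u}$ is parallel to $L$ is used) and is invariant under $\sigma$; the degenerate case $proj_A=proj_B$, i.e. $\vec{u}=\vec{0}$ (which occurs precisely when the agents start on $L$), makes the statement immediate.
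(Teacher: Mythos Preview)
Your argument is correct and is exactly the intended derivation: the paper states the corollary without proof, treating it as an immediate consequence of Lemma~\ref{lem:chi-dif}, and what you wrote is precisely the routine geometric unpacking (projection onto $L$ commutes with the shift by $\vec{u}$ since $\vec{u}$ is parallel to $L$, and is unchanged by the reflection across $L$) that justifies this.
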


\section{Characterization of Feasible Instances and Almost Universal Rendezvous} \label{sec:charac}

The aim of this section is to show two main results of this paper. The first result consists of a complete characterization of the feasible instances.
	\begin{theorem}
		\label{th:charac}
		~\begin{enumerate}
			\item All non-synchronous instances are feasible.
			\item A synchronous instance $\instance$ is feasible if and only if:
				\begin{enumerate}
					\item $\chi=1$ and $\phi\neq 0$ or
					\item $\chi=1$ and $\phi=0$ and $t\geq\dist{(0,0)}{(x,y)}-r$, or
					\item $\chi=-1$ and $t\geq\dist{proj_A}{proj_B}-r$.
				\end{enumerate}
		\end{enumerate}
	\end{theorem}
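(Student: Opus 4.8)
The plan is to prove the three "if" directions by exhibiting, for each type of instance, an algorithm (or a reduction to a known procedure) that guarantees rendezvous, and then prove the "only if" direction for synchronous instances by a symmetry/invariance argument showing that when the stated conditions fail, the two agents' trajectories stay a fixed distance apart forever. For part 1 (non-synchronous instances), I would split on which parameter breaks synchrony. If the clock rates differ ($\tau\neq 1$) or the speeds differ ($v\neq 1$), then the authors of \cite{CGKK} already handle the case $t=0$; for $t>0$ one can have each agent first wait long enough (in its own time units) so that, by the time both are awake, one can simulate the simultaneous-wakeup behavior of {\tt CGKK} — more carefully, one designs a dedicated algorithm exploiting the fact that the ratio $\tau$ or $v$ gives a detectable asymmetry even though each agent does not know whether it is $A$ or $B$. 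The key point is that feasibility only requires \emph{existence} of an algorithm for the given instance, so we are allowed to tailor behavior to the numeric asymmetry; one agent will, at a time computable from its own observations, be provably in a different "phase" than the other and can move toward the partner.

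For part 2, the three sub-cases reduce to procedures already named in Section 2. Case (a), $\chi=1$ and $\phi\neq 0$: this is exactly covered by {\tt CGKK} with $t=0$, and for $t>0$ one prepends a wait so that the problem reduces to the simultaneous case (since the coordinate systems differ by a nonzero rotation, the asymmetry persists and {\tt CGKK}'s rotational symmetry-breaking still works). Case (b), $\chi=1$, $\phi=0$, $t\geq\dist{(0,0)}{(x,y)}-r$: this is precisely the hypothesis under which {\tt Latecomers} (GATHER(2) from \cite{PY2}) guarantees rendezvous, so we simply invoke that procedure (with the boundary case $t=\dist{(0,0)}{(x,y)}-r$ checked directly, or absorbed via the $\geq$ in \cite{PY2}'s condition). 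Case (c), $\chi=-1$ and $t\geq\dist{proj_A}{proj_B}-r$: here I would design an algorithm in which agent $A$ (the early agent) walks along the perpendicular from its start toward the canonical line and then proceeds; by Lemma~\ref{lem:chi-dif}, agent $B$'s trajectory is the mirror image (through the canonical line, shifted by $\vec u$) of $A$'s trajectory delayed by $t$, so after $A$ has been walking for time $z-t$ the gap between the two projections onto the canonical line has shrunk appropriately; the condition $t\geq\dist{proj_A}{proj_B}-r$ is exactly what guarantees that the agents come within distance $r$ before the later agent "catches up" and the mirror symmetry would otherwise push them apart. Concretely, $A$ should move parallel to the canonical line toward $proj_B$, covering the gap, and the time budget $t$ must cover the part of the gap exceeding $r$.

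For the "only if" direction, suppose $\instance$ is synchronous and none of (a),(b),(c) holds; I must show no algorithm achieves rendezvous. If $\chi=1$ and $\phi=0$ and $t<\dist{(0,0)}{(x,y)}-r$: since the coordinate systems are mere shifts and wakeups are nearly simultaneous, both agents execute identical trajectories shifted by the initial displacement; the distance between them is invariant in time \emph{except} during the interval of length $t$ when only $A$ has moved — and during that interval $A$ can reduce the distance by at most $t$ (speed $1$), which by hypothesis is not enough to reach distance $r$; after time $t$ the distance is frozen, so rendezvous never occurs. If $\chi=-1$ and $t<\dist{proj_A}{proj_B}-r$: apply Corollary~\ref{cor:chi-dif} — for every time $z$ before a putative rendezvous, $\dist{proj_A(z-t)}{proj_B(z)}=\dist{proj_A}{proj_B}$, and since the true distance between the agents is at least the distance between their projections onto the canonical line, the best $A$ can do before $B$ wakes is close a gap of at most $t$ along that line; the residual gap $\dist{proj_A}{proj_B}-t>r$ can never be overcome thereafter, because the mirror symmetry keeps the projection-distance pinned. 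I expect this "only if" part — in particular making the invariance argument fully rigorous for arbitrary (adaptive, stop-and-go) trajectories rather than for the special trajectories used in the positive direction — to be the main obstacle: one must argue that \emph{no} clever maneuver by either agent can beat the symmetry, which is where Lemma~\ref{lem:chi-dif} and Corollary~\ref{cor:chi-dif}, applied to the algorithm under consideration rather than to a fixed one, do the essential work. The remaining delicate point is handling the boundary equalities ($t=\dist{(0,0)}{(x,y)}-r$ and $t=\dist{proj_A}{proj_B}-r$), which fall on the feasible side and require a limiting/continuity check that the agents can indeed touch distance exactly $r$.
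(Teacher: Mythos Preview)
Your overall decomposition matches the paper's: positive cases via known procedures plus a dedicated construction for $\chi=-1$, negative cases via the symmetry lemma and its corollary. But two steps do not go through as written.

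\textbf{The ``prepend a wait'' reduction for case 2(a) is wrong.} When $\tau=v=1$, both agents execute the same waiting instruction in identical clocks, so the absolute delay $t$ is preserved; you have \emph{not} reduced to the simultaneous case, and {\tt CGKK} as specified requires $t=0$. The paper does not attempt such a reduction. Instead it first proves Theorem~\ref{th:algo}: for synchronous instances with $\chi=1$, $\phi\ne 0$, $t>0$, Algorithm {\tt AlmostUniversalRV} cuts a solo run of {\tt CGKK} into short segments of duration $\frac{1}{2^i}$ and interleaves each with a wait of length $2^i$; for $2^i\ge t$ this forces the moment when the early agent finishes its $k$-th wait to fall inside the late agent's $k$-th wait, so the two partial {\tt CGKK} trajectories are within one segment of being synchronized and meet once the segment is short enough relative to $r$. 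Feasibility of 2(a) is then read off Theorem~\ref{th:algo} (Lemma~\ref{lem:2}). The same objection applies to your Part~1 sketch in the sub-case $\tau=1$, $v\ne 1$: a plain wait does not resynchronize, and you need either the interleaving idea or a genuinely different dedicated construction.

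\textbf{The ``distance is frozen after time $t$'' claim is false.} For $\chi=1$, $\phi=0$, $\tau=v=1$ and $z\ge t$ one has $B(z)=A(z-t)+(x,y)$, hence $\dist{A(z)}{B(z)}=\lvert A(z)-A(z-t)-(x,y)\rvert$, which is not constant in $z$. The correct (and shorter) argument is the triangle inequality $\dist{A(z)}{B(z)}\ge \lvert (x,y)\rvert - \lvert A(z)-A(z-t)\rvert \ge \lvert (x,y)\rvert - t > r$, using only that $A$ has speed $1$; this is what \cite{PY2} gives and what the paper invokes in Lemma~\ref{lem:3}. Your negative argument for $\chi=-1$ via Corollary~\ref{cor:chi-dif} is essentially the paper's proof of the ``only if'' half of Lemma~\ref{lem:4}; to close it you should state explicitly that $\dist{proj_A(z-t)}{proj_A(z)}\le t$ and combine this with $\dist{proj_A(z-t)}{proj_B(z)}=\dist{proj_A}{proj_B}$ and $\dist{proj_A(z)}{proj_B(z)}\le r$ to obtain the contradiction. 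Finally, the boundary equalities are not handled by any ``limiting/continuity check'': the paper gives an explicit dedicated algorithm for $\chi=-1$, $t=\dist{proj_A}{proj_B}-r$ (go to your projection on the canonical line, then travel distance $t$ in a fixed direction along it and back), and cites \cite{PY2} for the $\chi=1$, $\phi=0$ boundary.
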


The second result is the design of an algorithm called {\tt AlmostUniversalRV}, detailed later in this section, which achieves rendezvous for the set of instances that is given by the following theorem: obviously all these instances are feasible.

	\begin{theorem}\label{th2}
		\label{th:algo}
		Algorithm {\tt AlmostUniversalRV} guarantees rendezvous for all instances $\instance$ which are either non-synchronous or such that:
		\begin{itemize}
			\item $\chi=1$ and $\phi\neq 0$ or
			\item $\chi=1$ and $\phi=0$ and $t>\dist{(0,0)}{(x,y)}-r$, or
			\item $\chi=-1$ and $t>\dist{proj_A}{proj_B}-r$.
		\end{itemize}
	\end{theorem}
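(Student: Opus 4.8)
The plan is to present {\tt AlmostUniversalRV} as a round-level dovetailing of three strategies and to verify rendezvous separately for the four families of instances in the statement. The starting observation is a reduction: since $r$ is the \emph{visibility} radius, as soon as the two agents are at distance at most $r$ they perceive each other, and, perception being symmetric and instantaneous, they do so at the same absolute time; the algorithm is written so that upon perceiving the other agent an agent walks straight towards it, which makes the mutual distance strictly decrease until the agents coincide, where both halt forever. Hence it suffices to prove that, for every instance in the stated set, the two agents come within distance $r$ at \emph{some} time; in particular, time-sharing several strategies is harmless, since an unintended early encounter only helps. A second, elementary observation that we use throughout is that the whole execution of the later agent $B$ is the execution of $A$ shifted by $t$ absolute time units; consequently, inside any single round of the dovetailing the two agents' progress into that round differs by exactly $t$ -- this is what lets a strategy exploit the wake-up delay with its true value, despite the time-sharing.

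In each round $i=1,2,\dots$, with a scale/precision parameter growing with $i$, the agent runs in turn: (a) a copy of {\tt CGKK} preceded by a ``padding'' idle phase whose length grows with $i$; (b) a copy of {\tt Latecomers}; and (c) a new procedure {\tt OppositeChirality}. For a \emph{non-synchronous instance or a synchronous instance with $\chi=1$ and $\phi\neq0$}, recall that {\tt CGKK} already guarantees rendezvous when $t=0$; for arbitrary $t$ we use that the asymmetry {\tt CGKK} relies on in these cases (unequal clock rates, unequal speeds, or a nonzero rotation with equal chirality) is driven by the agents' ongoing motion rather than by a one-shot coincidence in time, so that once the padding of round $i$ exceeds $t$ and the scale is large the two executions of {\tt CGKK} in round $i$ start from configurations matching up to a bounded, eventually negligible perturbation, which its search structure absorbs. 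For a \emph{synchronous instance with $\chi=1$, $\phi=0$ and $t>\dist{(0,0)}{(x,y)}-r$}, this is exactly the set of instances (the ``good configurations'' of \cite{PY2} for two agents) on which {\tt Latecomers} guarantees rendezvous, so block (b) of a sufficiently advanced round suffices.

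The remaining and genuinely new case is a \emph{synchronous instance with $\chi=-1$ and $t>\dist{proj_A}{proj_B}-r$}. By Lemma~\ref{lem:chi-dif}, up to rendezvous the trajectory of $B$ is the image, time-shifted by $t$, of the trajectory of $A$ under the glide reflection $G$ that reflects in the canonical line $\ell$ and then translates by the vector $\vec{u}$ from $proj_A$ to $proj_B$, with $\vec{u}$ parallel to $\ell$ and $|\vec{u}|=\dist{proj_A}{proj_B}$. Procedure {\tt OppositeChirality} dovetails, over finite sets of rationals whose granularity grows with the round, the following behaviour: first walk along a guessed direction for a guessed distance (aiming to reach $\ell$), then walk along a second guessed direction at a guessed effective speed $c\in(0,1]$ realised by alternating short moves and short waits, keeping it up for a time growing with the round. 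A direct computation -- reflection in $\ell$ fixes directions parallel to $\ell$ and reverses the normal component, so the perpendicular displacement from $\ell$ is what must be guessed away, and Corollary~\ref{cor:chi-dif} keeps the projections of the two agents at the fixed distance $\dist{proj_A}{proj_B}$ -- shows that once the guesses are (approximately) correct, namely first direction towards $\ell$, first distance the common distance $d$ of the agents to $\ell$, second direction that of $\vec{u}$, and $c\approx\min\{1,\dist{proj_A}{proj_B}/t\}$, the two agents, once both in the second phase, stay at mutual distance essentially $|ct-\dist{proj_A}{proj_B}|$. With the ideal $c$ this is $0$ when $t\ge \dist{proj_A}{proj_B}$ and $\dist{proj_A}{proj_B}-t<r$ when $\dist{proj_A}{proj_B}-r<t<\dist{proj_A}{proj_B}$; since only approximate guesses are available, the \emph{strict} inequality $t>\dist{proj_A}{proj_B}-r$ is precisely the slack needed for some rational guess in the dovetailed set to bring the agents within distance $r$ (this is also why the algorithm's guarantee uses strict inequalities while feasibility in Theorem~\ref{th:charac} allows equality).

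The step I expect to be the main obstacle is the $\chi=-1$ case: designing {\tt OppositeChirality} so that a single dovetailing simultaneously resolves the unknown direction of the canonical line, the unknown distance to it, and the unknown required effective speed, and proving that the resulting threshold is exactly $t>\dist{proj_A}{proj_B}-r$ (matching the feasibility characterisation). A secondary, more routine difficulty is making rigorous the claim that {\tt CGKK}, once prefixed by a long enough idle phase, tolerates an arbitrary finite wake-up delay; this requires revisiting the guarantees of {\tt CGKK} and checking that its correctness is insensitive to a bounded head start of one agent.
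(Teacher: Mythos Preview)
There is a genuine gap in your block (a), and it is not the ``secondary, more routine difficulty'' you label it. Prefixing {\tt CGKK} with an idle padding phase does \emph{not} absorb the wake-up delay when $\tau=1$, and $\tau=1$ covers all synchronous instances with $\chi=1,\phi\neq0$ as well as all non-synchronous instances whose asymmetry comes from $v\neq1$ alone. With equal clock rates both agents traverse the padding in lockstep: if it lasts $P$ local time units, $A$ enters {\tt CGKK} at absolute time $P$ and $B$ at absolute time $t+P$, so inside {\tt CGKK} the head start is still exactly $t$, for every $P$. Nothing becomes an ``eventually negligible perturbation''. The paper does not assume any robustness of {\tt CGKK} to a head start; for these $\tau=1$ instances (its ``type~4'') it instead chops the execution of {\tt CGKK} into tiny segments of duration $1/2^i$ and inserts a wait of length $2^i\ge t$ after \emph{every} segment, so that at the end of each wait the two agents are simultaneously within one segment's travel of the positions a delay-free run of {\tt CGKK} (with visibility radius $r/2$) would place them. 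For $\tau\neq1$ the paper uses a separate mechanism (its ``type~3''): an enormous wait of $2^{15i^2}$ local units lets the faster-clock agent finish first and complete a full planar search while the slower-clock agent is provably still idle at its starting point. Your single ``{\tt CGKK} after padding'' block replaces both of these devices with an unproved robustness claim about {\tt CGKK}.

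Your {\tt Latecomers} block matches the paper. Your {\tt OppositeChirality} for $\chi=-1$ is organised differently from the paper's solution---the paper dovetails over rotations of the local frame and runs a doubling back-and-forth cow-path along the rotated $x$-axis from a grid of starting heights, rather than guessing a direction to the canonical line, a distance to it, a direction along it, and an effective speed---but the underlying use of Lemma~\ref{lem:chi-dif} and Corollary~\ref{cor:chi-dif} and the role of the strict inequality $t>\dist{proj_A}{proj_B}-r$ as approximation slack are the same, and your variant looks workable in principle.
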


Some feasible instances are not caught by our algorithm. Precisely, it is the case of the synchronous instances for which
\begin{itemize}
	\item either $\phi=0$, $t=\dist{(0,0)}{(x,y)}-r$, and $\chi=1$
	\item or $t=\dist{proj_A}{proj_B}-r$ and $\chi=-1$.
\end{itemize}

In Section~\ref{sec:miss}, we explain why these two sets of exceptions are small, in a geometric sense, compared to the set of all feasible instances, and we  show that no single algorithm can achieve rendezvous for all instances of either of these small sets.

Notice that the second theorem permits, roughly speaking, to show a big part of the ``if'' implication of the first theorem. Hence, in the rest of this section, we start by presenting Algorithm~{\tt AlmostUniversalRV} and proving Theorem~\ref{th:algo}, before proceeding with the proof of Theorem~\ref{th:charac} that will rely in part on Theorem~\ref{th:algo}.


\subsection{Algorithm {\tt AlmostUniversalRV} and proof of Theorem~\ref{th:algo}}


\subsubsection{Intuition} \label{subsec:intuition}

We begin with the following categorization of the instances described in Theorem~\ref{th:algo} into four disjoint types.

	\begin{itemize}
		\item{\bf type 1.} Synchronous instances for which $\chi=-1$ and $t>\dist{proj_A}{proj_B}-r$.
		\item{\bf type 2.} Synchronous instances for which $\chi=1$, $\phi=0$, and $t>\dist{(0,0)}{(x,y)}-r$.
		\item{\bf type 3.} Instances for which $\tau\neq 1$.
		\item{\bf type 4.} All instances described in Theorem~\ref{th:algo} that are not of type 1, 2, or 3.
	\end{itemize}

This categorization turns out to be pertinent for our purpose. Indeed, when thinking about the design of Algorithm {\tt AlmostUniversalRV}, it appears that each of the above four types of instances calls for its own rendezvous strategy.
Since combining these strategies into a single algorithm is pretty easy, and unnecessary to understand the intuitions, we focus below on describing the high-level idea of four algorithms, each of them being tailored to achieve rendezvous for all the instances of one of the above types.



First consider the instances of type $1$. For this type, there is a particular subset for which rendezvous can be easily achieved. This subset is the one composed of every instance ${\cal I}$  in which the agents are initially on the canonical line $L$ of ${\cal I}$, with their $x$-axes parallel to $L$ and going in the same direction. For instance ${\cal I}$, rendezvous can be achieved simply by instructing the agents to execute a linear search procedure, derived from the literature on the Cow-Path Problem \cite{BeNe}. This procedure consists in applying successive steps $i=1,2,\ldots$ till seeing the other agent, where in each step $i$ the sequence of move instructions is as follows: go East at distance $2^i$, then go West at distance $2^{i+1}$, and then go back to the initial position by going East at distance $2^i$. Using Corollary~\ref{cor:chi-dif} together with the fact that instance ${\cal I}$ is synchronous and $t>dist(proj_A,proj_B)-r$, it can be proved that the agents see each other by the time agent $A$ finishes step $\lambda$, where $\lambda$ is the smallest integer such that $t\leq2^\lambda$.

Actually, if the stop condition of the procedure were removed, we would even observe the agents getting regularly at a distance \emph{strictly} smaller than $r$ of each other from step $\lambda$ on. This is closely related to the fact that $t>dist(proj_A,proj_B)-r$.  If $t$ were exactly equal to $dist(proj_A,proj_B)-r$, the agents would get regularly at a distance $r$ but never less, while they would never see each other if $t$ were strictly smaller than $dist(proj_A,proj_B)-r$. Since getting the agents at distance \emph{exactly} $r$ of each other is already enough, there is room for possible deviations, i.e. a margin of error for which rendezvous could be still ensured even if the movements of the agents were not exactly on line $L$ and/or parallel to $L$. Obviously, this is true only if the magnitude of these deviations remains ``related'' to the magnitude of the margin of error. Although possibly very small, this margin turns out to be crucial to generalize the strategy described above to any instance of type $1$. This generalization can be done by requiring each agent to enumerate the triples of positive integers $(i,j,k)$, and for each triple $(i,j,k)$ to act as follows in a local system {\tt Rot}$(\frac{k\pi}{2^i})$: go North at distance $\frac{j}{2^i}$ and execute the first $i$ steps of the linear search procedure, then go South at distance $\frac{2j}{2^i}$ and execute again the first $i$ steps of the linear search procedure, and finally go back to the initial position by going North at distance $\frac{j}{2^i}$.

In doing so, we are guaranteed to recreate at some point almost the same favorable conditions as those we naturally have when the agents are initially on the canonical line and meet by the end of their execution of step $\lambda$. This can be proven using, among others, Lemma~\ref{lem:chi-dif} and the fact that $\chi=-1$. Note that under the aforementioned term ``almost the same favorable conditions'', we mean in particular that the agents get pretty close to the canonical line and start a linear search that will last enough time and that will not deviate much from the canonical line, with respect to the authorized margin of error, to achieve rendezvous.

To conclude with the case of instances of type $1$, it is interesting to observe that without room for error, it would be impossible to get a single algorithm achieving rendezvous for all the instances of type $1$. This is highlighted by the instances that are like those of type $1$ except that $t=dist(proj_A,proj_B)-r$ replaces $t>dist(proj_A,proj_B)-r$: we show in Section ~\ref{sec:miss} that there exists no algorithm achieving rendezvous for all these instances (cf. Theorem~\ref{th:impo}).

Concerning the instances of type 2, there is not much to say. This case can be directly solved via a procedure from the literature, called {\tt Latecomers}, introduced in the preliminaries section. Just note that the smooth functioning of procedure {\tt Latecomers} (resp. the approach described for the first type) heavily relies on the fact that $\chi=1$ and $\phi=0$ (resp. on the fact that $\chi=-1$), and thus it cannot be used to handle the instances of type~$1$ (resp. of type~$2$).

Now, let us turn attention to the instances of type $3$, which presents a major contrast with the previous two types. Indeed, we face here instances that are not synchronous, as each of them has the particularity of having an agent whose time unit is different from $1$. However, far from being a problem, this particularity can be exploited to our advantage. To see how, consider any planar search procedure $\mathcal{P}(a,b)$ that allows an agent, whose initial position is $p$ and whose unit of length is $u$, to get at distance at most $a u$ of all points at distance at most $b u$ from $p$. Although this question does not really matter here, note that there are multiple ways of designing such a procedure, for instance via spiral movements or via series of parallel linear searches: it is this latter option that will be used in our solution. Also consider an instance ${\cal J}$ of type $3$ and suppose ideally that the two agents of ${\cal J}$ initially know all parameters of the instance. The agents can thus compute the distance $d$ that initially separates them, {the smaller (resp. larger) of the two time units $\tau_{min}$ (resp. $\tau_{max}$) and the unit of length $u_X$ of the agent $X$ having time unit $\tau_{min}$}. Hence, they can also determine the smallest integer $\Delta$ such that $\Delta(\tau_{max}-\tau_{min})-t\geq n$, where $n$ is the number of local time units spent by an agent to execute {$\mathcal{P}(\frac{r}{u_{X}},\frac{d}{u_{X}})$} (this number is the same for all agents whatever the value of its unit of time). Note that integer $\Delta$ necessarily exists because $\tau_{max}>\tau_{min}$ in view of the definition of an instance of type~$3$.

In such an ideal situation, rendezvous can be achieved by requiring each agent to apply two steps that are stopped before the end as soon as an agent sees the other. Here are the two steps: $(1)$ wait time $\Delta$ (w.r.t my local time unit), and $(2)$ execute procedure {$\mathcal{P}(\frac{r}{u_{X}},\frac{d}{u_{X}})$}. Actually, integer $\Delta$ has been chosen so that {agent $X$} starts the second step at least time $n$ ahead of the other agent, even if agent $X$ is initially the later agent to wake up and to start the first step. As a result, agent $X$ can execute entirely procedure {$\mathcal{P}(\frac{r}{u_{X}},\frac{d}{u_{X}})$}, while the other agent stays idle at its initial position. {Since $u_X$ is the unit of length of agent $X$, procedure {$\mathcal{P}(\frac{r}{u_{X}},\frac{d}{u_{X}})$} allows agent $X$ to get at distance at most $r$ of all points at distance at most $d$ from its initial position.} Hence, agent $X$ necessarily sees the other agent and rendezvous is accomplished by the time it finishes executing the second step.

Unfortunately, we are never in such an ideal situation. That being said, with some appropriate arrangements, the same principle can be re-applied to achieve rendezvous from any instance of type~$3$. This can be done by instructing the agents to construct a series of assumptions about the useful parameters (i.e., $\tau_{min}$, $\tau_{max}$, $d$, etc.) and for each series to execute the aforementioned two steps using the newly made assumptions. For each series, we need to take the precaution of modulating the length of the waiting period of the second step according to the previous series that has been tested, in order to ensure that an agent stays idle while the other executes a planar search when the assumptions are good. In fact, it is true that we can never guarantee finding a series containing the exact values of the parameters. However, if the construction of the assumptions is conducted in an appropriate manner, we can obtain at some point lower and upper bounds on the parameters that are close enough to the reality to get the job done.

To finish with the intuitive explanations, it remains to address the case of the instances of type~$4$. These are all instances listed in Theorem~\ref{th:algo} that are of none of the previous three types. At first sight, they form a heterogeneous set. We find  there the non-synchronous instances for which $\tau=1$ and the synchronous instances for which $\chi=1$ and $\phi\ne0$ (since these latter instances are synchronous, they also have $\tau=1$). Each of these instances has at least one parameter whose value prevents it from being handled through one of the methods described earlier. In spite of their heterogeneity, the instances of type~$4$ all share two characteristics. The first characteristic is that they all have a time unit $\tau$ that is equal to $1$. The second characteristic, that relies on the function $h$ presented below, is that for each instance $\mathcal{K}$ of type~$4$, rendezvous can be achieved for instance $h(\mathcal{K})$ by applying a procedure from the literature, called {\tt CGKK}, introduced in the preliminaries section. The function $h$ associates to any instance another one that is identical except that the visibility radius of the agents is divided by $2$ and the delay between the starting times is set to $0$ (if it was not the case). The two above characteristics, which can be found together only in the instances of type~$4$, can be used to design a mechanism that is outlined below and that achieves rendezvous for any instance of this type.

For a better understanding of the idea behind this mechanism, suppose ideally that the agents of a given instance $\mathcal{K}$ of type~$4$ initially know the delay $t$ between their starting times as well as a positive number $\gamma$ such that no agent of $\mathcal{K}$ can travel a distance greater than $\frac{r}{4}$ during a period lasting time $\gamma$ or less. Under these ideal assumptions, rendezvous can be achieved through an algorithm that consists of an execution of procedure {\tt CGKK} that is interrupted after each period lasting a time $\gamma$, called a segment. Each of these interruptions corresponds to a waiting period lasting exactly time $t$, at the end of which the execution of {\tt CGKK} is resumed with the next segment. Naturally, as soon as an agent sees the other, it stops forever.

In order to explain why rendezvous is ensured here, we need to juggle two instances, namely $\mathcal{K}$ and its image $h(\mathcal{K})$, and thus we need to bring in some precisions to avoid any ambiguity. For instance $\mathcal{K}$, we keep the usual way to denote the reference agent by $A$ and the other agent by $B$, while for instance $h(\mathcal{K})$, the reference agent is denoted by $A'$ and the other agent by $B'$. We also keep the usual way of describing a situation with respect to the coordinate system and the parameters of a reference agent, using indistinctly those of agents $A$ or $A'$ as they are identical by definition.

The precisions being given, we can sketch our arguments. Consider an execution of procedure {\tt CGKK} for instance $h({\cal K})$. According to the properties of this procedure, agents $A'$ and $B'$ achieve rendezvous at some time $\alpha$ in which they occupy some positions $p'$ and $q'$ respectively. According to the construction of $h(\mathcal{K})$, we know that $p'$ and $q'$ are separated by a distance $\frac{r}{2}$. Now, consider the execution of the  algorithm described above for instance $\mathcal{K}$. When agent $A$ (resp. $B$)  executes the $\lceil\frac{\alpha}{\gamma}\rceil$th segment of procedure {\tt CGKK}, it can be shown that the agent passes through position $p'$ (resp. $q'$). Furthermore, each interruption of procedure {\tt CGKK} lasts exactly the time equal to the delay between the agents, and neither $A$ nor $B$ can travel a distance greater than $\frac{r}{4}$ during a segment. Consequently, if rendezvous has not occurred before, it can be also shown that when agent $A$ finishes its $\lceil\frac{\alpha}{\gamma}\rceil$th interruption, it is at distance at most $\frac{r}{4}$ from position $p'$, while at the same time agent $B$ starts its $\lceil\frac{\alpha}{\gamma}\rceil$th interruption and is at distance at most $\frac{r}{4}$ from position $q'$. Note that the above arguments rely on the fact that parameter $\tau$ is $1$ in instances ${\cal K}$ and $h({\cal K})$. From the fact that positions $p'$ and $q'$ are separated by a distance $\frac{r}{2}$, it follows that the agents of $\mathcal{K}$ get at distance at most $r$ of each other, and rendezvous is achieved by the time agent $A$ finishes its $\lceil\frac{\alpha}{\gamma}\rceil$th interruption of procedure {\tt CGKK}.

Of course, all of this works for an ideal situation in which the agents of instance $\mathcal{K}$ initially know the values of some relevant parameters. However, similarly to what has been described for the instances of type~$3$, the agents do not need the exact values of these parameters. Instead, they can guess and use good approximations that will turn out be sufficient to achieve rendezvous for all instances of type~$4$.

\subsubsection{Algorithm} \label{subsec:alg}

Algorithm~\ref{aurv:alg} gives the pseudocode of Algorithm {\tt AlmostUniversalRV}. It is mainly composed of a repeat loop that is interrupted as soon as the executing agent sees the other one.
This loop consists of four blocks of lines (marked by comments in Algorithm~\ref{aurv:alg}), each of them dedicated to the resolution of rendezvous for one of the types of instances mentioned earlier. Throughout the lines of each block, the reader will be able to recognize the corresponding strategy that is outlined in Section~\ref{subsec:intuition}.

{Algorithm {\tt AlmostUniversalRV} is designed to ensure rendezvous by the time agent $A$ finishes to execute the block $1\leq x\leq 4$ during the $i$th iteration of the repeat loop, if the instance is of type $x$ and satisfies some conditions depending on $i$ (e.g., the visibility radius is at least $\frac{1}{2^i}$, the delay between the agents is at most $2^i$, etc).
Actually, in the proof of correctness, we will show that for every instance ${\cal I}$ of one of the four types, there exists a specific value of $i$ such that rendezvous is guaranteed by the time any agent finishes the block dedicated to the type of ${\cal I}$ in the $i$th iteration of the repeat loop of Algorithm~\ref{aurv:alg}.


Algorithm~\ref{aurv:alg} relies on procedure {\tt PlanarCowWalk}, described in Algorithm~\ref{pcw:alg} which in turn relies on procedure {\tt LinearCowWalk} described in Algorithm~\ref{lcw:alg}. When an agent executes {\tt LinearCowWalk}$(i)$, for a positive integer $i$, it performs the first $i$ steps of a linear search on a line $l$ that is parallel to the $x$-axis of its local system. Each step $1\leq k \leq i$ permits the agents to visit all points of the line $l$ situated at distance at most $2^k$ (w.r.t its unit of length) from where it started {\tt LinearCowWalk}$(i)$.
When an agent executes {\tt PlanarCowWalk}$(i)$ from a point $p$, it performs {\tt LinearCowWalk}$(i)$ from each point whose coordinates expressed in its local system are as follows: the $x$-coordinate is the same as the one of point $p$ and the $y$-coordinate is $\frac{k}{2^i}$, where $k$ is an integer such that $|k|\leq 2^{2i}$. Procedure {\tt PlanarCowWalk} is used to achieve rendezvous both for the instances of type~$1$ and of type~$3$. In particular, for the instances of type~$3$, procedure {\tt PlanarCowWalk} plays the role of the planar search procedure mentioned in Section~\ref{subsec:intuition}: as we will see in the proof of Lemma~\ref{lem:type3}, the procedure allows an agent to get at distance at most $r$ of all points belonging to an area of a ``certain size" provided that the integer given as input is ``large enough''. For example, an execution of {\tt PlanarCowWalk}$(i)$ from a point $p$ ensures that the agent gets at distance at most $r$ of all points at distance at most $d$ from $p$, if its unit of length is at least $1$, $d\leq 2^i$ and $\frac{1}{2^i}\leq r$.

\begin{algorithm}
		\caption{{\tt AlmostUniversalRV}}
		\label{aurv:alg}
		\begin{algorithmic}[1]
                        \State execute the instructions of lines~\ref{aurv:line:1} to~\ref{aurv:line:last} given below and interrupt the execution as soon as the other agent is seen.\label{aurv:stop}
			\State $i\gets 1$ \label{aurv:line:1}
			\Rep
				\State /* instances of type~$1$ */
				\For{$j\gets 1$ to $2^{i+1}$}\label{aurv:debutphase}
					\State execute {\tt PlanarCowWalk}$(i)$ in the coordinate system {\tt Rot}$(\frac{j\pi}{2^i})$
				\EndFor \label{aurv:line:endfor}
				\State /* instances of type~$2$ */
				\State wait$(2^i)$ \label{aurv:line:8}
				\State execute {\tt Latecomers} during time $2^i$ \label{aurv:line:9}
				\State $P\gets$ the path followed in the latest execution of line~\ref{aurv:line:9} \label{aurv:line:9bis}
				\State backtrack on $P$ \label{aurv:line:10}
				\State /* instances of type~$3$ */
				\State wait$(2^{15i^2})$ \label{aurv:line:prevsearch}
				\State execute {\tt PlanarCowWalk}$(i)$ \label{aurv:line:search}
				\State /* instances of type~$4$ */
				\State let $S_1S_2\dots S_{2^{2i}}$ be the solo execution of {\tt CGKK} during time $2^i$, where each $S_j$ takes time $\frac{1}{2^i}$\label{aurv:line:11}
				\State execute $S_1${wait}$(2^i)$\dots $S_{2^{2i}-1}${wait}$(2^i)S_{2^{2i}}${wait}$(2^i)$ \label{aurv:line:12}
				\State $P'\gets$ the path followed in the latest execution of line~\ref{aurv:line:12} \label{aurv:line:12bis}
				\State backtrack on $P'$ \label{aurv:line:13}
				\State $i\gets i+1$
			\EndRep \label{aurv:line:last}
		\end{algorithmic}
	\end{algorithm}

	\begin{algorithm}
		\caption{{\tt PlanarCowWalk}$(i)$}
		\label{pcw:alg}
		\begin{algorithmic}[1]
			\State execute {\tt LinearCowWalk}$(i)$
			\For{$j\gets 1$ to $2$}
				\Rept{$2^{2i}$}
					\If{$j=1$}
						\State go $(N, \frac{1}{2^i})$
					\Else
						\State go $(S, \frac{1}{2^i})$
					\EndIf
					\State execute {\tt LinearCowWalk}$(i)$
				\EndRep
				\If{$j=1$}
					\State go $(S, 2^i)$
				\Else
					\State go $(N, 2^i)$
				\EndIf
			\EndFor
		\end{algorithmic}
	\end{algorithm}

\begin{algorithm}
		\caption{{\tt LinearCowWalk}$(i)$}
		\label{lcw:alg}
		\begin{algorithmic}[1]
			\For{$j\gets 1$ to $i$}
				\State go $(E, 2^j)$; go $(W, 2^{j+1})$; go $(E, 2^j)$
			\EndFor
		\end{algorithmic}
	\end{algorithm}

	\vspace*{4cm}

\subsubsection{Correctness}

In the sequel, we prove the correctness of Algorithm {\tt AlmostUniversalRV}. Overall, the proof is split into four lemmas, each of them showing that the algorithm achieves rendezvous for all instances of one of the four types.

In this proof, every execution by an agent of the repeat loop of Algorithm~\ref{aurv:alg} will be viewed as a series of consecutive \emph{phases} $i=1,2,3,\ldots$, where phase $i$ is the part of its execution corresponding to the $i$th iteration of the repeat loop. In a given phase $i$, every execution by an agent of the for loop of Algorithm~\ref{aurv:alg} will be viewed as a series of consecutive \emph{epochs} $j=1,2,3,\ldots,2^{i+1}$, where epoch $j$ is the part of its execution corresponding to the $j$th iteration of the for loop.

Before addressing the lemmas dedicated to the four types, we give the following simple lemma that will be used several times and that is related to the position occupied by an agent at some points during its execution of Algorithm {\tt AlmostUniversalRV}.

\begin{lemma}\label{lem:tech}
Consider Algorithm {\tt AlmostUniversalRV} executed for an instance ${\cal I}=\instance$. Each time agent $A$ (resp. $B$) starts the execution of a line of Algorithm~\ref{aurv:alg} whose number is not \ref{aurv:line:9bis}, \ref{aurv:line:10}, \ref{aurv:line:12bis} or \ref{aurv:line:13}, it does so from its initial position $(0,0)$ (resp. $(x,y)$).
\end{lemma}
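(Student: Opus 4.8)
The plan is to track the net displacement produced by each self-contained "chunk" of Algorithm~\ref{aurv:alg} and show that every chunk returns the agent to wherever it started that chunk, so that inductively the agent is back at its initial position at the start of every line except the four excluded ones (which are precisely the lines that occur in the middle of a chunk). I would carry this out by analyzing, one block at a time, the structure of the repeat loop, relying on Lemma~\ref{lem:tech}'s own inductive hypothesis only for the beginning of each phase.

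First I would observe that it suffices to prove the invariant "agent $A$ (resp.\ $B$) is at $(0,0)$ (resp.\ at $(x,y)$) at the start of every phase $i$", together with the claim that within a phase, each of the non-excluded lines is entered from that same initial position; excluded lines \ref{aurv:line:9bis}, \ref{aurv:line:10}, \ref{aurv:line:12bis}, \ref{aurv:line:13} are exactly the bookkeeping/backtracking steps internal to the type-2 and type-4 blocks. The base case is the start of phase $1$: by definition the agents have not moved. For the inductive step, assume the agent is at its initial position at the start of phase $i$; I would then walk through the four blocks:

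\emph{Type-1 block (lines~\ref{aurv:debutphase}--\ref{aurv:line:endfor}).} Each iteration of the \texttt{for} loop executes \texttt{PlanarCowWalk}$(i)$ in some rotated system. Inspecting Algorithm~\ref{pcw:alg}: it first runs \texttt{LinearCowWalk}$(i)$, whose every step \verb|go (E, 2^j); go (W, 2^{j+1}); go (E, 2^j)| has net displacement $0$ (Algorithm~\ref{lcw:alg}); then for $j=1$ it makes $2^{2i}$ northward steps of length $\frac{1}{2^i}$ interleaved with zero-net \texttt{LinearCowWalk}$(i)$ calls, a total northward shift of $2^{2i}\cdot\frac{1}{2^i}=2^i$, followed by \verb|go (S, 2^i)|; symmetrically for $j=2$ a southward shift of $2^i$ cancelled by \verb|go (N, 2^i)|. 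Hence \texttt{PlanarCowWalk}$(i)$ has net displacement zero and, being a rotation-invariant statement about net displacement, the same holds in \texttt{Rot}$(\frac{j\pi}{2^i})$. So the whole \texttt{for} loop is entered and left at the initial position, and every line of this block is entered from it. (Note these \texttt{PlanarCowWalk} lines have numbers that are not excluded, so the claim for them must hold; this is consistent because each such call starts a fresh \texttt{PlanarCowWalk} from the current position, which is the initial one.)

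\emph{Type-2 block (lines~\ref{aurv:line:8}--\ref{aurv:line:10}).} Line~\ref{aurv:line:8} is \texttt{wait}, no move; line~\ref{aurv:line:9} runs \texttt{Latecomers} for time $2^i$ producing some path $P$; line~\ref{aurv:line:9bis} only records $P$; line~\ref{aurv:line:10} backtracks on $P$, returning to the position held at the start of line~\ref{aurv:line:9}, namely the initial position. So lines~\ref{aurv:line:8} and~\ref{aurv:line:9} are entered from the initial position — consistent with them not being excluded — while lines~\ref{aurv:line:9bis} and~\ref{aurv:line:10} are the interior/backtracking lines, correctly excluded. \emph{Type-3 block (lines~\ref{aurv:line:prevsearch}--\ref{aurv:line:search}).} Line~\ref{aurv:line:prevsearch} is \texttt{wait}; line~\ref{aurv:line:search} is \texttt{PlanarCowWalk}$(i)$, which we showed has net displacement zero, so the agent ends this block where it started it, i.e.\ at the initial position, and both lines are entered from there (neither is excluded). \emph{Type-4 block (lines~\ref{aurv:line:11}--\ref{aurv:line:13}).} Line~\ref{aurv:line:11} only names the decomposition $S_1\dots S_{2^{2i}}$ of a solo \texttt{CGKK} run, no move; line~\ref{aurv:line:12} executes $S_1\texttt{wait}(2^i)\dots S_{2^{2i}}\texttt{wait}(2^i)$, which produces a path $P'$ with the same endpoint as the $2^i$-time solo \texttt{CGKK} run; line~\ref{aurv:line:12bis} records $P'$; line~\ref{aurv:line:13} backtracks on $P'$, returning to the position at the start of line~\ref{aurv:line:12}, which is the initial position. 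Thus lines~\ref{aurv:line:11} and~\ref{aurv:line:12} are entered from the initial position (not excluded), lines~\ref{aurv:line:12bis} and~\ref{aurv:line:13} are the interior/backtracking lines (correctly excluded), and after line~\ref{aurv:line:13} the agent is back at its initial position. Therefore the agent enters phase $i+1$ (line $i\gets i+1$ then the top of the loop) at its initial position, completing the induction.

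The only genuinely delicate point, and the one I would state carefully, is the verification that \texttt{PlanarCowWalk}$(i)$ — hence \texttt{LinearCowWalk}$(i)$ — has net displacement exactly zero: this is the lone place where one must actually read Algorithms~\ref{pcw:alg} and~\ref{lcw:alg} and check that the eastward and westward (resp.\ northward and southward) distances cancel, including the final \verb|go (S, 2^i)| / \verb|go (N, 2^i)| corrections that undo the accumulated $2^{2i}\cdot\frac{1}{2^i}$ shift. Everything else is immediate from the facts that \texttt{wait} does not move, that the "record path" lines do not move, and that a backtrack on a recorded path $P$ returns the agent to the start of $P$. Since the excluded line numbers \ref{aurv:line:9bis}, \ref{aurv:line:10}, \ref{aurv:line:12bis}, \ref{aurv:line:13} are exactly the four lines that either record a just-traversed path or backtrack on it, they are precisely the lines from which the statement would fail, and the lemma follows.
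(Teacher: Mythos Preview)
Your proposal is correct and follows essentially the same approach as the paper's own proof: verify that \texttt{LinearCowWalk} and hence \texttt{PlanarCowWalk} have zero net displacement, observe that the backtrack lines undo the preceding recorded path, and then induct on the phase index to conclude that every non-excluded line is entered from the initial position. You spell out the cancellation in \texttt{PlanarCowWalk} (the $2^{2i}\cdot\frac{1}{2^i}=2^i$ shift undone by the final \texttt{go}$(S,2^i)$/\texttt{go}$(N,2^i)$) more explicitly than the paper does, but the structure and the key observations are the same.
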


\begin{proof}
In view of Algorithm~\ref{lcw:alg}, each execution by an agent of procedure {\tt LinearCowWalk} begins and ends at the same point. Thus, in view of Algorithm~\ref{pcw:alg}, each execution by an agent of procedure {\tt PlanarCowWalk} begins and ends at the same point as well. Moreover, during the execution of Algorithm {\tt AlmostUniversalRV}, each time an agent finishes processing line~\ref{aurv:line:10} (resp. line~\ref{aurv:line:13}) without having seen the other agent, it is at the point from which it started the previous processing of line~\ref{aurv:line:9} (resp. line~\ref{aurv:line:12}).

From the above explanations, it follows that within any given phase $i$ executed by an agent, each time it starts a line of Algorithm~\ref{aurv:alg} whose number is not \ref{aurv:line:9bis}, \ref{aurv:line:10}, \ref{aurv:line:12bis} or \ref{aurv:line:13}, it does so from the point at which it started phase $i$. It also follows that the execution of phase $i$ by an agent starts and ends at the same point if during this execution the agent does not see the other one.

As a result, we can prove by induction on $i$ that the lemma is true due to the fact that the execution by agent $A$ (resp. $B$) of phase $1$ starts from its initial position $(0,0)$ (resp. $(x,y)$).
\end{proof}


Now, we enter the core of the proof with the following lemma showing that Algorithm {\tt AlmostUniversalRV} can indeed deal with all instances of type~$1$.

	\begin{lemma}\label{lem:type1}
		Algorithm {\tt AlmostUniversalRV} guarantees rendezvous for all instances of type~$1$.
	\end{lemma}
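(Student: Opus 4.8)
The plan is to prove Lemma~\ref{lem:type1} by reducing an arbitrary type-1 instance to the favorable ``canonical'' situation sketched in Section~\ref{subsec:intuition}, where the agents sit (almost) on the canonical line $L$ with $x$-axes (almost) parallel to $L$, and then invoking the linear-search argument. Fix a type-1 instance ${\cal I}=\instance$, so it is synchronous, $\chi=-1$, and $t>\dist{proj_A}{proj_B}-r$. Set $d_0 = \dist{proj_A}{proj_B}$; note $\vec{u}$, the vector from $proj_A$ to $proj_B$, has length $d_0$ and, since $\chi=-1$, by Lemma~\ref{lem:chi-dif} the trajectory of $B$ up to any time $z$ (before rendezvous) is the image of the trajectory of $A$ up to time $z-t$ under the composition of the axial symmetry across $L$ with the shift by $\vec{u}$. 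The first step is to record, via Corollary~\ref{cor:chi-dif}, that the projections onto $L$ of the two agents' positions stay at the constant distance $d_0$ whenever the time parameters are offset by $t$; this is exactly the invariant that makes linear search along $L$ work in the idealized case, because after time $\lambda$ with $2^\lambda\ge t$ agent $A$ has already swept an interval of $L$ of half-length $2^\lambda\ge t\ge d_0+ (t-d_0) > d_0-r$ on each side, so the swept intervals of $A$ (shifted back by $t$) and of $B$ overlap enough to bring the agents within distance $r$ (here one checks the one-dimensional geometry: a point moving distance $t$ extra along $L$ closes the gap $d_0$ and the slack $t-d_0>-r$ guarantees an $\le r$ encounter, using that off-line displacements are $0$ in the idealized case).

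The second, main step is to upgrade this idealized argument to an arbitrary type-1 instance using the triple enumeration $(i,j,k)$ built into the \texttt{PlanarCowWalk} blocks of Algorithm~\ref{aurv:alg}. Agent $A$'s position is $(0,0)$ and its $x$-axis makes some angle with $L$; in the coordinate system \texttt{Rot}$(\frac{j\pi}{2^i})$ for suitable $j$ this rotated $x$-axis is within $\frac{\pi}{2^i}$ of the direction of $L$, and after the prescribed go-$(N,\frac{j'}{2^i})$/go-$(S,\ldots)$ displacements the agent's starting point for a \texttt{LinearCowWalk} lands within $\frac{1}{2^i}$ of its signed distance to $L$, hence within $\frac{1}{2^i}$ of $L$ itself for the right choice of the integer in the $y$-translation. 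I would make precise the notion of ``almost the canonical situation'': there is a threshold $\varepsilon=\varepsilon({\cal I})>0$, depending only on the slack $t-(d_0-r)>0$ and on $r$, such that if (a) both agents start a \texttt{LinearCowWalk} within $\varepsilon$ of $L$, (b) the search line of each makes angle $\le\varepsilon$ with $L$, (c) the searches last long enough (i.e.\ the step index $i$ satisfies $2^i\ge t$ and $2^i$ large), and (d) the ambient parameters satisfy $r\ge \frac{1}{2^i}$ and $t\le 2^i$, then the agents come within distance $r$ during these searches. The proof of this sub-claim is the geometric heart: one again uses Lemma~\ref{lem:chi-dif} with $\chi=-1$ to relate $B$'s perturbed trajectory to $A$'s, then tracks how the $\le\varepsilon$ angular error and $\le\varepsilon$ positional error propagate over a linear search of length $2^i$, bounding the resulting deviation by, say, $2^i\cdot\varepsilon$ in a direction transverse to $L$ and by a comparable amount along $L$; choosing $\varepsilon$ small relative to $(t-(d_0-r))/2^i$ keeps the encounter distance below $r$. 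Since the enumeration of $(i,j,k)$ is exhaustive and the conditions (a)--(d) are all eventually met simultaneously — the angular grid $\frac{k\pi}{2^i}$ and the $y$-offsets $\frac{j}{2^i}$ become arbitrarily fine while $2^i\to\infty$ — there is a phase $i^\star$ and an epoch within it at which $A$ performs the ``good'' \texttt{PlanarCowWalk}; by Lemma~\ref{lem:tech} both agents are back at their initial positions entering the relevant blocks, so the synchronization needed by Lemma~\ref{lem:chi-dif} is intact, and rendezvous occurs by the time $A$ finishes that block.

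Concretely I would organize the write-up as: (1) a paragraph fixing notation and the slack $\delta:=t-(d_0-r)>0$; (2) a sub-lemma stating and proving the one-dimensional linear-search encounter fact in the exact canonical situation, using Corollary~\ref{cor:chi-dif}; (3) a sub-lemma — the perturbation bound — quantifying $\varepsilon$ in terms of $\delta$, $r$ and the search length, proved via Lemma~\ref{lem:chi-dif} and elementary triangle/trigonometric estimates on how a rotation by $\le\varepsilon$ and a translation by $\le\varepsilon$ distort a horizontal sweep of length $2^i$; (4) a paragraph showing that for all sufficiently large $i$ the enumeration $(i,j,k)$ in lines~\ref{aurv:debutphase}--\ref{aurv:line:endfor} together with the internal $y$-offsets of \texttt{PlanarCowWalk} realize conditions (a)--(d) for some epoch, invoking Lemma~\ref{lem:tech} to guarantee clean starting positions; and (5) concluding that rendezvous is achieved by the end of that block in phase $i^\star$.

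The step I expect to be the main obstacle is sub-lemma (3), the perturbation estimate: one must argue that a bounded angular error and a bounded off-line displacement, accumulated over a linear sweep whose length $2^i$ grows with the phase, still leave the two (mirror-related, via $\chi=-1$) trajectories passing within $r$ — so $\varepsilon$ must be chosen after $i$ but before fixing the epoch, and one has to be careful that the enumeration order in Algorithm~\ref{aurv:alg} (for each $i$, loop $j$ over the grid) actually visits a $k$ fine enough for that $\varepsilon$ while $i$ is already large enough for conditions (c),(d); checking that these quantifiers line up, rather than chasing each other, is the delicate bookkeeping. Everything else — the one-dimensional encounter fact, the reduction to initial positions via Lemma~\ref{lem:tech}, and the use of $\chi=-1$ through Lemma~\ref{lem:chi-dif} — is routine.
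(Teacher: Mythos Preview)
Your plan is the same overall strategy as the paper's proof: pick a phase $i$ large enough, find an epoch $j$ so that the rotated $x$-axis is within $\pi/2^i$ of the canonical line $L$, use the internal $y$-offsets of \texttt{PlanarCowWalk} to land within $1/2^i$ of $L$, and then argue that the ensuing \texttt{LinearCowWalk} brings the two (mirror-related via Lemma~\ref{lem:chi-dif}) agents within $r$. You also correctly identify the sore point as the perturbation bound. But the way you have set up that bound contains a real gap, and it is precisely the one you flag as worrying.

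You propose to bound the transverse deviation by ``$2^i\cdot\varepsilon$'' over ``a linear search of length $2^i$''. In phase $i$ the angular grid has mesh $\pi/2^i$, so the best $\varepsilon$ you can obtain for condition~(b) is $\varepsilon\approx\pi/2^i$; your deviation bound is then $2^i\cdot\pi/2^i=\pi$, a constant, and no choice of $i$ rescues you. This is not a bookkeeping nuisance---with the estimate as stated the argument does not close. The paper's resolution, which your outline is missing, is that the encounter is not spread over the whole \texttt{LinearCowWalk}$(i)$ but is pinned to one specific step $k$, namely the smallest $k$ with $2^k\ge t+r+e+1$ (where $e=t-\dist{proj_A}{proj_B}+r$ is your slack $\delta$). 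This $k$ depends only on the instance, not on $i$. During that step the agent is never farther than $2^k$ from the start of the \texttt{LinearCowWalk}, so the transverse drift is at most $2^k\sin(\pi/2^i)$, which \emph{does} tend to $0$ as $i\to\infty$; the paper takes $i$ large enough that this is below $\min\{r,e\}/8$. Once you localize to step $k$, the quantifiers line up trivially: a single $i$, chosen from the instance data, simultaneously makes the angular grid, the $y$-grid, and the step index adequate.

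A second, smaller gap: your ``one-dimensional encounter fact'' (sub-lemma~(2)) is phrased in terms of swept intervals overlapping, but that is not what forces the agents within $r$. What matters is their positions \emph{at the same time}. The paper's argument tracks the interval $[\lambda+2^k,\lambda+t+2^k]$ during which $A$ is in its negative move of step $k$ while $B$ is in its positive move; using Corollary~\ref{cor:chi-dif} one shows either the two projections on $L$ coincide at some moment in that interval (and then the transverse bound gives distance $\le r/2$), or else the projection gap at the end of the interval is at most $r-e/2$ and a Pythagoras estimate finishes. Your write-up of sub-lemma~(2) will need this positive/negative-move synchronization rather than an interval-overlap statement.
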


	\begin{proof}
      Let ${\cal I}=\instance$ be an arbitrary instance of type $1$ and let $e=t-dist(proj_A,proj_B)+r$. Assume by contradiction that rendezvous does not occur by the time agent $B$ finishes executing the for loop of Algorithm~\ref{aurv:alg} during phase $i=\sigma+\omega$, where $\sigma$ and $\omega$ are defined as follows:


\begin{itemize}
\item $\sigma= \lceil \log (t + r + e + \sqrt{x^2+y^2} + \frac{8}{\min\{r,e\}} + \frac{\pi}{\arcsin{(\frac{\min\{r,e\}}{16(t+r+e+1)})}})\rceil$.
\item $\omega=\lceil\log(\frac{\pi}{\arccos{(\frac{dist(proj_A,proj_B)-r+\frac{e}{2}}{t})}})\rceil$ if $dist(proj_A,proj_B)-r+\frac{e}{2}>0$, $\omega=1$ otherwise.
\end{itemize}

In view of the definition of instance ${\cal I}$, we know that $t>dist(proj_A,proj_B)-r$. By the definition of $e$, we then have $e>0$ and {$dist(proj_A,proj_B)-r+\frac{e}{2}<t$}. Recall that $r>0$. Hence, we have $\frac{8}{\min\{r,e\}}>0$, $0<\frac{\min\{r,e\}}{16(t+r+e+1)}<1$ and $\arcsin{(\frac{\min\{r,e\}}{16(t+r+e+1)})}>0$. In the case where $dist(proj_A,proj_B)-r+\frac{e}{2}>0$, we also have $t>0$, {$0<\frac{dist(proj_A,proj_B)-r+\frac{e}{2}}{t}<1$} and $\arccos{(\frac{dist(proj_A,proj_B)-r+\frac{e}{2}}{t})}>0$. This means that $i$ is well-defined.

To proceed further, we need to introduce some notations and conventions.

Let $\Sigma$ be the system of coordinates resulting from the rotation of $\Gamma$ (the coordinate system of $A$) around its origin that ensures that the following two conditions are met: (1) the $x$-axis of $\Sigma$ is parallel to the canonical line $L$ of instance $\cal{I}$ and (2) $proj_A$ is East of $proj_B$ in $\Sigma$ or coincides with it. For ease of reading, the rest of this proof assumes the units of length and time of agent $A$ but assumes the coordinates system $\Sigma$, unless specified otherwise. Moreover, when in this proof we speak of an angle between two lines, it is always the smallest unoriented angle between them.


According to lines~\ref{aurv:debutphase} to~\ref{aurv:line:endfor} of Algorithm~\ref{aurv:alg}, there exists an integer $1\leq j\leq 2^{i+1}$ such that the system {\tt Rot}$(\frac{j\pi}{2^i})$ constructed by agent $A$ during epoch $j$ of phase $i$, which will be denoted by {\tt Rot}$_A(\frac{j\pi}{2^i})$, has the following two properties: $(1)$ its $x$-axis forms an angle $0\leq \alpha<\frac{\pi}{2^i}$ with the $x$-axis of $\Sigma$ (as well as with $L$), and $(2)$ the direction of its positive $x$-axis is between East (included) and North (excluded). Note that the North direction is naturally excluded in the second property because $i\geq2$ and thus $\alpha<\frac{\pi}{4}$. In the sequel, each time we refer to an epoch $j$, it is always the one of phase $i$, and thus we omit to mention it.

An example of the three coordinate systems $\Gamma$, $\Sigma$ and {\tt Rot}$_A(\frac{j\pi}{2^i})$ is depicted in Figure~\ref{fig:f2}.

\begin{figure}[httb!]
	\begin{center}
	\includegraphics[width=0.6\textwidth]{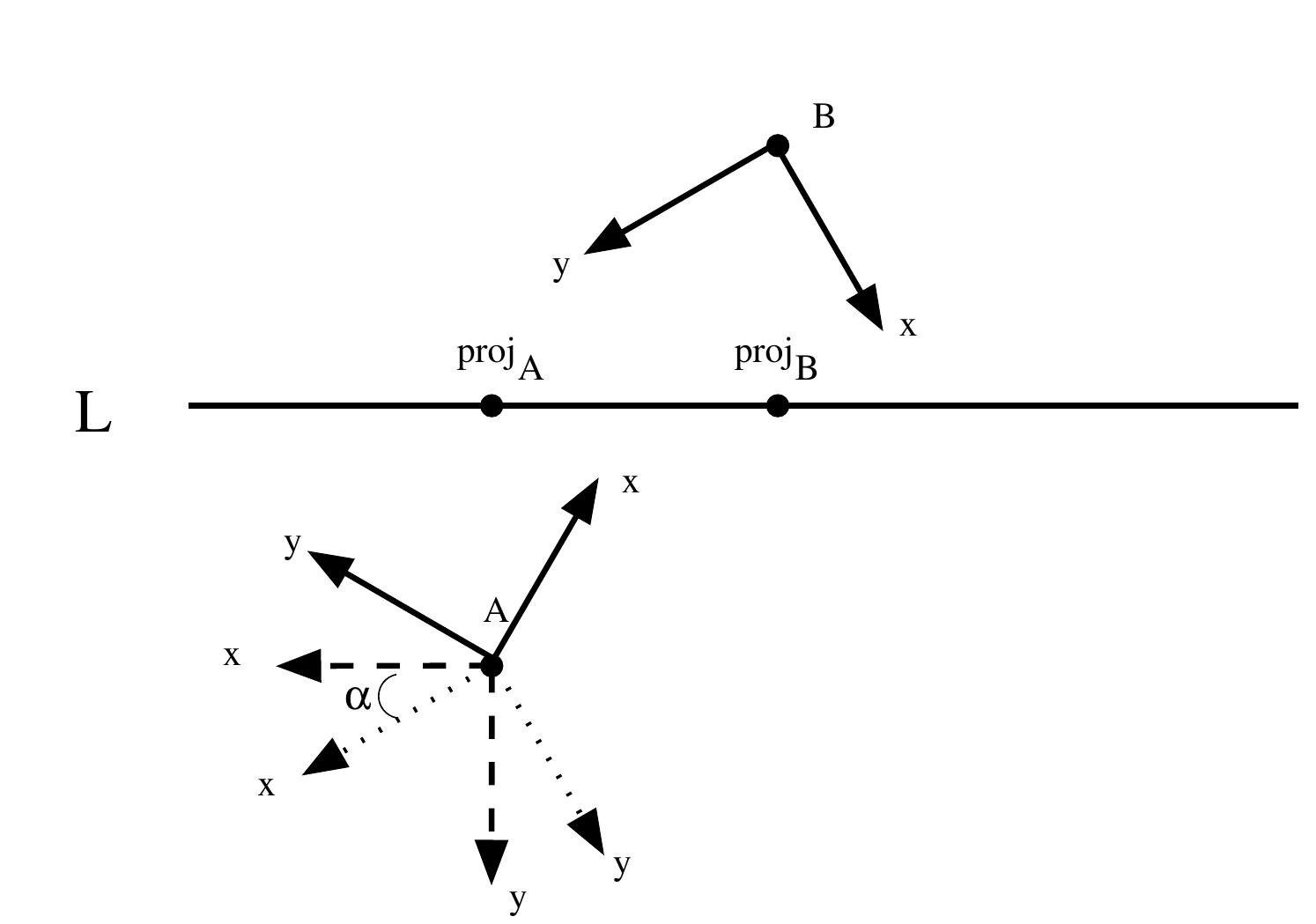}
	\caption{Illustration of the three coordinate systems $\Gamma$, $\Sigma$ and {\tt Rot}$_A(\frac{j\pi}{2^i})$ for an instance in which the chiralities of the agents $A$ and $B$ are different. The coordinate system $\Gamma$ is the one represented with the solid axes at the bottom of the figure, while $\Sigma$ (resp. {\tt Rot}$_A(\frac{j\pi}{2^i})$)  is the coordinate system represented with the dashed (resp. dotted) axes. As defined in the proof of Lemma~\ref{lem:type1}, the $x$-axis of $\Sigma$ (resp. {\tt Rot}$_A(\frac{j\pi}{2^i})$) forms an angle $0$ (resp. $\alpha$) with the canonical line $L$ of the instance.}
	\label{fig:f2}
	\end{center}
\end{figure}

Now we can start the substantial part of the proof with the following claim.


\begin{claim}\label{lem:type1:c1}
There is a time when agent $A$ starts executing procedure {\tt LinearCowWalk}$(i)$ during epoch $j$ from a point that is at a distance at most $\frac{\min\{r,e\}}{8}$ from line $L$.
\end{claim}

\begin{proofclaim}
In the proof of this claim, when we speak of some coordinates (resp. of the $x$-axis or $y$-axis), it is always implicit that they are expressed in (resp. they are those of) system {\tt Rot}$_A(\frac{j\pi}{2^i})$.

By Lemma~\ref{lem:tech}, agent $A$ is located at its initial position at the beginning of epoch $j$. By Definition~\ref{def:cano}, we know that the initial position of agent $A$ is at distance at most $\frac{\sqrt{x^2+y^2}}{2}$ from line $L$. By the definition of epoch $j$, we also know that $\alpha$ is the angle between $L$ and the $x$-axis. This means that the angle between the $y$-axis and the line perpendicular to $L$ passing through $proj_A$ is also $\alpha$ as depicted in Figure~\ref{fig:f3}. From these explanations, it follows that the distance between the initial position of agent $A$ and the intersection $o$ of the $y$-axis with $L$ is at most $\frac{\sqrt{x^2+y^2}}{2\cos(\alpha)}$. Since, in view of the definition of angle $\alpha$, we have $\cos(\frac{\pi}{4})<\cos(\alpha)\leq \cos(0)$, the distance between the initial position of agent $A$ and the intersection $o$ is at most $\sqrt{x^2+y^2}$, which is at most $2^i$ by the definition of $i$. In particular, the coordinates of point $o$ are $(0,y_o)$ where $y_o$ is some real belonging to $[-2^i,2^i]$.

By Algorithm~\ref{pcw:alg}, we know that for every integer $k\in\{-2^{2i},\ldots,2^{2i}\}$ agent $A$ executes procedure {\tt LinearCowWalk}$(i)$ from point $(0,\frac{k}{2^i})$ in epoch $j$. Moreover, by the definition of $i$, we have $\frac{1}{2^i}\leq\frac{\min\{r,e\}}{8}$. Hence, during epoch $j$, agent $A$ starts executing procedure {\tt LinearCowWalk}$(i)$ from a point $o'$ such that $dist(o,o')\leq\frac{\min\{r,e\}}{8}$, which concludes the proof of this claim.
\end{proofclaim}

\begin{figure}[httb!]
	\begin{center}
	\includegraphics[width=0.6\textwidth]{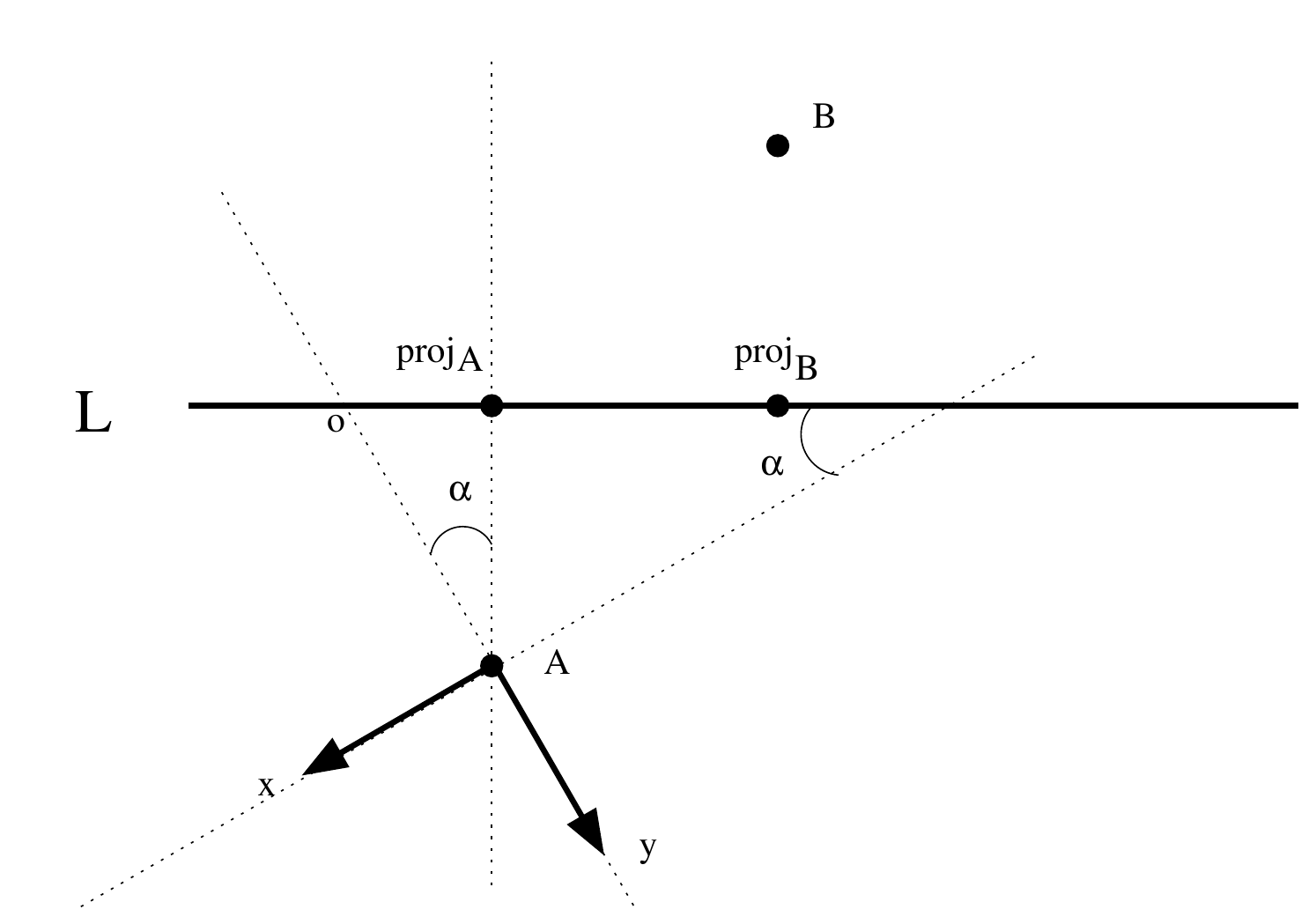}
	\caption{Illustration of some geometric arguments used in the proof of Claim~\ref{lem:type1:c1}}
	\label{fig:f3}
	\end{center}
\end{figure}

In the sequel, we denote by $s$ the first time when agent $A$ starts executing procedure {\tt LinearCowWalk}$(i)$ during epoch $j$ from a point that is at a distance at most $\frac{\min\{r,e\}}{8}$ from line $L$.

Let $L'$ be the line on which agent $A$ executes procedure {\tt LinearCowWalk}$(i)$ from time $s$ on. By the definition of epoch $j$, the angle between the $x$-axis of system {\tt Rot}$_A(\frac{j\pi}{2^i})$ and line $L$ is $\alpha$. Moreover, line $L'$ is parallel to the $x$-axis of {\tt Rot}$_A(\frac{j\pi}{2^i})$, in view of Algorithm~\ref{pcw:alg}. Hence we get the following claim:

\begin{claim}\label{lem:type1:c1bis}
The angle between line $L$ and line $L'$ is $\alpha$.
\end{claim}

\begin{claim}\label{lem:type1:c2}
Let $p$ and $q$ be two points separated by a distance $t$ and visited by agent $A$ during its execution of procedure {\tt LinearCowWalk}$(i)$ started at time $s$. The projections of $p$ and $q$ onto line $L$ are separated by a distance at least $dist(proj_A,proj_B)-r+\frac{e}{2}$.
\end{claim}

\begin{proofclaim}
If $dist(proj_A,proj_B)-r+\frac{e}{2}\leq0$ the lemma trivially holds. Hence, we assume in the proof of this claim that $dist(proj_A,proj_B)-r+\frac{e}{2}>0$.

Let $t_p$ (resp. $t_q$) be some time when agent $A$ occupies point $p$ (resp. $q$).

From Claim~\ref{lem:type1:c1bis}, we have the following equality

\begin{align}
\label{form:type1:1}
dist(proj_A(t_p),proj_A(t_q))= dist(p,q)\cos(\alpha)
\end{align}

By assumption, $dist(p,q)=t$. Moreover, by the definition of angle $\alpha$ and the fact that $i\geq 1$, we know that $0\leq\alpha<\frac{\pi}{2^i}\leq \frac{\pi}{2}$. Hence, from equality (\ref{form:type1:1}), we get

\begin{align}
\label{form:type1:2}
dist(proj_A(t_p),proj_A(t_q))\geq t \cos(\frac{\pi}{2^i})
\end{align}

By the definition of $i$, we also have the following inequality

\begin{align}
\label{form:type1:3}
i \geq \log(  \frac{\pi}   {      \arccos{(\frac{dist(proj_A,proj_B)-r+\frac{e}{2}}{t})}       }    )
\end{align}

The above inequality implies the one below

\begin{align}
\label{form:type1:4}
\frac{\pi}{2^i}\leq \arccos{(\frac{dist(proj_A,proj_B)-r+\frac{e}{2}}{t})}
\end{align}

From inequalities (\ref{form:type1:2}) and (\ref{form:type1:4}), it follows that $dist(proj_A(t_p),proj_A(t_q))\geq dist(proj_A,proj_B)-r+\frac{e}{2}$, which proves the claim.
\end{proofclaim}

Let $p_A$ be the position occupied by agent $A$ at time $s$. In view of Lemma~\ref{lem:tech} as well as Algorithms~\ref{aurv:alg} and~\ref{pcw:alg}, we know that the coordinates of $p_A$ are $(0,u)$ for some real $u$ in system {\tt Rot}$_A(\frac{j\pi}{2^i})$. Let $k$ be the smallest positive integer such that $t+r+e+1\leq2^k$ and let $\lambda$ be the time $s+\sum_{z=1}^{k-1}2^{z+2}$. In view of Algorithm~\ref{lcw:alg} and the fact that $i\geq k$, we know that in system {\tt Rot}$_A(\frac{j\pi}{2^i})$, agent $A$ moves East from point $p_A$ to a point $p'_A$ with coordinates $(2^k,u)$ during the time interval $[\lambda,\lambda+2^k]$. Then, during the time interval $[\lambda+2^k,\lambda+2^{k+1}]$, agent $A$ goes back to point $p_A$.

Note that, by definition, agent $A$ starts time $t$ ahead of agent $B$. Moreover, both agents execute Algorithm~\ref{aurv:alg} that is deterministic, and instance ${\cal I}$ is synchronous. Hence, we know that, in the system {\tt Rot}$(\frac{j\pi}{2^i})$ considered by agent $B$ in epoch $j$, agent $B$ moves East from a point $p_B$ with coordinates $(0,u)$ to a point $p'_B$ with coordinates $(2^k,u)$ during the time interval $[\lambda+t,\lambda+t+2^k]$. Then, during the time interval $[\lambda+t+2^k,\lambda+t+2^{k+1}]$, agent $B$ goes back to point $p_B$.

The move made by agent $A$ (resp. $B$) during the time interval $[\lambda,\lambda+2^k]$ (resp. $[\lambda+t,\lambda+t+2^k]$)  will be called its positive move, while the move made by agent $A$ (resp. $B$) during the time interval $[\lambda+2^k,\lambda+2^{k+1}]$ (resp. $[\lambda+t+2^k,\lambda+t+2^{k+1}]$) will be called its negative move.

\begin{claim}\label{lem:type1:c3}
During its positive and negative moves, agent $A$ as well as agent $B$ are always at distance at most $\frac{\min\{r,e\}}{4}$ from line $L$.
\end{claim}

\begin{proofclaim}
In the light of Lemma~\ref{lem:chi-dif} and the fact that the positive and negative moves of a given agent take place along the same segment, it is enough to show that the claim is true during the positive move of agent $A$.

According to the definitions of points $p_A$ and $p'_A$, these points are respectively the starting point and the endpoint point of the positive move of agent $A$ which occurs from time $\lambda$ to time $\lambda+2^k$. It follows that $proj_A(\lambda)$ (resp. $proj_A(\lambda+2^k)$) is the orthogonal projection of $p_A$ (resp. $p'_A$) on line $L$. By Claim~\ref{lem:type1:c1}, we have $dist(p_A,proj_A(\lambda))\leq\frac{\min\{r,e\}}{8}$. Moreover, by Claim~\ref{lem:type1:c1bis}, the angle between line $L$ and line $L'$, which passes through the segment $[p_A,p'_A]$, is $\alpha$. Hence, we can state that

\begin{align}
\label{form:type1:5}
dist(p'_A,proj_A(\lambda+2^k))\leq\frac{\min\{r,e\}}{8}+dist(p_A,p'_A)\sin(\alpha)
\end{align}

By the definition of a positive move, we know that $dist(p_A,p'_A)\leq 2^k$. By the definition of angle $\alpha$ and the fact that $i\geq1$, we also know that $0\leq\alpha<\frac{\pi}{2^i}\leq\frac{\pi}{2}$. From (\ref{form:type1:5}), we get the following inequality

\begin{align}
\label{form:type1:6}
dist(p'_A,proj_A(\lambda+2^k))\leq\frac{\min\{r,e\}}{8}+2^k\sin(\frac{\pi}{2^i})
\end{align}

By the definition of $i$, we also have the following inequality

\begin{align}
\label{form:type1:7}
i \geq \log(  \frac{\pi}   {     \arcsin{(\frac{\min\{r,e\}}{16(t+r+e+1)})}       }    )
\end{align}

This implies

\begin{align}
\label{form:type1:8}
\frac{\pi}{2^i}\leq  \arcsin{(\frac{\min\{r,e\}}{16(t+r+e+1)})}
\end{align}

By the definition of integer $k$, we know that $2^k\leq 2(t+r+e+1)$. As a result, from (\ref{form:type1:6}) and (\ref{form:type1:8}), we obtain

\begin{align}
\label{form:type1:9}
dist(p'_A,proj_A(\lambda+2^k))\leq\frac{\min\{r,e\}}{8}+2(t+r+e+1)\frac{\min\{r,e\}}{16(t+r+e+1)}\leq \frac{\min\{r,e\}}{4}
\end{align}

From Claim~\ref{lem:type1:c1} and formula (\ref{form:type1:9}), we know that each point of the segment $[p_A,p'_A]$ is at distance at most $\frac{\min\{r,e\}}{4}$ from line $L$, which concludes the proof of the claim.
\end{proofclaim}

\begin{claim}\label{lem:type1:c4}
The distance between $proj_A(\lambda+2^k)$ and $proj_B(\lambda+t+2^k)$ is $dist(proj_A,proj_B)$. Moreover, $proj_A(\lambda+2^k)$ is not West of $proj_B(\lambda+t+2^k)$.
\end{claim}

\begin{proofclaim}
By the definition of system $\Sigma$, we know that $proj_A$ is not West of $proj_B$. Thus, by Lemma~\ref{lem:chi-dif}, we know that $proj_A(\lambda+2^k)$ is not West of $proj_B(\lambda+t+2^k)$. Moreover, by Corollary~\ref{cor:chi-dif}, we have $dist(proj_A(\lambda+2^k),proj_B(\lambda+t+2^k))=dist(proj_A,proj_B)$, which proves the claim.
\end{proofclaim}

\begin{claim}\label{lem:type1:c5}
$proj_A(\lambda+2^k)$ (resp. $proj_B(\lambda+2^k)$) is not West (resp. not East) of $proj_A(\lambda+t+2^k)$ (resp. $proj_B(\lambda+t+2^k)$) and the distance between these two projections is at least $dist(proj_A,proj_B)-r+\frac{e}{2}$.
\end{claim}

\begin{proofclaim}
In view of Claim~\ref{lem:type1:c2} and the definition of the negative and positive moves, we know that the distance between $proj_A(\lambda-t+2^k)$ and $proj_A(\lambda+2^k)$ as well as the distance between $proj_A(\lambda+2^k)$ and $proj_A(\lambda+t+2^k)$ is at least $dist(proj_A,proj_B)-r+\frac{e}{2}$. By the definition of the negative and positive moves and the definition of system {\tt Rot}$_A(\frac{j\pi}{2^i})$, we know that  $proj_A(\lambda-t+2^k)$ is not East of $proj_A(\lambda+2^k)$ and $proj_A(\lambda+2^k)$ is not West of $proj_A(\lambda+t+2^k)$. From these explanations and Lemma~\ref{lem:chi-dif}, it also follows that $proj_B(\lambda+2^k)$ is not East of $proj_B(\lambda+t+2^k)$ and the distance between both these projections is at least $dist(proj_A,proj_B)-r+\frac{e}{2}$. This concludes the proof of the claim.
\end{proofclaim}

\begin{figure}[!htbp]
\begin{center}
  \begin{minipage}[t]{0.49\linewidth}
    \centering
	\includegraphics[width=1\textwidth]{./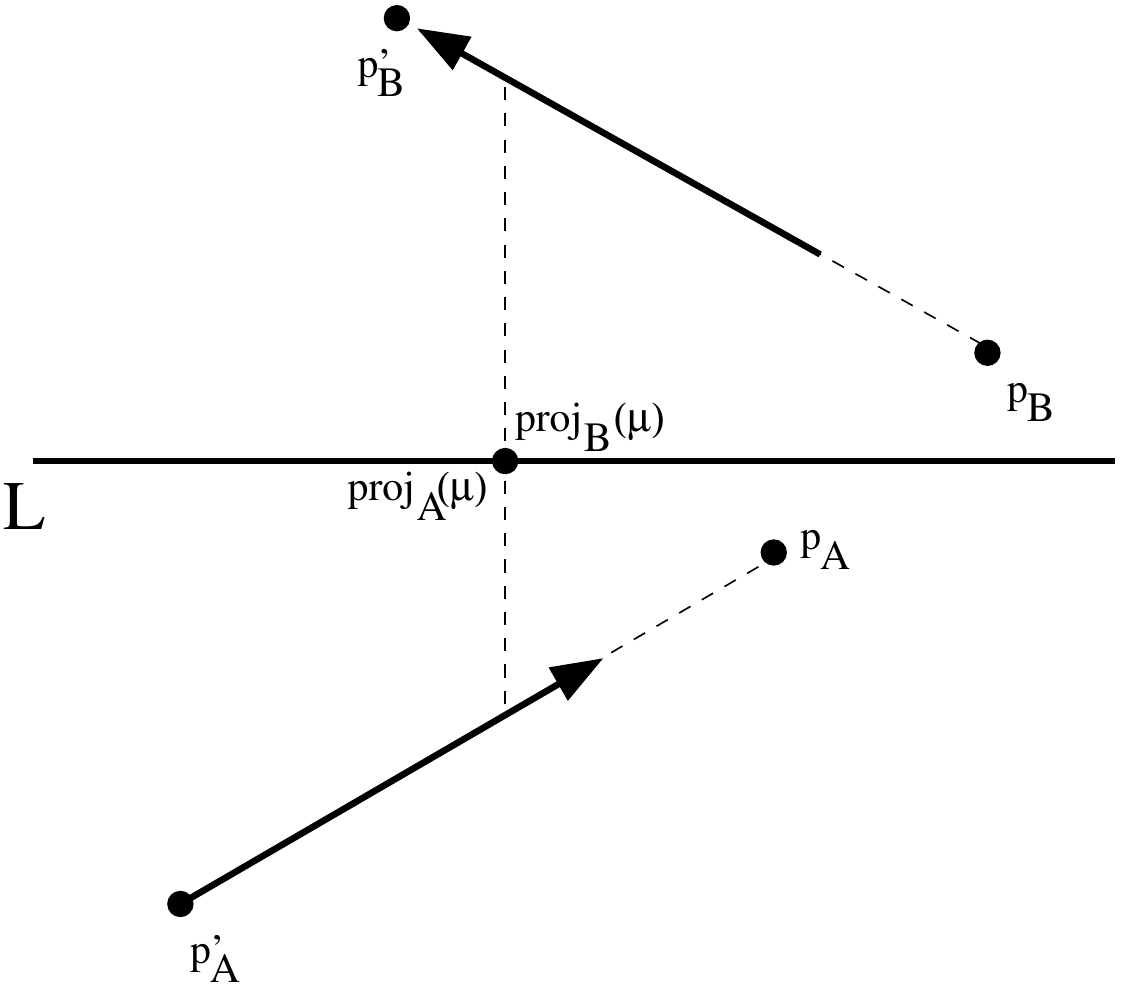}\\
    {\footnotesize ($a$)}
  \end{minipage}
  \begin{minipage}[t]{0.49\linewidth}
    \centering
	\includegraphics[width=1\textwidth]{./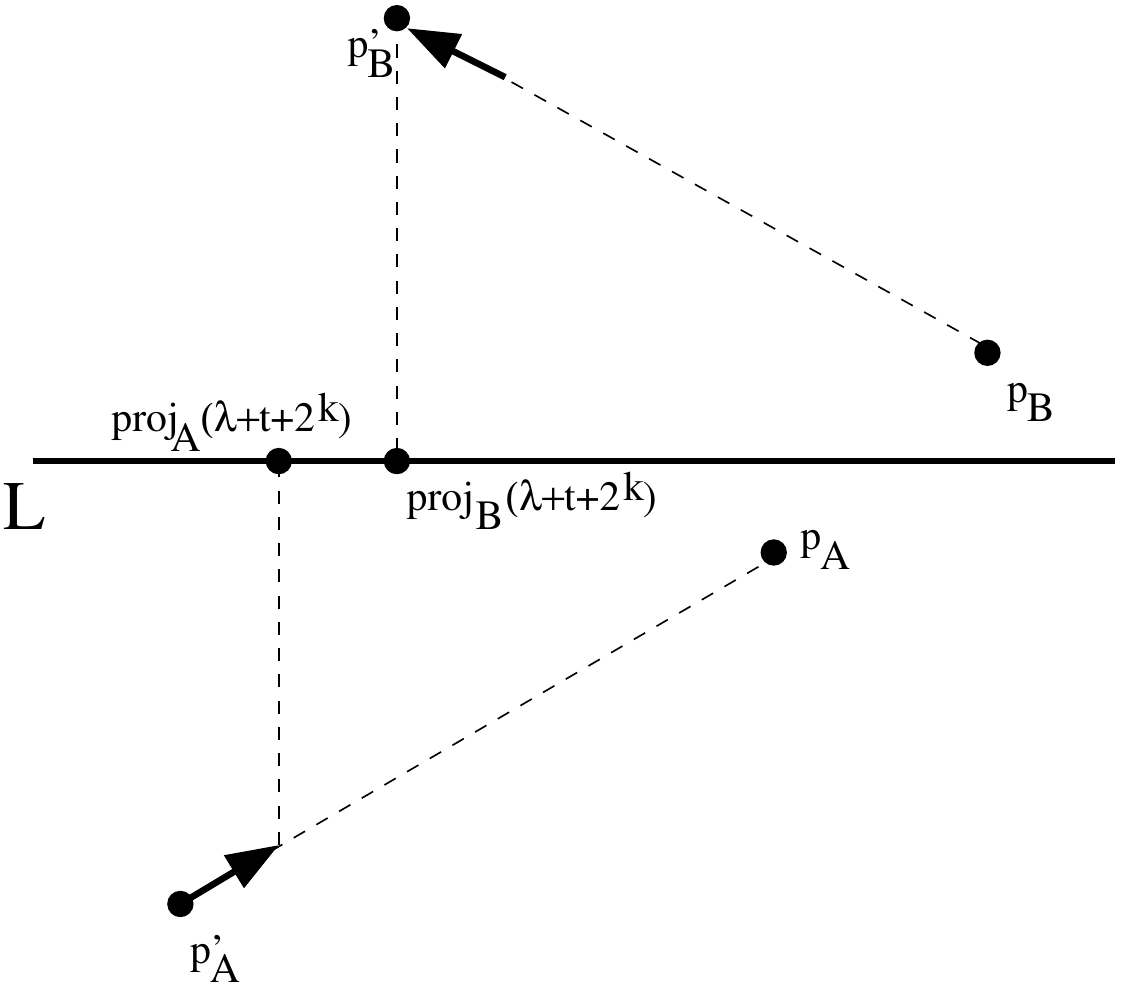}\\
    {\footnotesize ($b$)}
  \end{minipage}
\end{center}
 \caption{Examples of the two cases addressed at the end of the proof of Lemma~\ref{lem:type1}. The figure ($a$) corresponds to the case where the projections of $A$ and $B$ are the same at some point during the negative move of $A$, while the figure ($b$) corresponds to the complementary case. In each of these two figures, the arrow from $p'_A$ to $p_A$ (resp. $p_B$ to $p'_B$) represents the part of the negative move (resp. positive move) made by agent $A$ (resp. agent $B$) in the interval $[\lambda+2^k,\lambda+t+2^k]$.}
\label{fig4}
\end{figure}
Now, we want to argue that at some point during its negative move, agent $A$ is at a distance at most $r$ of agent $B$. There are two cases to consider.

First suppose that there is a time $\mu$ in the interval $[\lambda+2^k,\lambda+t+2^k]$ such that $proj_A(\mu)=proj_B(\mu)$. For example, this can occur when the projections of $A$ and $B$ cross each other at time $\mu$ as depicted in Figure~\ref{fig4}($a$). Since from time $\lambda+2^k$ to  time $\lambda+t+2^k$, agent $A$ (resp. $B$) executes a part of its negative (resp. positive) move, Claim~\ref{lem:type1:c3} implies that at time $\mu$ the agents are at distance at most $\frac{r}{4}$ of $proj_A(\mu)=proj_B(\mu)$ and thus at distance at most $\frac{r}{2}$ of each other.

{Now suppose that for every time $\mu$ in the interval $[\lambda+2^k,\lambda+t+2^k]$, we have $proj_A(\mu)\ne proj_B(\mu)$. For example, this can occur when the distance between the projections of $A$ and $B$ continuously decreases during the interval $[\lambda+2^k,\lambda+t+2^k]$ without reaching $0$, as depicted in Figure~\ref{fig4}($b$). Since the projections of $A$ and $B$ never coincide with each other during the interval $[\lambda+2^k,\lambda+t+2^k]$, Claims~\ref{lem:type1:c4} and~\ref{lem:type1:c5} imply that $dist(proj_A(\lambda+t+2^k),proj_B(\lambda+t+2^k))=dist(proj_A(\lambda+2^k),proj_B(\lambda+t+2^k))-\delta=dist(proj_A,proj_B)-\delta$, where $\delta$ is the distance traveled by the projection of $A$ during the interval $[\lambda+2^k,\lambda+t+2^k]$, which respects the double inequality $dist(proj_A,proj_B)-r+\frac{e}{2}\leq\delta<dist(proj_A,proj_B)$.}

{This means that $dist(proj_A(\lambda+t+2^k),proj_B(\lambda+t+2^k))$ is at most equal to $dist(proj_A,proj_B)-(dist(proj_A,proj_B)-r+\frac{\epsilon}{2})=r-\frac{\epsilon}{2}$. Note that, since for every time $\mu$ in the interval $[\lambda+2^k, \lambda+t+2^k]$, we have $proj_A(\mu)\neq proj_B(\mu)$, we can state that $dist(proj_A(\lambda+t+2^k),proj_B(\lambda+t+2^k))$ is positive and thus $r-\frac{\epsilon}{2}$ is positive. This implies  $r>\frac{\epsilon}{2}$.} Since from time $\lambda+2^k$ to  time $\lambda+t+2^k$, agent $A$ (resp. $B$) executes a part of its negative (resp. positive) move, Claim~\ref{lem:type1:c3} implies that at time $\lambda+t+2^k$ each agent is at distance at most $\frac{e}{4}$ from its projection onto line $L$. Hence, by Pythagoras' theorem, it follows that the distance between the agents at time $\lambda+t+2^k$, is at most $\sqrt{(\frac{e}{2})^2+(r-\frac{e}{2})^2}=\sqrt{\frac{e^2}{4}+r^2-2r\frac{e}{2}+\frac{e^2}{4}}$. Since $r>\frac{e}{2}$, we know that $\sqrt{\frac{e^2}{4}+r^2-2r\frac{e}{2}+\frac{e^2}{4}}\leq\sqrt{\frac{e^2}{4}+r^2-2\frac{e^2}{4}+\frac{e^2}{4}} \leq r$. Hence, the distance between the agents at time $\lambda+t+2^k$ is at most $r$.

Consequently, in all cases, agents $A$ and $B$ are at distance at most $r$, {and rendezvous is achieved in view of line~\ref{aurv:stop} of Algorithm~\ref{aurv:alg}, by the time agent $B$ completes the execution of the for loop of phase $i$}: this is a contradiction and proves the lemma.
\end{proof}

The following lemma shows that Algorithm {\tt AlmostUniversalRV} can handle every instance of type~$2$.

	\begin{lemma}\label{lem:type2}
		Algorithm {\tt AlmostUniversalRV} guarantees rendezvous for all instances of type $2$.
	\end{lemma}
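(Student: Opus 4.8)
The plan is to reduce the claim for type-2 instances directly to the correctness of procedure {\tt Latecomers}, which by the preliminaries guarantees rendezvous for all synchronous instances with $\tau=v=1$, $\phi=0$, $\chi=1$ and $t>\dist{(0,0)}{(x,y)}-r$ --- exactly the defining conditions of type~2. The difficulty is that in Algorithm~{\tt AlmostUniversalRV} procedure {\tt Latecomers} is not run monolithically: in each phase $i$ the agent executes {\tt Latecomers} for only time $2^i$ (line~\ref{aurv:line:9}), then backtracks to its initial position (lines~\ref{aurv:line:9bis}--\ref{aurv:line:10}), and only resumes (from scratch, from the initial position, again for time $2^i$) in phase $i+1$. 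So I must argue that, for a suitable phase, the prefix of {\tt Latecomers} actually executed is long enough to produce rendezvous, and that the interleaved ``parasitic'' moves from the other blocks (type~1, type~3, type~4) and the periodic restarts do not prevent it.

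First I would fix an arbitrary type-2 instance ${\cal I}=\instance$ and let $T$ be the (absolute) time by which the original (uninterrupted) procedure {\tt Latecomers}, started simultaneously-with-delay exactly as in ${\cal I}$, guarantees rendezvous; such a finite $T$ exists by the correctness of {\tt Latecomers} on this instance, since ${\cal I}$ satisfies $t>\dist{(0,0)}{(x,y)}-r$. The key observation is that, by Lemma~\ref{lem:tech}, each time an agent starts line~\ref{aurv:line:9} in any phase $i$ it does so from its true initial position ($A$ from $(0,0)$, $B$ from $(x,y)$), with its local clock and coordinate system intact; hence the execution of line~\ref{aurv:line:9} in phase $i$ reproduces \emph{exactly} the first $2^i$ time units (of that agent) of a genuine solo run of {\tt Latecomers} from the correct starting configuration. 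Because the agents are synchronous ($\tau=v=1$) and the delay between their wake-ups is $t$, if both agents are simultaneously in the ``phase-$i$ line-\ref{aurv:line:9} segment'' for long enough, their joint trajectory coincides with that of the real {\tt Latecomers} protocol run with delay $t$.

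Thus I would choose $i$ large enough that $2^i \ge T'$, where $T'$ is $T$ measured from the start of agent $B$'s execution of line~\ref{aurv:line:9} in phase $i$ (one must account for the fact that $A$ starts that line earlier than $B$, and that before reaching line~\ref{aurv:line:9} in phase $i$ each agent has spent time on the type-1 block of phase $i$; all of these are finite, explicitly computable quantities, growing like $2^{O(i^2)}$ because of line~\ref{aurv:line:prevsearch}, but that only forces $i$ to be slightly larger). Concretely: let phase $i$ be the smallest phase for which $2^i$ exceeds the total time agent $A$ needs from the start of its line-\ref{aurv:line:9} execution in phase $i$ until time $T$ of the underlying {\tt Latecomers} run, plus the delay $t$; such $i$ exists since $2^i\to\infty$. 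During phase $i$, from the moment $B$ begins line~\ref{aurv:line:9} until rendezvous would occur in the uninterrupted run, agent $A$ is still inside its own line-\ref{aurv:line:9} execution of phase $i$ (that is what the choice of $i$ buys us: $A$'s allotted $2^i$ time units have not yet run out, and $A$ has not yet reached the backtracking line~\ref{aurv:line:10}), so both agents are running the genuine {\tt Latecomers} moves from their genuine starting positions with the genuine delay. By the correctness of {\tt Latecomers}, the agents come within distance $r$ by that time; by line~\ref{aurv:stop} of Algorithm~\ref{aurv:alg} the execution then stops forever and rendezvous is achieved.

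The main obstacle, and the part that needs care rather than cleverness, is the bookkeeping that guarantees ``$A$ has not yet left its phase-$i$ line-\ref{aurv:line:9} segment by the time $B$'s segment would have finished the job.'' This requires: (i) an upper bound on the absolute time each agent spends, within phase $i$, before reaching line~\ref{aurv:line:9} (dominated by the type-1 for-loop of phase $i$, whose duration is a finite function of $i$); (ii) the observation that the delay $t$ is fixed and $A$ is ahead of $B$, so $A$'s phase-$i$ line-\ref{aurv:line:9} window \emph{starts} earlier than $B$'s and \emph{ends} at most $t$ before $B$'s would --- hence choosing $2^i$ to exceed the needed running time of {\tt Latecomers} plus $t$ makes both windows simultaneously active for the whole run; and (iii) invoking Lemma~\ref{lem:tech} to certify that the configuration at the start of each agent's phase-$i$ line-\ref{aurv:line:9} window is the true initial configuration, so that no drift has accumulated from the parasitic type-1/3/4 moves or from the previous phases' restarts. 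Once these three points are nailed down, the conclusion is immediate from the stated guarantee of {\tt Latecomers}, and the contradiction-free argument gives the lemma.
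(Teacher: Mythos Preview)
Your approach is essentially the paper's: pick a phase $i$ large enough, use Lemma~\ref{lem:tech} to certify that each agent starts line~\ref{aurv:line:9} from its true initial position, and argue that the resulting execution reproduces the genuine {\tt Latecomers} run with delay $t$. The paper does exactly this, by contradiction, with the explicit choice $i=\lceil\log(t+\Delta)\rceil$, where $\Delta$ is the rendezvous time of the uninterrupted {\tt Latecomers} execution.

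There is one genuine omission in your argument. You reason only about the overlap interval ``from the moment $B$ begins line~\ref{aurv:line:9}'' onward, and you never say what $B$ is doing during the first $t$ time units of $A$'s line-\ref{aurv:line:9} execution. For the simulation to be faithful to the real {\tt Latecomers} protocol (where the later agent is idle at its initial position until time $t$), you need $B$ to be stationary at $(x,y)$ throughout that interval $[s,s+t]$. This is exactly the job of line~\ref{aurv:line:8} (\texttt{wait}$(2^i)$), which you never mention: since $2^i\ge t$ and both agents are synchronous with offset $t$, agent $B$ is still inside its line-\ref{aurv:line:8} wait (hence at $(x,y)$ by Lemma~\ref{lem:tech}) for the entire interval $[s,s+t]$. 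The paper's proof invokes this explicitly. Without it your argument fails when $\Delta<t$, i.e.\ when the uninterrupted {\tt Latecomers} achieves rendezvous while $B$ is still dormant: then the meeting you are trying to replay happens \emph{before} your overlap window even begins, and nothing you wrote covers that case.

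A secondary point: your detour through $2^{O(i^2)}$ timing (citing line~\ref{aurv:line:prevsearch}) is unnecessary. Because the instance is synchronous, the gap between the two agents reaching any fixed line of the algorithm is always exactly $t$, irrespective of how long earlier phases took; only the inequality $2^i\ge t+\Delta$ matters, and the paper's choice $i=\lceil\log(t+\Delta)\rceil$ gives it directly.
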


	\begin{proof}
		Fix an arbitrary instance ${\cal I}=\instance$ of type $2$, and consider the execution $EX_1$ of procedure {\tt Latecomers} for ${\cal I}$. According to the properties of this procedure (cf. Section~\ref{sec:prelim}), $EX_1$ lasts a finite time $\Delta$, at the end of which rendezvous is achieved. Now, consider the execution $EX_2$ of Algorithm {\tt AlmostUniversalRV} for ${\cal I}$. Suppose by contradiction that rendezvous never occurs in $EX_2$ and denote by $s$ the time when agent $A$ starts executing line~\ref{aurv:line:9} of Algorithm~\ref{aurv:alg} during phase $\lceil \log (t + \Delta)\rceil$. Note that, during this phase, processing lines~\ref{aurv:line:8} and~\ref{aurv:line:9} of Algorithm~\ref{aurv:alg}  consists, in the local system of each agent, respectively in waiting a time at least $t$ and in executing procedure {\tt Latecomers} during a time at least $\Delta$. However, the units of time of both agents are identical because ${\cal I}$, which is an instance of type $2$, is by definition synchronous. Therefore, since rendezvous never occurs in $EX_2$, we know that agent $B$ waits during the time interval $[s,s+t]$ (in which it processes line~\ref{aurv:line:8} of Algorithm~\ref{aurv:alg}), starts executing procedure {\tt Latecomers} at time $s+t$ and keeps running it till at least time $s+t+\Delta$. We also know that agent $A$ starts executing procedure {\tt Latecomers} at time $s$ and keeps running it till at least time $s+\Delta$.
Moreover, Lemma~\ref{lem:tech} implies that the position occupied by agent $A$ (resp. $B$) in $EX_2$ at time $s$ is the same as the initial position of agent $A$ (resp. $B$) in $EX_1$.

From the above arguments, it follows that for each time $z\in[s,s+\Delta]$, the position occupied by agent $A$ (resp. agent $B$) in $EX_2$ is the same as the one occupied by agent $A$ (resp. agent $B$) at time $z-s$ in $EX_1$. In particular, the position occupied by agent $A$ (resp. agent $B$) at time $s+\Delta$ in $EX_2$ is the same as the one occupied by agent $A$ (resp. agent $B$) at time $s+\Delta-s=\Delta$ in $EX_1$. Note that, the positions occupied by the agents at time $s+\Delta$ in $EX_2$ are necessarily separated by a distance greater than $r$ as otherwise by Algorithm~\ref{aurv:alg} rendezvous occurs at this time during this execution. As a result, agents $A$ and $B$ are separated by a distance greater than $r$ at time $\Delta$ in $EX_1$. This implies, that rendezvous does not occurs at time $\Delta$ in $EX_1$, which is a contradiction and proves the lemma.
	\end{proof}


We continue with the next lemma that concerns the instances of type~$3$.

	\begin{lemma}\label{lem:type3}
		Algorithm {\tt AlmostUniversalRV} guarantees rendezvous for all instances of type~$3$.
	\end{lemma}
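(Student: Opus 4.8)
The plan is to argue by contradiction, exploiting the two features that single out type~$3$: the agents have different time units, and line~\ref{aurv:line:prevsearch} makes each agent wait an enormous number $2^{15i^2}$ of \emph{local} time units before the planar search of line~\ref{aurv:line:search}. The goal is to show that, for large enough $i$, the agent $X$ with the smaller time unit executes {\tt PlanarCowWalk}$(i)$ entirely while the other agent $Y$ is still waiting idle at its starting point during \emph{its} line~\ref{aurv:line:prevsearch}, so that $X$ sweeps a disk around its start that is large enough to contain $Y$'s start.

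First I would fix a type-$3$ instance ${\cal I}=\instance$ (so $\tau\ne 1$), let $X$ be the agent with the strictly smaller time unit and $Y$ the other, write $\tau_X<\tau_Y$ (one of these equals $1$), let $t_X^{wake},t_Y^{wake}$ be their absolute wake-up times (so $|t_X^{wake}-t_Y^{wake}|=t$), let $u_X$ be the absolute length unit of $X$, and let $d=\dist{(0,0)}{(x,y)}$. Assume rendezvous never occurs. Since both agents run the same deterministic algorithm and line~\ref{aurv:stop} never triggers, the number $N(i)$ of local time units from wake-up to the start of line~\ref{aurv:line:search} of phase $i$ is the same for both, as are the number $M(i)=N(i)-2^{15i^2}$ of local time units up to the start of line~\ref{aurv:line:prevsearch} of phase $i$ and the number $n_i$ of local time units of one execution of {\tt PlanarCowWalk}$(i)$ (each $\mathtt{go}$ of local length $\ell$ costs $\ell$ local time units, each $\mathtt{wait}(z)$ costs $z$). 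Inspecting Algorithms~\ref{aurv:alg}--\ref{lcw:alg} gives the crude bounds $n_i\le 2^{O(i)}$ and, since every earlier phase $k$ and the part of phase $i$ before line~\ref{aurv:line:prevsearch} cost at most $2^{15k^2}\cdot 2^{O(k)}$ and the $2^{15i^2}$ wait of phase~$i$ is by definition excluded from $M(i)$, the bound $M(i)\le 2^{15(i-1)^2}\cdot 2^{O(i)}$, whose exponent is $15i^2-30i+O(i)$, hence negligible against $15i^2$.

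Second, the timing core. In absolute time, $Y$ sits motionless at its initial position $q$ throughout $J_i:=[\,t_Y^{wake}+M(i)\tau_Y,\ t_Y^{wake}+N(i)\tau_Y\,]$ (it is there by Lemma~\ref{lem:tech} and does not move during the line~\ref{aurv:line:prevsearch} wait), while $X$ executes {\tt PlanarCowWalk}$(i)$ of line~\ref{aurv:line:search} from its initial position $p$ during $I_i:=[\,t_X^{wake}+N(i)\tau_X,\ t_X^{wake}+(N(i)+n_i)\tau_X\,]$ (again Lemma~\ref{lem:tech}). I would then prove $I_i\subseteq J_i$ for all large $i$: the right-endpoint inequality reduces to $n_i\tau_X+(t_X^{wake}-t_Y^{wake})\le N(i)(\tau_Y-\tau_X)$ and the left-endpoint inequality to $M(i)\tau_Y-N(i)\tau_X\le t_X^{wake}-t_Y^{wake}$. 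Since $\tau_Y-\tau_X=|1-\tau|$ is a fixed positive constant, $N(i)\ge 2^{15i^2}$, $M(i)\le 2^{15(i-1)^2}\cdot 2^{O(i)}$, $n_i\le 2^{O(i)}$ and $|t_X^{wake}-t_Y^{wake}|=t$, both inequalities hold once $i$ exceeds a threshold $i_0$ depending only on $\tau$ and $t$: the right one because $2^{15i^2}|1-\tau|$ dwarfs $t+2^{O(i)}$, the left one because $N(i)\tau_X\ge 2^{15i^2}\tau_X$ dwarfs $M(i)\tau_Y+t$.

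Third, the geometric conclusion. For $i\ge i_0$, agent $Y$ is at $q$ for the whole of $X$'s execution of {\tt PlanarCowWalk}$(i)$; and by the covering property of {\tt PlanarCowWalk} recalled in Section~\ref{subsec:alg}, once $i$ is also large enough that $2^i u_X\ge d$ and $u_X/2^i\le r$ (a further finite threshold $i_1$ depending on $r,d,u_X$), executing {\tt PlanarCowWalk}$(i)$ from $p$ brings $X$ within absolute distance $r$ of every point at distance $\le d$ from $p$, in particular within $r$ of $q$ since $\dist{p}{q}=d$. Hence at some moment during phase $\max\{i_0,i_1\}$ the agents are within $r$ of each other, so by line~\ref{aurv:stop} rendezvous occurs, contradicting the hypothesis. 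The main obstacle is the bookkeeping of the second step: one must verify carefully that the $2^{15i^2}$ waits genuinely dominate the cumulative cost of all earlier phases and of the rest of phase $i$ (so that the \emph{quadratic} growth of the exponent, with constant $15$, makes $M(i)$ negligible against $N(i)$), and that the two cases $\tau<1$ and $\tau>1$ — i.e.\ $X$ running ahead of or behind $Y$ — are handled uniformly by carrying the wake-up difference as an additive $\pm t$ term throughout.
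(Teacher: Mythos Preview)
Your argument is correct and follows essentially the same route as the paper's proof: identify the agent $X$ with the smaller time unit, show that during some phase $i$ agent $X$ executes {\tt PlanarCowWalk}$(i)$ (line~\ref{aurv:line:search}) entirely while $Y$ is still in its line~\ref{aurv:line:prevsearch} wait at its initial position, and then invoke the covering property of {\tt PlanarCowWalk}. The only difference is cosmetic: the paper fixes an explicit phase index $i=\lceil\log(\tfrac{\tau_X}{\tau_Y-\tau_X}+\tfrac{\tau_Y}{\tau_X}+\tfrac{u_X}{r}+\tfrac{\sqrt{x^2+y^2}}{u_X}+t)\rceil$ and carries explicit constants (e.g.\ $M(i)\le 2^{15(i-1)^2+4i+9}$, $n_i\le 2^{3i+5}$), whereas you argue asymptotically with $2^{O(i)}$ and a threshold $\max\{i_0,i_1\}$; the two timing inequalities you isolate are exactly the paper's Claims~\ref{type3:c3} and~\ref{type3:c4}. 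One small point: the covering property you invoke with parameters $2^i u_X\ge d$ and $u_X/2^i\le r$ is not quite the example stated in Section~\ref{subsec:alg} (which assumes the length unit is at least~$1$); it is, however, precisely what the paper proves as Claim~\ref{type3:c1} inside the proof, so you should either prove it yourself or cite it as such rather than pointing to the algorithm section.
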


	\begin{proof}
Fix an arbitrary instance ${\cal I}=\instance$ of type $3$. By definition of such an instance, the two agents have different clocks. The agent with the faster clock will be called agent $X$, while the other agent will be called agent $Y$. Denote by $u_X$ (resp. $u_Y$) the unit of length of agent $X$ (resp. $Y$). Also denote by $\tau_X$ (resp. $\tau_Y$) the unit of time of agent $X$ (resp. $Y$). As usual, all times and durations (resp. distances) are expressed w.r.t the unit of time (resp. length) of the reference agent $A$ of ${\cal I}$, unless explicitely stated otherwise. Note that, agent $A$ is either $X$ or $Y$.

To proceed with the proof, we will show that rendezvous occurs by the time agent $X$ finishes the execution of line~\ref{aurv:line:search} of Algorithm~\ref{aurv:alg} in phase $i=\lceil\log (\frac{\tau_X}{\tau_Y-\tau_X} + \frac{\tau_Y}{\tau_X} + \frac{u_X}{r} + \frac{\sqrt{x^2+y^2}}{u_X}+t) \rceil$. Note that integer $i$ is well-defined as $\tau_Y>\tau_X>0$.

\begin{claim}\label{type3:c1}
			The execution by agent $X$ of procedure {\tt PlanarCowWalk}$(i)$ from its initial position $p_X$ allows it to get at distance at most $r$ of all points at distance at most $2^iu_X$ from $p_X$.
		\end{claim}
		\begin{proofclaim}
		In the proof of this claim all coordinates and distances are expressed in the system of $X$ (whose origin is $p_X$), except where otherwise specified.

Let $(f,g)$ be the cartesian coordinates of a given point located at distance at most $2^i$ from point $p_X$. We have $|f|\leq 2^i$, and therefore there exist two integers $-2^{2i}\leq a\leq 2^{2i}$ and $-\frac{1}{2^{i+1}}\leq b\leq \frac{1}{2^{i+1}}$ such that $f=\frac{a}{2^i}+b$. Similarly, there exist two integers $-2^{2i}\leq c\leq 2^{2i}$ and
$-\frac{1}{2^{i+1}}\leq d\leq \frac{1}{2^{i+1}}$ such that $g=\frac{c}{2^i}+d$. In view of Algorithm~\ref{pcw:alg}, we know that agent $X$ ends up executing {\tt LinearCowWalk}$(i)$ from point $(0,\frac{c}{2^i})$. By Algorithm~\ref{lcw:alg}, this execution of {\tt LinearCowWalk}$(i)$ allows the agent to traverse all the points $(f',\frac{c}{2^i})$ such that  $-2^i\leq f'\leq 2^i$. Hence, there is a time when the agent occupies the position $(\frac{a}{2^i},\frac{c}{2^i})$. The distance between $(f,g)$ and $(\frac{a}{2^i},\frac{c}{2^i})$ is $\sqrt{b^2+d^2}$, which is at most $\sqrt{2(\frac{1}{2^{i+1}})^2}<\frac{1}{2^i}$. Recall that this distance of $\frac{1}{2^i}$ is expressed in the unit of length of agent $X$. Expressed in the global unit, this distance is equal to $\frac{u_X}{2^i}$, which is at most $r$ by the definition of $i$. This concludes the proof of this claim.\end{proofclaim}

In phase $i$, the instruction at line~\ref{aurv:line:search} of Algorithm~\ref{aurv:alg} consists, for agent $X$, in executing {\tt PlanarCowWalk}$(i)$. By Lemma~\ref{lem:tech}, this execution starts from the initial position $p_X$ of agent $X$. Hence, in view of Claim~\ref{type3:c1} and the fact that, by the definition of $i$, $2^iu_X\geq \sqrt{x^2+y^2}$, we know that if agent $Y$ waits at its initial position $p_Y$ during the entire execution of line~\ref{aurv:line:search} of Algorithm~\ref{aurv:alg} in phase $i$ by agent $X$, rendezvous occurs by the time agent $X$ finishes this execution.

Note that if agent $Y$ executes the waiting period at line~\ref{aurv:line:prevsearch} of Algorithm~\ref{aurv:alg} during phase $i$, it is located at its initial position $p_Y$ according to Lemma~\ref{lem:tech}. So, in the light of the above arguments, it is enough to show the following two properties:
\begin{itemize}
\item If agent $X$ starts executing procedure {\tt PlanarCowWalk}$(i)$ at line~\ref{aurv:line:search} of Algorithm~\ref{aurv:alg} in phase $i$, agent $Y$ has at least started the waiting period of the previous line during the same phase.
\item Agent $Y$ cannot have started executing line~\ref{aurv:line:search} of Algorithm~\ref{aurv:alg} in phase $i$ before agent $X$ finishes the execution of the same line in the same phase.
\end{itemize}
This is done through Claims~\ref{type3:c2} to~\ref{type3:c4} that are proved below and that therefore conclude the proof of this lemma.

\begin{claim}\label{type3:c2}
The execution by agent $X$ (resp. $Y$) of the first $i-1$ phases of Algorithm~\ref{aurv:alg} and of the lines~\ref{aurv:debutphase} to~\ref{aurv:line:10} in phase $i$ of Algorithm~\ref{aurv:alg} lasts a time at most $2^{15(i-1)^2+4i+9}\tau_X$ (resp. $2^{15(i-1)^2+4i+9}\tau_Y$).
\end{claim}

\begin{proofclaim}

The execution of an instruction by agent $X$ takes time $c\tau_X$ for some real $c$ iff the execution of the same instruction by agent $Y$ takes time $c\tau_Y$. As a result, it is enough to show that the claim holds for agent $X$. In the rest of the proof of this claim, all durations are in the units of time of agent $X$.

Let $k$ be a positive integer. According to Algorithm~\ref{lcw:alg} (resp. Algorithm~\ref{pcw:alg}), the execution by agent $X$ of procedure {\tt LinearCowWalk}$(k)$ (resp. {\tt PlanarCowWalk}$(k)$) takes time at most $2^{k+3}$ (resp. at most $2^{3k+5}$). This implies that the execution by agent $X$ of lines~\ref{aurv:debutphase} to~\ref{aurv:line:9} (resp. lines~\ref{aurv:line:prevsearch} to~\ref{aurv:line:12}) of Algorithm~\ref{aurv:alg} in phase $k$ takes time at most $2^{4k+6}+2^{k+1}$ (resp. at most $2^{15k^2}+2^{3k+5}+2^{3k}+2^k$). The execution by agent $X$ of line~\ref{aurv:line:10} of Algorithm~\ref{aurv:alg} during phase $k$ is upper bounded by $2^k$. Moreover, since the execution by agent $X$ of line~\ref{aurv:line:12} of Algorithm~\ref{aurv:alg} in phase $k$ takes time at most $2^{3k}+2^k$ and consists in waiting at least time $2^{3k}$,  the execution of line~\ref{aurv:line:13} of Algorithm~\ref{aurv:alg} in phase $k$ takes time at most $2^k$.

Consequently, the execution time of phase $k$ by agent $X$ is upper bounded by $2^{4k+6}+2^{k+1}+2^{15k^2}+2^{3k+5}+2^{3k}+2^k+2^{k+1}$, which is at most $2^{15k^2+1}$.

Hence, the execution by agent $X$ of the first $i-1$ phases of Algorithm~\ref{aurv:alg} and of the lines~\ref{aurv:debutphase} to~\ref{aurv:line:10} in phase $i$ of Algorithm~\ref{aurv:alg} lasts a time at most $2^{4i+6}+2^{i+2}+\sum_{k=1}^{k=i-1}2^{15k^2+1}$. This is upper bounded by $2^{4i+7}+2^{15(i-1)^2+2}\leq 2^{15(i-1)^2+4i+9}$, which concludes the proof of this claim.
\end{proofclaim}

\begin{claim}\label{type3:c3}
If agent $X$ starts executing line~\ref{aurv:line:search} of Algorithm~\ref{aurv:alg} in phase $i$, agent $Y$ has at least started to execute line~\ref{aurv:line:prevsearch} of Algorithm~\ref{aurv:alg} in the same phase.
\end{claim}

\begin{proofclaim}
Let $q_X$ (resp. $q_Y$) be the time, if any, when agent $X$ (resp. $Y$) starts the execution of line~\ref{aurv:line:search} (resp. line~\ref{aurv:line:prevsearch}) of Algorithm~\ref{aurv:alg} in phase $i$.

$q_Y$ has higher value when agent $Y$ is agent $B$ than when it is agent $A$.
$q_X$ has lower value when agent $X$ is agent $A$ than when it is agent $B$.
In view of Claim~\ref{type3:c2} and the fact that $2^i\geq t$, by the definition of $i$, we have $q_Y\leq2^i+(2^{15(i-1)^2+4i+9})\tau_Y$ and
$q_X\geq(2^{15(i-1)^2+4i+9}+2^{15i^2})\tau_X$.


Consequently we have the following inequality

\begin{align}
\label{form:5}
q_X-q_Y\geq (2^{15(i-1)^2+4i+9}+2^{15i^2})\tau_X - 2^i-(2^{15(i-1)^2+4i+9})\tau_Y
\end{align}

By the definition of $i$, we have $2^i\geq\frac{\tau_Y}{\tau_X}$, which implies that $\tau_X\geq\frac{\tau_Y}{2^i}$. Hence, we get

\begin{align}
\label{form:6}
q_X-q_Y\geq (2^{15(i-1)^2+4i+9}+2^{15i^2})\frac{\tau_Y}{2^i} - 2^i-(2^{15(i-1)^2+4i+9})\tau_Y
\end{align}

The above inequality can be rewritten as follows

\begin{align}
\label{form:7}
q_X-q_Y\geq 2^{15i^2-i}\tau_Y -(2^{15(i-1)^2+4i+9})\frac{(2^i-1)\tau_Y}{2^i} - 2^i
\end{align}

This implies

\begin{align}
\label{form:7bis}
q_X-q_Y\geq 2^{15i^2-i}\tau_Y -(2^{15(i-1)^2+4i+9})\tau_Y - 2^i\geq (2^{15i^2-i}-2^{15(i-1)^2+4i+9})\tau_Y -2^i
\end{align}

Since, agent $Y$ is the agent with slower clock, we know that $\tau_Y\geq1$. Moreover, we know that $i\geq 1$ and thus $15(i-1)^2+4i+9\leq15i^2-i-1$. From (\ref{form:7bis}), we have

\begin{align}
\label{form:8}
q_X-q_Y\geq 2^{15i^2-i-1}-2^i>0
\end{align}

Consequently, the difference $q_X-q_Y$ is always positive, which concludes the proof of the claim.
\end{proofclaim}

\begin{claim}\label{type3:c4}
Agent $Y$ cannot have started executing line~\ref{aurv:line:search} of Algorithm~\ref{aurv:alg} in phase $i$ before agent $X$ finishes the execution of the same line in the same phase.
\end{claim}

\begin{proofclaim}
Let $s_X$ (resp. $s_Y$) be the time, if any, when agent $X$ (resp. $Y$) finishes (resp. starts) the execution of line~\ref{aurv:line:search} of Algorithm~\ref{aurv:alg} in phase $i$.

$s_Y$ has lower value when agent $Y$ is agent $A$ than when it is agent $B$.
$s_X$ has higher value when agent $X$ is agent $B$ than when it is agent $A$.
Note that according to Algorithms~\ref{lcw:alg} and~\ref{pcw:alg}, an execution by agent $X$ of {\tt PlanarCowWalk}$(i)$ takes time at most $2^{3i+5}\tau_X$. Moreover, by the definition of $i$, we know that $2^i\geq t$. Hence, in view of Claim~\ref{type3:c2}, we can state that $s_X\leq 2^i + (2^{15(i-1)^2+4i+9}+ 2^{15i^2} + 2^{3i+5})\tau_X$ and $s_Y\geq(2^{15(i-1)^2+4i+9}+2^{15i^2})\tau_Y$.

Consequently, we have the following inequality

\begin{align}
\label{form:1}
s_Y-s_X\geq(2^{15(i-1)^2+4i+9}+2^{15i^2})\tau_Y - 2^i - (2^{15(i-1)^2+4i+9} + 2^{15i^2} + 2^{3i+5})\tau_X
\end{align}

By the definition of $i$, we have $2^i\geq\frac{\tau_X}{\tau_Y-\tau_X}$, which implies that $\frac{2^i\tau_Y}{2^i+1}\geq\tau_X$. Hence (\ref{form:1}) implies

\begin{align}
\label{form:2}
s_Y-s_X\geq(2^{15(i-1)^2+4i+9}+2^{15i^2})\tau_Y - 2^i - (2^{15(i-1)^2+4i+9} + 2^{15i^2} + 2^{3i+5})\frac{2^i\tau_Y}{2^i+1}
\end{align}

Hence, we have

\begin{align}
\label{form:3}
s_Y-s_X\geq(2^{15(i-1)^2+4i+9}+2^{15i^2})\frac{\tau_Y}{2^i+1} - 2^i - 2^{3i+5}\frac{2^i\tau_Y}{2^i+1}
\end{align}

In view of the fact that $\tau_Y\geq 1$ and $i\geq1$, inequality ($\ref{form:3}$) implies that

\begin{align}
\label{form:4}
s_Y-s_X\geq\frac{2^{15i^2}\tau_Y}{2^i+1} - 2^i - 2^{3i+5}\frac{2^i\tau_Y}{2^i+1}\geq\frac{2^{15i^2}\tau_Y}{2^i+1}-\frac{2^{4i+6}\tau_Y}{2^i+1}>0
\end{align}

As a result, the difference $s_Y-s_X$ is always positive, which concludes the proof of this claim.
\end{proofclaim}


	\end{proof}

Now, it remains to deal with the instances of type~$4$. This is the aim of the following lemma.

	\begin{lemma}\label{lem:type4}
		Algorithm {\tt AlmostUniversalRV} guarantees rendezvous for all instances of type $4$.
	\end{lemma}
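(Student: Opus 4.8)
The plan is to follow the intuition sketched in Section~\ref{subsec:intuition} for type~$4$ instances and turn it into a rigorous argument, mirroring the structure of the proofs of Lemmas~\ref{lem:type1}--\ref{lem:type3}. Fix an arbitrary instance ${\cal K}=\instance$ of type~$4$. The two structural facts we exploit are: (i) every type~$4$ instance has $\tau=1$; and (ii) the image $h({\cal K})$ --- identical to ${\cal K}$ except that the visibility radius is halved and the wake-up delay is set to $0$ --- satisfies the hypotheses of procedure {\tt CGKK}, so {\tt CGKK} guarantees rendezvous for $h({\cal K})$ at some finite time $\alpha$, at which agents $A'$ and $B'$ occupy positions $p'$ and $q'$ with $\dist{p'}{q'}=\frac r2$. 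The first step is to verify (ii): $h({\cal K})$ has simultaneous wakeup by construction, and the type-$4$ clause means ${\cal K}$ (hence $h({\cal K})$) is either non-synchronous or synchronous with $\chi=1$ and $\phi\neq0$; in the synchronous case this is exactly {\tt CGKK}'s condition~(2), and in the non-synchronous case it is condition~(1). So $\alpha$ and the positions $p',q'$ are well defined.

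Next I would choose the target phase. Let $\gamma$ be any positive rational such that no agent of ${\cal K}$ travels distance more than $\frac r4$ in any time interval of absolute length $\gamma$ --- taking $\gamma=\frac{r}{4\max\{1,v\}}$ works since $A$ has speed $1$ and $B$ has speed $v$. Then I would pick $i=i({\cal K})$ large enough that simultaneously: $2^i\ge t$ (so each waiting period of line~\ref{aurv:line:12} dominates the wake-up delay), $\frac1{2^i}\le\gamma$ (so each block $S_j$ of line~\ref{aurv:line:11}, which lasts $\frac1{2^i}$ absolute time because $\tau=1$, plays the role of a ``segment'' of length at most $\gamma$), $\frac1{2^i}\le\frac r2$ (so {\tt CGKK} is run with the correct halved radius — in fact line~\ref{aurv:line:11}'s solo {\tt CGKK} uses visibility radius $\frac1{2^i}$, matching $h({\cal K})$ when $2^i\ge\frac2r$), $2^i\ge\alpha$ (so the whole of the relevant {\tt CGKK} execution fits inside the time-$2^i$ window of line~\ref{aurv:line:11}), and $2^{2i}\ge\lceil\alpha/\gamma\rceil$ (so segment number $\lceil\alpha/\gamma\rceil$ actually exists among $S_1,\dots,S_{2^{2i}}$). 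An explicit formula for $i$ in terms of $t$, $r$, $v$ and $\alpha$ is a routine maximum of logarithms, analogous to the one in Lemma~\ref{lem:type1}.

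The core of the proof is then a sequence of claims, in the spirit of Lemma~\ref{lem:type1}. Using Lemma~\ref{lem:tech}, in phase $i$ agent $A$ (resp.\ $B$) begins line~\ref{aurv:line:12} at its initial position, so its trajectory during line~\ref{aurv:line:12} is exactly the {\tt CGKK} trajectory of $A'$ (resp.\ $B'$) for $h({\cal K})$, but with each consecutive segment $S_j$ followed by an idle wait of absolute length $t$ (here again $\tau=1$ is used, so that $A$'s and $B$'s ``$\mathrm{wait}(2^i)$'' both last absolute time $2^i\ge t$, and so that the $S_j$'s of the two agents have the same absolute duration). Claim~1: when agent $A$ finishes segment $S_{\lceil\alpha/\gamma\rceil}$ it has passed through $p'$, and symmetrically $B$ has passed through $q'$ by the end of its own $S_{\lceil\alpha/\gamma\rceil}$ --- this follows because agent $A$ traces the $A'$-trajectory and $p'$ is reached by time $\alpha\le\lceil\alpha/\gamma\rceil\cdot\gamma$, i.e.\ within the first $\lceil\alpha/\gamma\rceil$ segments; the same for $B$. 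Claim~2: the delayed-by-$t$ structure means that at the absolute moment agent $A$ finishes its $\lceil\alpha/\gamma\rceil$-th interruption (the wait following $S_{\lceil\alpha/\gamma\rceil}$), agent $B$ is just starting its $\lceil\alpha/\gamma\rceil$-th interruption --- because $A$ leads $B$ by exactly time $t$ in absolute time, and each interruption lasts exactly $t$, so the $t$-offset is absorbed by one interruption. Claim~3: at that common absolute time, by the choice of $\gamma$ and the fact that since its last passage through $p'$ agent $A$ has moved for a total time at most $\gamma$ (the remainder of $S_{\lceil\alpha/\gamma\rceil}$ after hitting $p'$), agent $A$ is within $\frac r4$ of $p'$; likewise $B$ is within $\frac r4$ of $q'$. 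Combining with $\dist{p'}{q'}=\frac r2$ and the triangle inequality gives $\dist{A}{B}\le\frac r4+\frac r2+\frac r4=r$, hence rendezvous by line~\ref{aurv:stop}, by the time $A$ finishes its $\lceil\alpha/\gamma\rceil$-th interruption in phase $i$; assuming the contrary yields a contradiction, as in the other lemmas.

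The main obstacle I anticipate is Claim~2 --- the careful bookkeeping of absolute times showing that $A$'s completion of interruption $\lceil\alpha/\gamma\rceil$ coincides with $B$'s start of interruption $\lceil\alpha/\gamma\rceil$, and that throughout the interval between ``$A$ last at $p'$'' and this coincidence point neither agent has moved for more than time $\gamma$. One must track that $A$ wakes at absolute time $0$ and $B$ at absolute time $t$, that both execute phases $1,\dots,i-1$ and lines~\ref{aurv:debutphase}--\ref{aurv:line:10} identically (these begin and end at the respective initial positions, by Lemma~\ref{lem:tech}), so that when $A$ starts line~\ref{aurv:line:12} at some absolute time $s$, $B$ starts it at absolute time $s+t$; then segment $S_j$ of $A$ runs over $[s+(j-1)(\frac1{2^i}+t),\,s+(j-1)(\frac1{2^i}+t)+\frac1{2^i}]$ and of $B$ over the same interval shifted by $+t$, and one reads off that $A$'s $j$-th interruption ends at $s+j(\frac1{2^i}+t)$ while $B$'s $j$-th interruption starts at $s+t+(j-1)(\frac1{2^i}+t)+\frac1{2^i}=s+j(\frac1{2^i}+t)$ --- equal, as wanted. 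A secondary subtlety is confirming that line~\ref{aurv:line:11}'s ``solo execution of {\tt CGKK} during time $2^i$'' uses visibility radius $\frac1{2^i}$ and that this matches $h({\cal K})$'s radius $\frac r2$ up to the {\tt CGKK} guarantee being monotone in the radius (a smaller assumed visibility radius only makes {\tt CGKK} bring the agents closer), so that $\frac1{2^i}\le\frac r2$ suffices; this, plus verifying that these conditions genuinely pin down a finite $i$, are the points that need the most care, but each is elementary once the timing identity of Claim~2 is in hand.
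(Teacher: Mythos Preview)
Your overall plan matches the paper's proof almost exactly: reduce to the shifted instance $h({\cal K})=(\tfrac r2,x,y,\phi,\tau,v,0,\chi)$, invoke {\tt CGKK} on it to get a meeting time $\Delta$ and positions $p',q'$ with $\dist{p'}{q'}\le \tfrac r2$, then use the segmentation at line~\ref{aurv:line:12} together with Lemma~\ref{lem:tech} and $\tau=1$ to show the two agents are within $\tfrac r4$ of $p'$ and $q'$ at a common absolute time. However, two concrete steps in your write-up would fail as stated.

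First, the segment index is wrong. You introduce a separate parameter $\gamma$ and then speak of segment $S_{\lceil\alpha/\gamma\rceil}$, but the segments in line~\ref{aurv:line:11} have absolute duration $\tfrac{1}{2^i}$, not $\gamma$. Since you only require $\tfrac{1}{2^i}\le\gamma$, after segment $S_{\lceil\alpha/\gamma\rceil}$ agent $A$ has run {\tt CGKK} for time $\lceil\alpha/\gamma\rceil\cdot\tfrac{1}{2^i}$, which can be arbitrarily smaller than $\alpha$; so your Claim~1 (that $A$ has passed through $p'$ by then) is false. The correct index, as in the paper, is $\lceil 2^i\Delta\rceil$, and there is no need for the auxiliary $\gamma$ at all: the paper simply takes $i=\lceil\log(t+\Delta+\tfrac{4(v+1)}{r})\rceil$, so that $\tfrac{1}{2^i}\le\tfrac{r}{4(v+1)}$ directly bounds the displacement of each agent during a single segment by $\tfrac r4$.

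Second, your Claim~2 bookkeeping assumes each interruption lasts exactly $t$; but line~\ref{aurv:line:12} waits $2^i$ (absolute time, since $\tau=1$), not $t$. With wait length $2^i$, the moment $A$ ends its $j$-th interruption is $s+j(\tfrac{1}{2^i}+2^i)$, while $B$ \emph{starts} its $j$-th interruption at $s+t+(j-1)(\tfrac{1}{2^i}+2^i)+\tfrac{1}{2^i}$; the difference is $2^i-t\ge 0$, not $0$. So the two instants do not coincide; what is true (and what the paper uses) is the weaker statement that when $A$ finishes its wait after $S_{\lceil 2^i\Delta\rceil}$, agent $B$ is \emph{somewhere in} its wait after $S_{\lceil 2^i\Delta\rceil}$. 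That is enough for the triangle-inequality conclusion. Finally, your ``secondary subtlety'' about {\tt CGKK}'s visibility radius is a red herring: the solo {\tt CGKK} trajectory does not depend on any visibility parameter, so there is nothing to match.
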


	\begin{proof}
Let ${\cal I}=\instance$ be an arbitrary instance of type $4$ and let ${\cal I}'=(\frac{r}{2},x,y,\phi,\tau,v,0,\chi)$ be an instance of type $4$ which is identical to ${\cal I}$ except that the visibility radius of the agents has been reduced by half and the delay between the starting times of the agents is $0$ (while it is not necessarily the case in ${\cal I}$). By the definition of type 4, instances ${\cal I}$ and ${\cal I}'$ belong to the set made of the non-synchronous instances for which $\tau=1$ and of the synchronous instances for which $\chi=1$ and $\phi\ne0$ (since these latter instances are synchronous, they also have $\tau=1$).

In instance ${\cal I}$, we keep our usual way to denote the reference agent by $A$ and the other agent by $B$. In instance ${\cal I}'$, the reference agent is denoted by $A'$ while the other agent is denoted by $B'$. We also keep the usual way to describe a situation with respect to the coordinate system and the parameters of a reference agent, using indistinctly those of agent $A$ or $A'$ as they are identical by convention. Note that, the four agents all have the same clock rate and agent $A$ (resp. agent $B$) shares the same unit of length with agent $A'$ (resp. $B'$). These elements must be kept in mind when reading the rest of this proof, as they condition the validity of  the arguments.

First consider the execution $EX'$ of procedure {\tt CGKK} for instance ${\cal I}'$. According to the properties of this procedure (cf. Section~\ref{sec:prelim}), rendezvous occurs in $EX'$ after a finite time $\Delta$, at the end of which the position $p'$ occupied by agent $A'$ is at distance at most $\frac{r}{2}$ from the position $q'$ occupied by agent $B'$.

Now, consider the execution $EX$ of Algorithm {\tt AlmostUniversalRV} for instance ${\cal I}$ and suppose by contradiction that rendezvous does not occur in $EX$ by the time agent $A$ finishes to execute phase $i=\lceil \log (t + \Delta + \frac{4(v+1)}{r})\rceil$. Let $S_{A,1}S_{A,2}\dots S_{A,2^{2i}}$ (resp. $S_{B,1}S_{B,2}\dots S_{B,2^{2i}}$) be the solo execution of {\tt CGKK} by agent $A$ (resp. $B$) during time $2^i$, where each segment $S_{A,j}$ (resp. $S_{B,j}$) takes time $\frac{1}{2^i}$.

Since $2^i\geq \Delta$, we know that agent $A$ (resp. $B$) executes at some point segment $S_{A,\lceil2^i\Delta \rceil}$ (resp. $S_{B,\lceil2^i\Delta \rceil}$) at line~\ref{aurv:line:12} of Algorithm~\ref{aurv:alg} during phase $i$. When it does so, we know that agent $A$ (resp. $B$) passes through position $p'$ (resp. $q'$) in view of execution $EX'$ and the fact that by Lemma~\ref{lem:tech} it always starts to execute line~\ref{aurv:line:12} of Algorithm~\ref{aurv:alg} from its initial position $(0,0)$ (resp. $(x,y)$). Since $2^i\geq t$ and the clock rates of the agents in instance ${\cal I}$ are identical, we also know that when agent $A$ finishes, say at some time $s$, the waiting period lasting $2^i$ just after the execution of segment $S_{A,\lceil2^i\Delta \rceil}$ in phase $i$, agent $B$ finishes or is still executing the waiting period lasting $2^i$ just after segment $S_{B,\lceil2^i\Delta \rceil}$ during the same phase.

As a result, at time $s$, agent $A$ (resp. $B$) is located at a position $p$ (resp. $q$) which is at distance at most $\frac{1}{2^i}$ (resp. $\frac{v}{2^i}$) from $p'$ (resp. $q'$) because the execution by agent $A$ (resp. $B$) of segment $S_{A,\lceil2^i\Delta \rceil}$ (resp. $S_{B,\lceil2^i(\Delta -t) \rceil}$) takes time $\frac{1}{2^i}$. By the definition of $i$, we know that $\frac{1}{2^i}\leq \frac{r}{4(v+1)}$. Hence, position $p$ (resp. $q$) is at distance at most $\frac{r}{4}$ from position $p'$ (resp $q'$). This implies that position $p'$ and $q'$ are separated by a distance greater than $\frac{r}{2}$ as otherwise, by Algorithm~\ref{aurv:alg} rendezvous occurs at time $s$ before agent $A$ finishes to execute phase $i=\lceil \log (t + \Delta + \frac{4(v+1)}{r})\rceil$. This contradicts the fact that the position $p'$ occupied by agent $A'$ is at distance at most $\frac{r}{2}$ from the position $q'$ occupied by agent $B'$ when rendezvous occurs in $EX'$. As a result, we know that rendezvous occurs in $EX$ by the time agent $A$ finishes executing phase $i=\lceil \log (t + \Delta + \frac{4(v+1)}{r})\rceil$, which concludes the proof of the lemma.
\end{proof}

Theorem \ref{th2} is a direct consequence of Lemmas \ref{lem:type1}, \ref{lem:type2}, \ref{lem:type3}, and \ref{lem:type4}.


\subsection{Proof of Theorem~\ref{th:charac}}\label{sec:proof31}

	We are now able to address the proof of Theorem~\ref{th:charac} that provides a characterization of the feasible instances.

	The next two lemmas follow from Theorem~\ref{th:algo}.
	\begin{lemma}
		\label{lem:1}
		Every non-synchronous instance is feasible.
	\end{lemma}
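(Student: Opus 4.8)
\textbf{Proof proposal for Lemma~\ref{lem:1}.}

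The statement to prove is that every non-synchronous instance is feasible, i.e., for every instance $\instance$ with $\tau \neq 1$ or $v \neq 1$, there exists an algorithm guaranteeing rendezvous for it. Recall that feasibility only requires the \emph{existence} of some algorithm (possibly dedicated to the instance), not a single universal one. The plan is to simply invoke Theorem~\ref{th:algo}: one of the classes of instances for which Algorithm {\tt AlmostUniversalRV} guarantees rendezvous is precisely the class of \emph{all} non-synchronous instances. Since Algorithm {\tt AlmostUniversalRV} is itself a (deterministic) algorithm, its correctness on every non-synchronous instance immediately witnesses the feasibility of each such instance.

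Concretely, I would proceed as follows. First, recall the definition: an instance is non-synchronous exactly when it is not the case that $\tau = v = 1$. Second, quote the statement of Theorem~\ref{th:algo}, whose very first clause asserts that {\tt AlmostUniversalRV} achieves rendezvous for all instances that are non-synchronous (the remaining clauses handle certain synchronous instances and are irrelevant here). Third, observe that by the definition of feasibility given in the introduction and Section~\ref{sec:prelim} — an instance is feasible if some algorithm, even one tailored to it, guarantees rendezvous — the existence of the single algorithm {\tt AlmostUniversalRV} working on all non-synchronous instances is more than enough. Hence every non-synchronous instance is feasible. This is a one-line deduction and requires no further construction.

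There is essentially no obstacle here: the lemma is deliberately stated as an easy corollary of the main positive result (Theorem~\ref{th:algo}), and the excerpt even announces ``The next two lemmas follow from Theorem~\ref{th:algo}.'' The only thing to be careful about is the logical direction of the feasibility definition — we need ``there exists an algorithm'' and we are \emph{handing} one, so the implication goes the right way — and making sure that the first bullet/clause of Theorem~\ref{th:algo} indeed covers \emph{all} non-synchronous instances without extra side conditions, which it does.
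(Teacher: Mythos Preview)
Your proposal is correct and matches the paper's own justification exactly: the paper simply states that Lemma~\ref{lem:1} follows from Theorem~\ref{th:algo}, and your argument spells out precisely that deduction.
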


	\begin{lemma}
		\label{lem:2}
		Every synchronous instance $\instance$ for which $\chi=1 $ and $\phi\neq 0$ is feasible.
	\end{lemma}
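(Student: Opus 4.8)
The plan is to derive Lemma~\ref{lem:2} immediately from Theorem~\ref{th:algo} by observing that the instances in question fall squarely inside the set handled by Algorithm {\tt AlmostUniversalRV}. Concretely, let ${\cal I}=\instance$ be any synchronous instance with $\chi=1$ and $\phi\neq0$. Synchronous means $\tau=v=1$, so ${\cal I}$ is in particular not a non-synchronous instance; but it does match the first bullet of Theorem~\ref{th:algo}, namely ``$\chi=1$ and $\phi\neq0$.'' Hence Algorithm {\tt AlmostUniversalRV} guarantees rendezvous for ${\cal I}$.

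The remaining point is the bridge from ``some algorithm achieves rendezvous for ${\cal I}$'' to ``${\cal I}$ is feasible.'' This is essentially by definition: recall from the model section that an instance is called feasible precisely when there exists an algorithm — even one tailored to that instance — that guarantees rendezvous for it, keeping in mind that the agents running it do not know which of them is $A$ and which is $B$. Since Algorithm {\tt AlmostUniversalRV} is a single deterministic algorithm executed identically by both agents and it guarantees rendezvous for ${\cal I}$, it witnesses the feasibility of ${\cal I}$. Therefore every synchronous instance with $\chi=1$ and $\phi\neq0$ is feasible, which is exactly the statement of Lemma~\ref{lem:2}.

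I do not anticipate any real obstacle here: the content is entirely a matter of unwinding definitions and quoting Theorem~\ref{th:algo}. The only thing to be careful about is to state explicitly that synchronicity forces $\tau=v=1$ so that the instance genuinely lands in the ``$\chi=1$, $\phi\neq0$'' clause of Theorem~\ref{th:algo} rather than in the non-synchronous clause — but both clauses would do, so even this is not a genuine difficulty. (The companion Lemma~\ref{lem:1} is handled the same way, using that Theorem~\ref{th:algo} covers all non-synchronous instances outright.)
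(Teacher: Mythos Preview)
Your proposal is correct and mirrors the paper's own treatment exactly: the paper simply states that Lemma~\ref{lem:2} (together with Lemma~\ref{lem:1}) follows from Theorem~\ref{th:algo}, which is precisely the observation you spell out.
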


	The following lemma follows from \cite{PY2}:
	\begin{lemma}
		\label{lem:3}
		Every synchronous instance $\instance$ for which $\chi=1$ and $\phi=0$ is feasible if and only if $t\geq\dist{(0,0)}{(x,y)}-r$.
	\end{lemma}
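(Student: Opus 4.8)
The statement asserts a biconditional for synchronous instances with $\chi=1$ and $\phi=0$: feasibility is equivalent to $t\geq\dist{(0,0)}{(x,y)}-r$. The plan is to prove the two directions separately, borrowing heavily from \cite{PY2}.

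\textbf{The ``if'' direction.}
Suppose $t\geq\dist{(0,0)}{(x,y)}-r$. If the inequality is strict, then the instance is of type~2 (synchronous, $\chi=1$, $\phi=0$, $t>\dist{(0,0)}{(x,y)}-r$), so by Theorem~\ref{th:algo} Algorithm {\tt AlmostUniversalRV} guarantees rendezvous, witnessing feasibility. The remaining boundary case is $t=\dist{(0,0)}{(x,y)}-r$; this is exactly one of the ``exceptional'' instances not handled by our algorithm, so we cannot invoke Theorem~\ref{th:algo} here. For this case the plan is to exhibit a dedicated algorithm. Since $\phi=0$ and $\chi=1$, the two coordinate systems are a pure shift of one another, so the agents agree on the direction from $A$ to $B$ (up to the two-fold ambiguity of who is $A$); a natural dedicated strategy is: after waking, wait a suitable amount and then walk along the axis joining the two starting points. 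Concretely, the later agent $B$ begins at absolute time $t=\dist{(0,0)}{(x,y)}-r$; if agent $A$ immediately walks toward $B$'s starting point at unit speed while $B$ stays put until it sees $A$, then at time $\dist{(0,0)}{(x,y)}-r$ agent $A$ is at distance exactly $r$ from $(x,y)$, so rendezvous is achieved precisely when $B$ would otherwise wake. One must check that the symmetric execution (roles of $A$ and $B$ swapped, which the agents cannot distinguish) also works, i.e. that whichever agent happens to move first reaches visibility range of the other by the critical time; this uses $t=\dist{(0,0)}{(x,y)}-r$ and the fact that the mover travels at speed $1$. Invoking that this equality condition is exactly the GATHER(2) ``good configuration'' condition of \cite{PY2}, the boundary case is covered by the algorithm of \cite{PY2} itself, so one may alternatively just cite {\tt Latecomers} extended to the non-strict inequality.

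\textbf{The ``only if'' direction.}
Suppose $t<\dist{(0,0)}{(x,y)}-r$; we must show no algorithm can guarantee rendezvous for this instance. This is a symmetry/adversary argument in the style of \cite{PY2}. Since $\phi=0$ and $\chi=1$, the map $f$ sending each point $p$ to its reflection through the midpoint $m=(x/2,y/2)$ of the segment joining the two starting positions is an isometry that swaps the starting position of $A$ with that of $B$ and commutes with the agents' (identical, shift-related) coordinate systems. Because the agents are anonymous and run the same deterministic algorithm, and because $t<\dist{(0,0)}{(x,y)}-r$ means that neither agent, even moving at speed $1$ from its wake-up, can reach within distance $\frac{1}{2}\dist{(0,0)}{(x,y)}$ of $m$ before the other wakes, the two executions are ``centrally symmetric'' around $m$ up to the time-shift $t$: at every moment before rendezvous, if agent $A$ is at position $a(s)$ then agent $B$ is at position $f(a(s-t))$ (this is the analogue of Lemma~\ref{lem:chi-dif} but with a point reflection instead of an axial one, because here $\chi=1,\phi=0$). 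The plan is to argue that under this constraint the distance between the agents stays strictly above $r$: one tracks the worst case, where $A$ heads straight for $m$; even then $A$ reaches $m$ only at time $\geq\frac{1}{2}\dist{(0,0)}{(x,y)}$, and at every time $s$ the two agents are symmetric about $m$, so their separation is $2\cdot\dist{a(s)}{m}\geq 2\cdot\dist{f(a(s-t))}{m}$ — and because of the delay $t$, when $A$ is at distance $\rho$ from $m$, agent $B$ is at distance $\geq\rho + (\text{distance } B \text{ could have closed in time } t) $ is not quite right; rather, one shows the minimum over all trajectories of $\max_s$ separation cannot drop to $r$, precisely because the delay $t$ is too small relative to $\dist{(0,0)}{(x,y)}-r$. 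Formalizing this as in \cite{PY2}: any algorithm, applied to the ``mirrored'' instance where $A$ and $B$ are interchanged, produces a mirrored execution; subtracting the two and using the triangle inequality one shows the separation at the first candidate meeting time is at least $\dist{(0,0)}{(x,y)} - t - (\text{total path length of the faster agent up to that time bounded by its travel}) > r$, a contradiction.

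\textbf{Main obstacle.}
The ``if'' direction is routine given Theorem~\ref{th:algo} plus a short boundary argument (or a citation of \cite{PY2}). The crux is the ``only if'' direction: making the central-symmetry adversary argument rigorous — in particular formalizing ``the executions are symmetric about $m$ until rendezvous'' as an inductive invariant (the point-reflection analogue of Lemma~\ref{lem:chi-dif}) and then squeezing out the strict inequality $\dist{(0,0)}{(x,y)} - t > r$ to contradict a hypothetical meeting. The subtlety is that the agents may move in complicated ways, so one cannot assume straight-line motion; instead one must argue purely from the symmetry invariant that the separation at any time $s\geq t$ equals (twice the distance of the leading agent from $m$ at time $s$) which is bounded below using that the leading agent has had only time $s$ and the trailing one only time $s-t$, and that $t$ is strictly less than $\dist{(0,0)}{(x,y)}-r$. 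I expect this is exactly the content of the ``bad configuration'' impossibility in \cite{PY2}, so the cleanest route is to reduce to that result, but one should verify that the $\chi=1,\phi=0$ plane setting here matches their graph/plane setting closely enough for the reduction to be immediate.
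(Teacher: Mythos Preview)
The paper's own ``proof'' of this lemma is a single sentence: it follows from \cite{PY2}. So there is no argument in the paper to compare against; the only question is whether your sketch is sound.

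Your ``if'' direction is fine. The strict case is handled by Theorem~\ref{th:algo}, and for the boundary case $t=\dist{(0,0)}{(x,y)}-r$ a dedicated algorithm exists; indeed the simplest one is ``walk in direction $(x,y)$ in your own coordinates, stopping as soon as you see the other agent'': since $\phi=0$ and $\chi=1$, both agents move in the same absolute direction, so their distance at any time $s\in[0,t]$ is $\dist{(0,0)}{(x,y)}-s$, which hits $r$ exactly at $s=t$, the moment $B$ wakes. (Your description has $B$ ``stay put'', which is not what a symmetric algorithm would do, but since $B$ hasn't woken yet during $[0,t)$ the effect is the same.)

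Your ``only if'' direction, however, rests on a wrong symmetry. With $\phi=0$ and $\chi=1$ the two coordinate systems differ by a pure \emph{translation}, not by a point reflection. Consequently the correct invariant is that agent $B$'s position at time $s\geq t$ is $(x,y)+a(s-t)$, where $a(\cdot)$ is $A$'s trajectory, \emph{not} $f(a(s-t))=(x,y)-a(s-t)$ as you write. The point-reflection invariant you state is simply false here (it would be the right picture for $\phi=\pi$, $\chi=1$). Once you use the correct translation invariant, the impossibility is immediate and cleaner than what you sketch: if rendezvous occurred at time $z\ge t$ then
\[
r \;\ge\; \bigl\| a(z) - (x,y) - a(z-t) \bigr\| \;\ge\; \dist{(0,0)}{(x,y)} - \|a(z)-a(z-t)\| \;\ge\; \dist{(0,0)}{(x,y)} - t,
\]
the last inequality because $A$ moves at unit speed; this contradicts $t<\dist{(0,0)}{(x,y)}-r$. (The case $z<t$ is even easier.) This is presumably the argument in \cite{PY2} that the paper is citing, and it is what you should replace your central-symmetry paragraph with.
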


	The following lemma is the counterpart of the previous one for different chiralities. Notice that in this case, we do not need the assumption that $\phi=0$.

	\begin{lemma}
		\label{lem:4}
		Every synchronous instance $\instance$ for which $\chi=-1$ is feasible if and only if $t\geq\dist{proj_A}{proj_B}-r$.
	\end{lemma}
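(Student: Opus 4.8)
The plan is to prove the two directions separately. For the ``if'' direction, suppose $\chi=-1$ and $t\geq\dist{proj_A}{proj_B}-r$. If the inequality is strict, the instance is of type~$1$, so Theorem~\ref{th:algo} (via Lemma~\ref{lem:type1}) already guarantees that Algorithm {\tt AlmostUniversalRV} achieves rendezvous, and the instance is feasible. The only remaining case is the boundary equality $t=\dist{proj_A}{proj_B}-r$, which is exactly one of the exception sets and is not covered by Theorem~\ref{th:algo}; here I would design a \emph{dedicated} algorithm for this single instance. The natural candidate is the linear-search-on-the-canonical-line strategy sketched in Section~\ref{subsec:intuition} for type~$1$: instruct each agent to enumerate triples $(i,j,k)$ and, in the local system {\tt Rot}$(\frac{k\pi}{2^i})$, perform the shifted pair of truncated linear searches. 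Using Lemma~\ref{lem:chi-dif} and Corollary~\ref{cor:chi-dif}, when the ``guessed'' system aligns closely enough with the canonical line one recreates the favorable configuration; and because $t=\dist{proj_A}{proj_B}-r$ exactly, the projections of the two agents come to within distance exactly $r$ of each other at the appropriate moment. Getting the agents \emph{onto} (not merely near) the canonical line, with their search lines \emph{exactly} parallel to $L$, is possible for this particular instance precisely because the instance is fixed and known to the algorithm designer: the required rotation angle and offsets are determined by the instance, so a single enumeration hitting the exact values suffices. This is what makes the boundary instance feasible even though no \emph{uniform} algorithm handles the whole boundary set (that impossibility is the subject of Section~\ref{sec:miss}).

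For the ``only if'' direction, suppose $\chi=-1$ and $t<\dist{proj_A}{proj_B}-r$; I must show that \emph{no} algorithm achieves rendezvous. Fix any deterministic algorithm $\mathcal{A}$ and run it on ${\cal I}$. By Lemma~\ref{lem:chi-dif}, for every time $z\geq t$ at which rendezvous has not yet occurred, the trajectory of $B$ up to time $z$ is the image of the trajectory of $A$ up to time $z-t$ under the composition of the axial symmetry through the canonical line $L$ with the shift by $\vec u$ from $proj_A$ to $proj_B$. The key consequence, formalized in Corollary~\ref{cor:chi-dif}, is that $\dist{proj_A(z-t)}{proj_B(z)}=\dist{proj_A}{proj_B}$ for all such $z$. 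Now bound the distance between the agents from below: at time $z$, agent $A$ has moved for at most time $z$ (speed $1$), so its displacement from $proj_A(z)$ along $L$, hence the gap $\dist{proj_A(z-t)}{proj_A(z)}$, is at most $t$; combining with the triangle inequality along $L$,
\[
\dist{proj_A(z)}{proj_B(z)}\;\geq\;\dist{proj_A(z-t)}{proj_B(z)}-\dist{proj_A(z-t)}{proj_A(z)}\;\geq\;\dist{proj_A}{proj_B}-t.
\]
Since the true distance between the agents is at least the distance between their projections onto $L$, we get $\dist{A(z)}{B(z)}\geq \dist{proj_A}{proj_B}-t>r$ at every time $z$, so rendezvous never happens. (For $z<t$ agent $B$ has not even started, and $A$ alone cannot be within $r$ of $B$'s starting point by the assumption $r<\dist{(0,0)}{(x,y)}$ together with a short argument that $A$ cannot have traveled far enough; this is a minor point.)

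The main obstacle is the ``if'' direction at the boundary $t=\dist{proj_A}{proj_B}-r$: one must exhibit a concrete dedicated algorithm and verify that the agents reach distance \emph{exactly} $r$ — there is no slack, so the geometric estimates (angle of the search line versus $L$, offset of the start point from $L$) must be driven to zero rather than merely bounded, which is why the uniform algorithm's error-margin argument from Lemma~\ref{lem:type1} does not transfer and a per-instance construction is needed. The ``only if'' direction, by contrast, is essentially an immediate application of Corollary~\ref{cor:chi-dif} plus the speed bound on $A$, and I expect it to be short.
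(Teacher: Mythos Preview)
Your ``only if'' direction is essentially the paper's argument: assume rendezvous at time $z$, rule out $z<t$ by a distance bound, then for $z\geq t$ apply Corollary~\ref{cor:chi-dif} to get $\dist{proj_A(z-t)}{proj_B(z)}=\dist{proj_A}{proj_B}$ and conclude $\dist{proj_A(z-t)}{proj_A(z)}\geq\dist{proj_A}{proj_B}-r>t$, contradicting the speed bound. (Minor point: your handling of $z<t$ appeals to $r<\dist{(0,0)}{(x,y)}$, but you actually need $z<t<\dist{proj_A}{proj_B}-r$ together with $\dist{proj_A(z)}{proj_B(z)}\geq\dist{proj_A}{proj_B}-z$; the paper does this directly on the projections.)

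For the ``if'' direction your plan at the strict inequality is the paper's (invoke Theorem~\ref{th:algo}). At the boundary $t=\dist{proj_A}{proj_B}-r$, however, your proposal is muddled. You invoke the type-1 \emph{enumeration} of triples $(i,j,k)$ with rotations ${\tt Rot}(\frac{k\pi}{2^i})$, then say that because the instance is known ``a single enumeration hitting the exact values suffices.'' These two things are in tension: the dyadic-rational rotations $\frac{k\pi}{2^i}$ form only a countable set, so for a generic $\phi$ no term of the enumeration ever aligns \emph{exactly} with the canonical line --- and at the boundary you have zero slack, so approximate alignment is not enough. What you really mean is that a \emph{dedicated} algorithm may simply use the exact rotation angle and the exact offset, with no enumeration at all; but then your description is not an algorithm yet, and the linear-search machinery is overkill.

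The paper's dedicated algorithm is much simpler: each agent computes the canonical line $L$ (it has the same equation in both coordinate systems), passes to the local system ${\tt Rot}(\frac{\phi+\pi}{2})$, walks to the orthogonal projection of its starting point onto $L$, then goes North distance $t$ and South distance $t$ along $L$. Because $\chi=-1$, the ``North'' direction along $L$ in this rotated system is the \emph{same} absolute direction for both agents. A two-case analysis (is $proj_B$ North or South of $proj_A$?) shows that at a specific time the two agents are at distance exactly $\dist{proj_A}{proj_B}-t=r$ on $L$ and both stop. This is the missing concrete construction; once you see that a dedicated algorithm may rotate by the exact angle $\frac{\phi+\pi}{2}$, no search procedure is needed.
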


	\begin{proof}
		We first show the ``only if'' implication. By contradiction, consider a synchronous instance ${\cal I}$ for which $\chi=-1$ and $t<\dist{proj_A}{proj_B}-r$, and suppose that some deterministic algorithm $\mathcal{A}$ guarantees rendezvous for instance ${\cal I}$ at some time $z$. This implies $\dist{proj_A(z)}{proj_B(z)}\leq r$. Suppose that $z<t$. At time $z$, agent $B$ is still idle while agent $A$ has traversed a distance at most $z$. Hence, $\dist{proj_A(z)}{proj_B(z)}\geq\dist{proj_A}{proj_B} - z > \dist{proj_A}{proj_B} - t > r$ which is a contradiction. This proves $z-t\geq 0$, and in view of Corollary~\ref{cor:chi-dif}, we have $\dist{proj_A(z-t)}{proj_B(z)}=\dist{proj_A}{proj_B}$. Hence, $\dist{proj_A(z-t)}{proj_A(z)}\geq \dist{proj_A}{proj_B}-r$. Consequently, agent $A$ has to travel at least $\dist{proj_A}{proj_B}-r$ during time interval $[z-t, z]$ which contradicts the fact that $t<\dist{proj_A}{proj_B}-r$.

		We now show the ``if'' implication. By Theorem~\ref{th:algo}, we know that every synchronous instance for which $\chi=-1$ and $t>\dist{proj_A}{proj_B}-r$ is feasible. It is then enough to consider a synchronous instance ${\cal I}$ for which $\chi=-1$ and $t=\dist{proj_A}{proj_B}-r$ and to prove that ${\cal I}$ is feasible.

Let us describe a rendezvous algorithm working for ${\cal I}$. Each of the agents computes the canonical line $L$ of ${\cal I}$ which, by definition, has the same equation in the system of coordinates of each of them. Each agent considers its local coordinate system {\tt Rot}$(\frac{\phi+\pi}{2})$ and executes in this system the following instructions, which are interrupted as soon as it sees the other agent. The agent goes to the orthogonal projection on $L$ of its initial position. Then, it goes North at distance $t$ and then South at distance $t$.

		Recall that the canonical line $L$ of ${\cal I}$ is at the same distance from the initial positions of both agents. Also note that, for each agent, the last two moves are made along $L$, and direction North in the system {\tt Rot}$(\frac{\phi+\pi}{2})$ is the same for both agents.

		It remains to prove that the above algorithm guarantees rendezvous for instance ${\cal I}$.

Let $\Sigma$ be the system of coordinates {\tt Rot}$(\frac{\phi+\pi}{2})$ constructed by agent $A$ and let $z$ be the time when agent $A$ finishes its move North. At time $z$, agent $B$ has just reached line $L$. The rest of the arguments assumes the units of length and time of agent $A$  but the coordinates system $\Sigma$. {Note that we cannot have $proj_A=proj_B$ as this would imply $t=-r<0$, which would contradict the fact that $t\geq0$. Hence, it is enough to analyze the algorithm only when $proj_A\ne proj_B$. This is done considering the two cases below.}

\begin{figure}[!htbp]
\begin{center}
  \begin{minipage}[t]{0.40\linewidth}
    \centering
	\includegraphics[width=0.7\textwidth]{./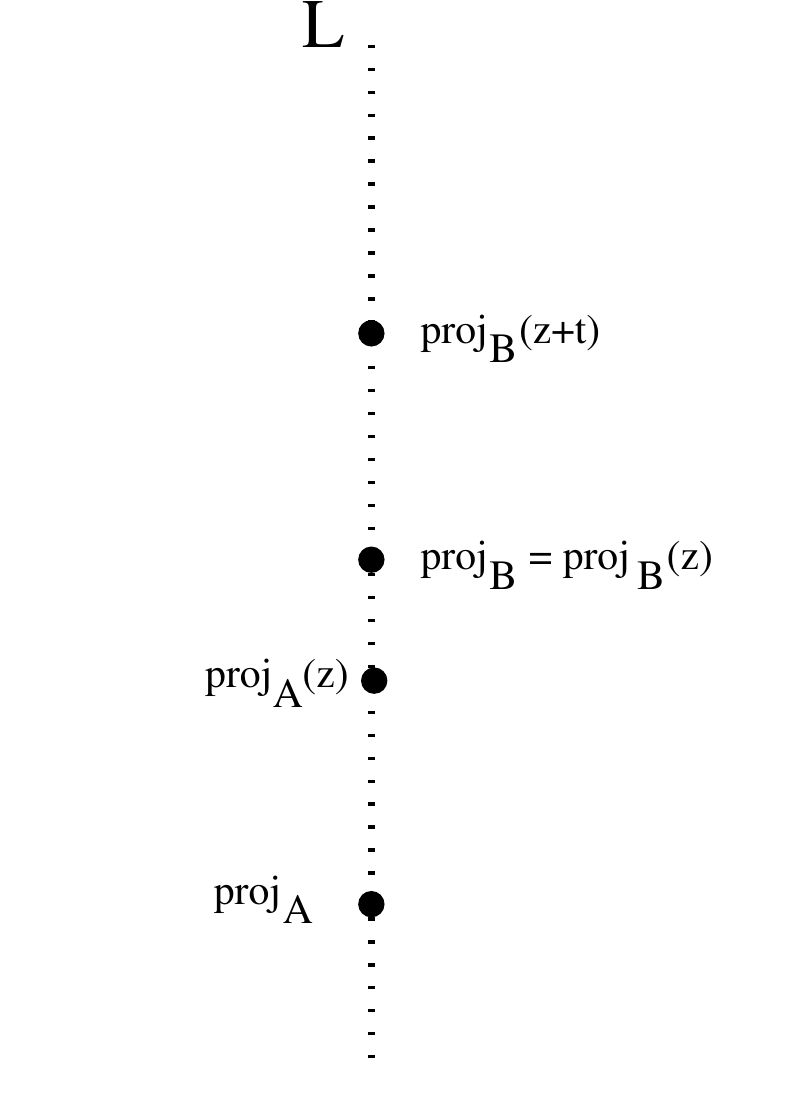}\\
    {\footnotesize ($a$)}
  \end{minipage}
  \begin{minipage}[t]{0.40\linewidth}
    \centering
	\includegraphics[width=0.7\textwidth]{./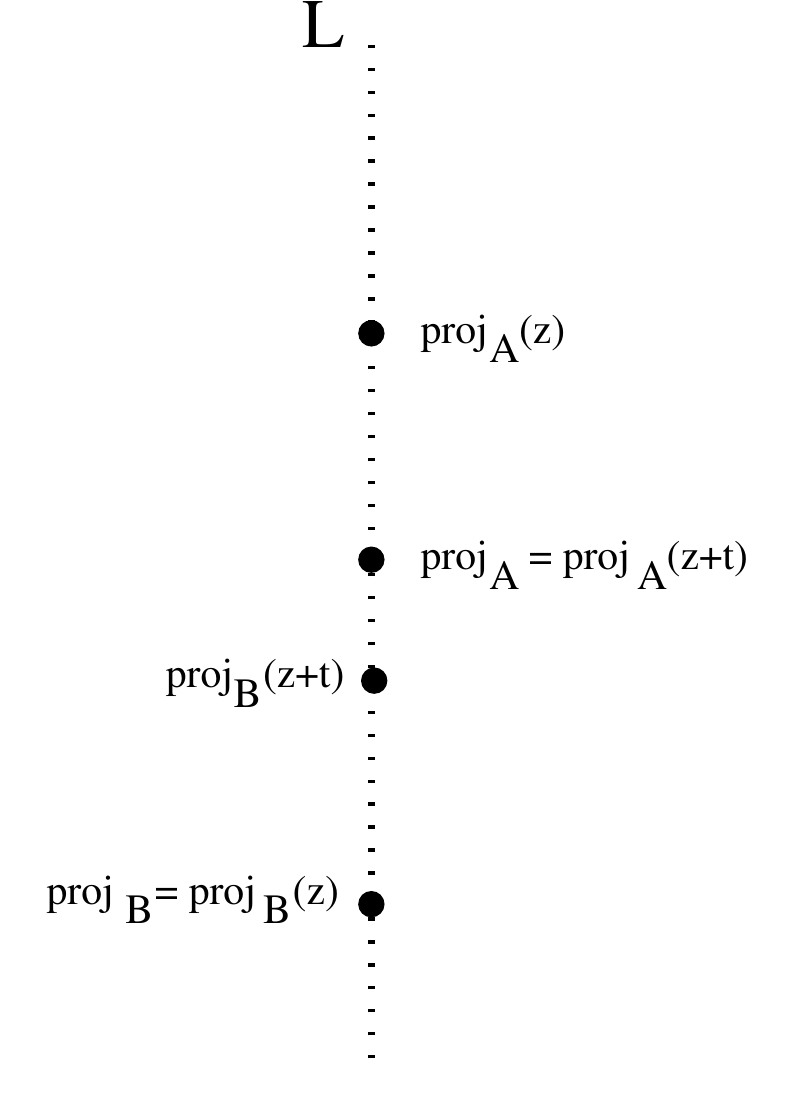}\\
    {\footnotesize ($b$)}
  \end{minipage}
\end{center}
 \caption{Examples of the two cases addressed in the proof of Lemma~\ref{lem:4}. The figures ($a$) and ($b$) correspond respectively to the cases $1$ and $2$. In each of the figures, we give the positions of the projections onto the canonical line $L$ at some relevant times. The North direction of $\Sigma$ is  bottom up in both figures.}
\label{fig5}
\end{figure}

		\begin{itemize}
			\item Case~1: $proj_B$ is North of $proj_A$.\\
			This case is depicted in Figure~\ref{fig5}($a$). In view of Corollary~\ref{cor:chi-dif}, we have $\dist{proj_A(z-t)}{proj_B(z)}=\dist{proj_A}{proj_B}$. In view of the description of the algorithm {and the fact that $t<\dist{proj_A}{proj_B}$}, agent $A$ moves towards $proj_B=proj_B(z)$ (without reaching it) during the time interval $[z-t, z]$. Hence, $\dist{proj_A(z)}{proj_B(z)}= \dist{proj_A}{proj_B}-t$, which is exactly $r$ because $t=\dist{proj_A}{proj_B}-r$. Thus, agents $A$ and $B$ are at distance $r$ of each other at time $z$ and, according to the algorithm, never move thereafter.
			\item {Case~2: $proj_B$ is South of $proj_A$}.\\
			This case is depicted in Figure~\ref{fig5}($b$). We have $\dist{proj_A(z+t)}{proj_B(z)}=\dist{proj_A}{proj_B}$ because $proj_A(z+t)$ (resp. $proj_B(z)$) is $proj_A$ (resp. $proj_B$). Moreover, agent $B$ moves towards $proj_A(z+t)$ (without reaching it) during the time interval $[z, z+t]$. Hence, $\dist{proj_A(z+t)}{proj_B(z+t)}=\dist{proj_A}{proj_B}-t$ which is, as in the first case, exactly equal to $r$. Hence, agents $A$ and $B$ are at distance $r$ of each other at time $z+t$ and, according to the algorithm, never move thereafter.
		\end{itemize}
		Hence, in all cases, rendezvous is achieved, which concludes the proof of the ``if'' implication, and thus the proof of the lemma.
	\end{proof}

Using the above lemmas, we can now prove Theorem~\ref{th:charac}.

	\begin{proof}[Theorem~\ref{th:charac}]
		The first statement of the theorem follows from Lemma~\ref{lem:1}. Hence, consider a synchronous instance. The ``if'' part of the second statement follows directly from Lemmas~\ref{lem:2},~\ref{lem:3}, and~\ref{lem:4}.

		It remains to prove the ``only if'' part of the second statement. First suppose that $\chi=-1$. In this case, the conjunction of the negations of 2a, 2b, and 2c implies $t<\dist{proj_A}{proj_B}-r$. Hence, the non-feasibility follows from Lemma~\ref{lem:4}. Next suppose that $\chi=1$. In view of the negation of 2a, we have $\phi=0$. Hence, the negation of 2b implies $t<\dist{(0,0)}{(x,y)}-r$, and the non-feasibility follows from Lemma~\ref{lem:3}.
	\end{proof}

\section{What Do We Miss} \label{sec:miss}

	It follows from Theorems~\ref{th:charac} and~\ref{th:algo} that the only feasible instances that are not handled by algorithm {\tt AlmostUniversalRV} are synchronous instances $\instance$ for which
	\begin{itemize}
	\item either $\phi=0$, $t=\dist{(0,0)}{(x,y)}-r$, and $\chi=1$
	\item or $t=\dist{proj_A}{proj_B}-r$ and $\chi=-1$.
	\end{itemize}
	Call these sets of instances $S_1$ and $S_2$ respectively.

	We now argue that these exception sets, while of course infinite, are small in a geometric sense, compared to the set of all feasible instances. First consider all feasible instances $\instance$ and
	partition them into two sets: $F_1$ are those with $\chi=1$ and $F_2$ are those with $\chi=-1$. Hence each of $F_1$ and $F_2$ can be formalized as a subset of $\mathbb{R}^7$. Since all non-synchronous instances are feasible, both for $\chi=1$ and for $\chi=-1$, and synchronous instances (i.e., those for which $\tau=v=1$) are contained in two copies of the subspace $\mathbb{R}^5$, it follows that each of  $F_1$ and $F_2$ contains a ball in $\mathbb{R}^7$
	of positive radius (intuitively these sets are ``fat'' in $\mathbb{R}^7$, i.e., not contained in any lower-dimension subspace of this space).

	Now consider the exception sets $S_1 \subset F_1$ and $S_2 \subset F_2$. $S_1$ is the set of synchronous instances in $F_1$ for which $\phi=0$ and $t=\dist{(0,0)}{(x,y)}-r$. Hence these are instances satisfying four (independent) linear equations and consequently their set is contained in a copy of a subspace $\mathbb{R}^3$ of $\mathbb{R}^7$.
	$S_2$ is the set of synchronous instances in $F_2$ for which $t=\dist{proj_A}{proj_B}-r$. Hence these are instances satisfying three (independent) linear equations and consequently their set is contained in a copy of a subspace $\mathbb{R}^4$ of $\mathbb{R}^7$. This shows that the ``fat'' set of feasible instances contains
	two exception sets that are very ``slim'' and Algorithm {\tt AlmostUniversalRV} is a single rendezvous algorithm handling all feasible instances except those two sets.  Another way of arguing that the exception sets are small compared to the set $F_1 \cup F_2$ of all feasible  instances is that the latter set has positive 7-dimensional Lebesgue measure (in fact it is easy to see that this measure is not only positive but infinite) while each of the exception sets  $S_1$ and $S_2$ has 7-dimensional Lebesgue measure 0.

	Moreover, it follows from \cite{PY2} that there is no single algorithm guaranteeing rendezvous for all instances from set $S_1$, and {it follows from Theorem~\ref{th:impo} below} that there is no single determinictic algorithm handling all instances from set $S_2$. Hence we miss little and cannot avoid it altogether.

	\begin{theorem}
		\label{th:impo}
		There does not exist an algorithm guaranteeing rendezvous for every synchronous instance $\instance$ such that $\chi=-1$ and $t=\dist{proj_A}{proj_B}-r$.
	\end{theorem}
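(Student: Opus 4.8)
The goal is to show that no single deterministic algorithm solves rendezvous for all instances in $S_2$, i.e. all synchronous instances with $\chi=-1$ and $t=\dist{proj_A}{proj_B}-r$. The strategy is a standard adversary/indistinguishability argument, exploiting the extreme tightness of the equation $t=\dist{proj_A}{proj_B}-r$. The point is that for instances on this boundary, the agents must cooperate perfectly: agent $A$ (the early agent) has a head start of exactly $t$, and the two projections onto the canonical line start at distance exactly $r+t$ apart, so by Corollary~\ref{cor:chi-dif} the only way rendezvous can ever occur is if, at the moment of meeting, the projection of each agent has moved ``fully toward'' the other along the canonical line — there is zero slack. I would fix an arbitrary candidate algorithm $\mathcal{A}$ and exhibit two instances $\mathcal{I}_1,\mathcal{I}_2\in S_2$ on which $\mathcal{A}$ cannot succeed on both.

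First I would set up the geometry. Consider instances in which the two agents have parallel $x$-axes but opposite chirality ($\phi=0$, $\chi=-1$), placed symmetrically about a horizontal canonical line $L$, so that $\vec u$ (the vector from $proj_A$ to $proj_B$) points, say, East. By Lemma~\ref{lem:chi-dif}, in such a synchronous instance the trajectory of $B$ up to time $z$ is the reflection across $L$, composed with the shift by $\vec u$, of $A$'s trajectory up to time $z-t$. Now run $\mathcal{A}$ on agent $A$ alone (as the early agent) starting from time $0$; let $f(s)=proj_A(s)$ be the signed horizontal coordinate of $A$'s projection at absolute time $s$, a continuous $1$-Lipschitz function with $f(0)=0$. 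Rendezvous at time $z$ forces (via Corollary~\ref{cor:chi-dif} and the fact that agents are at most $r$ apart in full distance, hence their projections are at most $r$ apart) that $\dist{proj_A(z)}{proj_B(z)}\le r$; but $\dist{proj_A(z-t)}{proj_B(z)}=\dist{proj_A}{proj_B}=r+t$, so $A$'s projection must travel distance $\ge t$ toward $proj_B$ during $[z-t,z]$ — i.e. $f(z)-f(z-t)\ge t$ with the correct sign, which, since $f$ is $1$-Lipschitz, forces $f$ to move at full speed $+1$ throughout $[z-t,z]$. In particular rendezvous can only happen at a time $z$ such that $A$'s projection is moving at full unit speed toward the East on the whole interval $[z-t,z]$, and then the meeting point is pinned down exactly.

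Next comes the choice of the two instances. Here is where the head start $t$ and the gap $r$ are ``free'' (the constraint only ties them together, and the position $(x,y)$ is likewise free along the line perpendicular to $L$ and along $L$ itself). I would pick two instances differing in the value of $t$ (equivalently in $\dist{proj_A}{proj_B}$, or one can instead vary the East–West offset of $B$ along $L$): say $\mathcal{I}_1$ with head start $t_1$ and $\mathcal{I}_2$ with head start $t_2\ne t_1$, adjusting $r$ and the placement of $B$ so both lie in $S_2$ and so that the resulting required ``meeting windows'' are incompatible. Concretely, the behaviour of agent $A$ is identical in both instances up to the meeting time, since $A$ cannot distinguish $\mathcal{I}_1$ from $\mathcal{I}_2$ — the only differences are in $B$'s attributes and in $r$, none of which $A$ observes until it is within distance $r$ of $B$. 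So the function $f$ is the same in both runs up to the first meeting. If $\mathcal{A}$ succeeds on $\mathcal{I}_1$ at time $z_1$, then $f$ is moving at full speed East on $[z_1-t_1,z_1]$ and the meeting point is a fixed position; if it also succeeds on $\mathcal{I}_2$ at time $z_2$, likewise on $[z_2-t_2,z_2]$. I would then choose the parameters so that the trajectory of $B$ in the other instance, reflected-and-shifted per Lemma~\ref{lem:chi-dif}, is \emph{not} within distance $r$ of $A$ at the time $\mathcal{A}$ would have $A$ stop — using that in $S_2$ there is no margin, any discrepancy of even the tiniest amount in the timing or geometry destroys the meeting. The cleanest way to make the contradiction airtight is probably to argue that the set of times $z$ at which $A$'s projection moves at full unit speed East over a length-$t_1$ window must be disjoint from the analogous set for a window of length $t_2\ne t_1$ once we also impose the matching position constraint, so $\mathcal{A}$ cannot satisfy both; alternatively, vary only the offset of $proj_B$ along $L$ keeping $t$ fixed, so that the ``meeting point'' demanded by $\mathcal{I}_1$ and by $\mathcal{I}_2$ are different points on $L$ reached by $A$ at the same time — impossible.

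\textbf{Main obstacle.} The delicate part is bookkeeping the reflection-and-shift from Lemma~\ref{lem:chi-dif} together with the ``no slack'' consequence of Corollary~\ref{cor:chi-dif}, and then choosing the two instances so that the constraints they impose on the \emph{single} function $f$ (the same in both, by indistinguishability of $A$) are provably contradictory — in particular ruling out the degenerate escape route where $A$ happens to sit on $L$ from the start and $B$ walks into it. One must also handle the boundary subtlety that the meeting is at distance \emph{exactly} $r$, not less: the inequality $\dist{proj_A(z)}{proj_B(z)}\le r$ combined with each agent being at distance $\le r$ (in fact potentially up to $r$) from $L$ must be reconciled via Pythagoras, which forces each agent to be \emph{on} $L$ at the meeting time and the projections to be exactly $r$ apart — removing any wiggle room and making the contradiction clean. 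I expect the adversary construction itself (picking the two instances) to be short once this rigidity is nailed down; the rigidity argument is the heart of the proof.
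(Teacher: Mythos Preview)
Your rigidity analysis is correct and matches the paper's core observation: if rendezvous occurs at time $z$ for an instance in $S_2$, then $\dist{proj_A(z-t)}{proj_A(z)}=t$, forcing $A$ to move at full speed parallel to the canonical line $L$ throughout $[z-t,z]$; together with the Pythagoras step you mention, $A$ (and $B$) must actually lie on $L$ at time $z$. This is exactly the content of the paper's Claim~\ref{th:impo:claim}.

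The gap is in the second half, where you attempt to derive a contradiction from \emph{two} instances with $\phi=0$ sharing the same canonical line $L$ and differing only in $t$ (or in the East--West offset of $proj_B$). These variations do not produce incompatible constraints. Concretely: fix $\phi=0$ and $y$ (hence $L$), and consider the algorithm ``go North $y/2$ to reach $L$, then execute {\tt LinearCowWalk} forever''. For every $x>0$ and every $t=x-r$ (and symmetrically for $x<0$), this single trajectory contains arbitrarily long East (resp.\ West) runs on $L$, so the rigidity condition is met at some time $z$ depending on the instance, and rendezvous is achieved. Thus any two instances you build with the same $L$ are simultaneously solved by one algorithm; your claim that the required ``meeting windows'' or ``meeting points'' are incompatible fails because the meeting times $z_1,z_2$ are allowed to differ, and the horizontal-distance computation always collapses to exactly $r$ regardless of $x$ once $t=x-r$.

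The paper closes the argument differently: it varies $\phi$. For each $\phi$, the canonical line has inclination $\phi/2$, so by the rigidity claim $A$'s (instance-independent) solo trajectory must contain a non-null segment of inclination $\phi/2$. But the trajectory is a \emph{countable} sequence of straight segments, while $\phi$ ranges over an uncountable set --- contradiction. An equally valid fix of your approach would be to keep $\phi=0$ but vary $y$: your rigidity step shows $A$ must be \emph{on} $L=\{y\text{-coordinate}=y/2\}$ while moving horizontally, hence $A$'s trajectory must contain a horizontal segment at height $y/2$ for every real $y$; again countably many horizontal segments cannot cover uncountably many heights. Either way, the contradiction is a cardinality argument over a one-parameter family of instances, not a two-instance indistinguishability argument.
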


	\begin{proof}
		Assume by contradiction that there exists an algorithm $\mathcal{A}$ guaranteeing rendezvous for every synchro\-nous instance such that $\chi=-1$ and $t=\dist{proj_A}{proj_B}-r$.

		Consider an instance ${\cal I}=\instance$ as above such that $t>0$ and suppose that agents execute algorithm $\mathcal{A}$. From Definition~\ref{def:cano}, the inclination of the canonical line of ${\cal I}$ is $\frac{\phi}{2}$. By inclination, we mean the smallest positive angle $\alpha$ such that the $x$-axis of agent A is parallel to the canonical line after rotating the system of A by angle $\alpha$. We have the following claim.

		\begin{claim}
			\label{th:impo:claim}
			Before rendezvous, the earlier agent $A$ has traversed at some point a non-nul segment of inclination $\frac{\phi}{2}$.
		\end{claim}

		\begin{proofclaim}
			When rendezvous occurs at a time $z$, we necessarily have $\dist{proj_A(z)}{proj_B(z)}\leq r$.

			Suppose that $z<t$. At time $z$, agent $B$ is still idle while agent $A$ has traversed a distance at most $z$. Hence, $\dist{proj_A(z)}{proj_B(z)}\geq\dist{proj_A}{proj_B} - z > \dist{proj_A}{proj_B} - t = r$ which is a contradiction. This proves $z-t\geq 0$.

			So, in view of Corollary~\ref{cor:chi-dif}, we have $\dist{proj_A(z-t)}{proj_B(z)}=\dist{proj_A}{proj_B}$. Consequently, we know that $\dist{proj_A(z-t)}{proj_A(z)}\geq \dist{proj_A}{proj_B}-r$ because $\dist{proj_A(z)}{proj_B(z)}\leq r$. Moreover, $\dist{proj_A(z-t)}{proj_A(z)}\leq t=\dist{proj_A}{proj_B}-r$ because the distance traveled by agent $A$ during the time interval $[z-t, z]$ is at most $t$. Hence, $\dist{proj_A(z-t)}{proj_A(z)}=t$. This can occur only if agent $A$ traverses a segment of length $t$ parallel to the canonical line during the time interval $[z-t, z]$. Since, the inclination of the canonical line is $\frac{\phi}{2}$, the claim is proved.
		\end{proofclaim}

		Now consider the solo execution of algorithm $\mathcal{A}$ by an agent and denote by $P$ the polygonal line forming the trajectory of the agent in its system of coordinates. This line is composed of a possibly infinite but countable sequence of segments $S_1, S_2, \dots$. In view of Claim~\ref{th:impo:claim}, we know that for every angle $\phi$, one of these segments must have inclination $\frac{\phi}{2}$. However,  the number of possible angles is uncountable. This is a contradiction.
	\end{proof}

\section{Conclusion} \label{sec:conclu}

	In our considerations, we assumed the same visibility radius $r$ for both agents, similarly as in \cite{CGKK,PY2}, in order to facilitate the reading. However, all our results remain true if the visibility radii are different. Assume that the visibility radius $r_1$ of one of the agents is not smaller than the visibility radius $r_2$ of the other one. Rendezvous is defined similarly as before: agents have to see each other (\ie be at distance at most $r_2$) and never move after this time.

	First note that all our negative results remain true with $r$ replaced by $r_1$: in all these results, agents will still never see each other. We now argue that all our positive results (\ie results concerning the algorithms) also remain correct, after replacing $r$ by $r_1$ in their validity conditions.

	Consider either Algorithm {\tt AlmostUniversalRV} or any algorithm working for a particular instance, executed as if both agents had the same radius $r_1$. We have the guarantee that at some point, both agents get at distance $r_1$. At this time, the agent with visibility radius $r_1$ sees the other agent and stops. Now the aim is for the other agent to get at distance $r_2$ from the agent with visibility radius $r_1$, in order to guarantee rendezvous. This will happen without any change in the case of algorithm {\tt AlmostUniversalRV}, as it contains in each phase a search procedure (\ie procedure {\tt PlanarCowWalk}), and can be done adding this search procedure as the last instruction in the other algorithms.

	A natural open problem is to generalize the rendezvous task to that of gathering many agents and to see how conditions on feasibility of instances change with respect to rendezvous of two agents. Which of the feasible instances can be gathered by a single algorithm? In a very restricted case, this problem has been solved in \cite{PY2}.

\bibliographystyle{plain}

\end{document}